\title{Nested Regular Path Queries in Description Logics}
\author{
 Meghyn Bienvenu\\
 Lab.\ de Recherche en Informatique\\
  CNRS \& Univ.\ Paris Sud, France
 \And
 Diego Calvanese \\
 KRDB Research Centre\\
 Free Univ.\ of Bozen-Bolzano, Italy
 \And
  Magdalena Ortiz \\ {\Large \bf Mantas \v{S}imkus}\\
    Institute of Information Systems\\
  Vienna Univ.\ of Technology, Austria
}
\begin{document}
\maketitle

\begin{abstract}
  Two-way regular path queries (2RPQs) have received increased attention
  recently due to their ability to relate pairs of objects by flexibly
  navigating graph-structured data.  They are present in property paths in
  SPARQL~1.1, the new standard RDF query language, and in the XML query
  language XPath.  In line with XPath, we consider the extension of 2RPQs with
  nesting, which allows one to require that objects along a path satisfy
  complex conditions, in turn expressed through (nested) 2RPQs.  We study the
  computational complexity of answering nested 2RPQs and conjunctions thereof
  (CN2RPQs) in the presence of domain knowledge expressed in description logics
  (DLs).  We establish tight complexity bounds in data and combined complexity
  for a variety of DLs, ranging from lightweight DLs (\dlc,
  \el) %to Horn-DLs (\el and \hornshiq) to
  up to highly expressive ones.  Interestingly,
  % by encoding away nesting in the case of expressive DLs, and by adopting a
  % novel rewriting-based technique for  lightweight and Horn-DLs,
  we are able to show that adding nesting to (C)2RPQs does not affect
  worst-case data complexity of query answering for any of the considered DLs.
  However, in the case of lightweight DLs, adding nesting to 2RPQs leads to a
  surprising jump in combined complexity, from \ptime-complete to
  \exptime-complete.
  % Instead, the combined complexity jumps already for N2RPQs in \dlc (and
  % \el), from \nlspace-complete (resp., \ptime-complete) to \exptime.
\end{abstract}

\section{Introduction}
\label{sec:introduction}

Both in knowledge representation and in databases, there has been great
interest recently in expressive mechanisms for querying data, while taking into
account complex domain knowledge \cite{CaDL08,GlimmLHS08}.  Description Logics
(DLs) \cite{BCMNP03}, which on the one hand underlie the W3C standard Web
Ontology Language (OWL), and on the other hand are able to capture at the
intensional level conceptual modeling formalisms like UML and ER, are
considered particularly well suited for representing a domain of interest
\cite{BoBr03}.  In DLs, instance data, stored in a so-called ABox, is
constituted by ground facts over unary and binary predicates (\emph{concepts}
and \emph{roles}, respectively), and hence resembles data stored in graph
databases \cite{ConsensM90,BLLW12}.  There is a crucial difference, however,
between answering queries over graph databases and over DL ABoxes. In the
former, the data is %data are
assumed to be complete, hence query answering amounts to the
standard database task of query evaluation.  In the latter, %by contrast, % instead,
it is typically assumed that the data is incomplete %are incomplete
and additional domain knowledge is provided by the DL ontology (or TBox).
%and satisfy the domain knowledge
%expressed in the DL TBox.
Hence query answering amounts to the %generally
more complex task of computing \emph{certain answers}, i.e., those answers that
are obtained
from all databases that both contain the explicit facts in the ABox and
satisfy the TBox constraints.  This difference has driven research in
different directions.

In databases, expressive query languages for querying graph-structured data
have been studied, which
are based on the requirement of relating objects by
flexibly navigating the data.  The main querying mechanism that has been
considered for this purpose is that of one-way and two-way regular path queries
(RPQs and 2RPQs) \cite{CrMW87,CDLV03b}, which are queries returning pairs of
objects related by a \emph{path} whose sequence of edge labels belongs to a
regular language over the (binary) database relations and their inverses.
Conjunctive 2RPQs (C2RPQs) \cite{CDLV00b} are a significant extension of such
queries that add to the navigational ability the possibility of expressing
arbitrary selections, projections, and joins over objects related by 2RPQs, in
line with conjunctive queries (CQs) over relational databases.
Two-way RPQs are
present in the property paths in SPARQL~1.1 \cite{W3Crec-SPARQL-1.1-query}, the new
standard RDF query language, and in the XML query language XPath
\cite{W3Crec-XPath-2.0}.  An additional %important
construct % feature
that is present in XPath %and that can be used to express sophisticated conditions along navigation paths
is the possibility of using \emph{existential test operators}, also known as
\emph{nesting}, to express sophisticated conditions along navigation paths.
% an existential test $\ptest{E}$ in a 2RPQ $E'$ requires that a path
% conforming to the (nested) 2RPQ $E$ originates from those objects that are
% encountered along the main navigation path for $E'$, matching positions where
% $\ptest{E}$ appears.
When an existential test $\ptest{E}$ is used in a 2RPQ $E'$, there will be
objects along the main navigation path for $E'$ that match positions of $E'$
where $\ptest{E}$ appears; such objects are required to be the origin of a path
conforming to the (nested) 2RPQ~$E$.  It is important to notice that
existential tests in general cannot be captured even by C2RPQs, e.g., when tests
appear within a transitive closure of an RPQ.  Hence, adding nesting effectively
increases the expressive power of 2RPQs and of C2RPQs.

In the DL community,
query answering has been investigated extensively for a wide range of DLs,
with much of the work devoted to CQs. %particular attention paid to CQs.
%
% the problem of answering queries % query answering
%over knowledge bases
%expressed in a wide range of DLs
%has been investigated extensively. %, with CQs being the most popular querying foramlis.
With regards to the complexity of query answering,
attention has been paid on the one hand to
\emph{combined complexity}, i.e., the complexity measured considering as input
both the query and the DL knowledge base (constituted by TBox and ABox), and on
the other hand to \emph{data complexity}, i.e., when only the ABox is
considered as input. %Particular attention has been devoted to CQs.
For %these queries and
expressive DLs that extend \alc,
CQ answering is typically \conp-complete in data-complexity
\cite{OrCE08}, and
\twoexptime-complete in combined complexity
\cite{GlimmLHS08,Lutz08,EiterLOS09}.
% CQ answering is in
% \twoexptime in combined complexity \cite{CaDL08,GlimmLHS08}, and is
% \twoexptime-hard when either inverse roles \cite{Lutz08} or transitivity
% \cite{EiterLOS09} are present, while it is \conp-complete in data-complexity
% \cite{OrCE08}.
For lightweight DLs, instead, CQ answering is
%  FO-rewritable for
% logics of the \dlc family (shown via rewriting-based techniques)
% \cite{CDLLR07}, hence
in \acz in data complexity  for
% the
\dlc %family
\cite{CDLLR07}, and \ptime-complete
for %the
\el %family
\cite{kris-Lutz-DL-07}.
For both logics, the combined complexity is dominated by the \np-completeness of CQ evaluation over
plain relational databases.
%%
%%In the context of DLs,
There has also been some work on
(2)RPQs and C(2)RPQs. % have been studied.
For the very
expressive DLs
$\mathcal{ZIQ}$,  $\mathcal{ZOQ}$, and $\mathcal{ZOI}$,
%%  of the $\mathcal{Z}$ family,
where regular expressions over roles are present also in the DL,
a \twoexptime upper bound has been shown  via
techniques based on alternating automata over infinite trees
%% for those variants of $\mathcal{Z}$ where at most two among inverse roles,
%% number restrictions, and nominals are present in the logic
\cite{calv-etal-09}.
For the Horn fragments of $\mathcal{SHOIQ}$ and  $\mathcal{SROIQ}$,
\ptime-completeness in data complexity and \exptime/2\exptime-completeness in
combined complexity are known \cite{OrtizRS11}.
For lightweight
DLs, tight bounds for answering 2RPQs and C2RPQs have only very recently been established
 by \citeauthor{BienvenuOS13}~\shortcite{BienvenuOS13}: for
(C)(2)RPQs, data complexity is \nlspace-complete in \dlc and \dlh, and
\ptime-complete in \el and \elh.  For all of these logics, combined complexity is
\ptime-complete for (2)RPQs and \pspace-complete for C(2)RPQs.

Motivated by the expressive power of nesting in XPath and SPARQL, in this paper
we significantly advance these latter lines of research on query answering in
DLs, and study the impact of adding nesting to 2RPQs and C2RPQs.
We establish tight complexity %\footnote{With one exception, see Table~\ref{tab:results}.}
bounds in data and combined complexity for a variety of DLs, ranging from
lightweight DLs of the \dlc and \el families
% to Horn-DLs (\el and \hornshiq) to
up to the highly expressive ones of the \sh and $\mathcal{Z}$ families.
Our results are summarized in Table~\ref{tab:results}. For DLs containing at least
\eli, we are able to encode away nesting, thus showing that the worst-case
complexity of query answering is not affected by this construct.  Instead, for
lightweight DLs (starting already from \dlc!), we show that adding
nesting to 2RPQs leads to a surprising jump in combined complexity, from
\ptime-complete to \exptime-complete. %%  Instead, by
We then develop a sophisticated
rewriting-based technique that builds on (but significantly extends) the one
proposed by \citeauthor{BienvenuOS13}~\shortcite{BienvenuOS13}, which we
%% we are able to
use to prove that the problem remains in \nlspace for \dlc.
We thus show that  adding nesting to (C)2RPQs does
not affect worst-case data complexity of query answering
for lightweight DLs.

%%%%%% UNCOMMENT FOR SHORT VERSION
%% For lack of space, some proofs have been relegated to the appendix of the
%% long version \cite{long-version-nested}.
%%%%%% UNCOMMENT FOR LONG VERSION
Some proofs have been relegated to the appendix.~%\nocite{2014arXiv1402.7122B}
%%%%%%

\begin{table*}[t!]
\renewcommand{\arraystretch}{1.1}
\renewcommand\tabcolsep{4pt}
\centering
%\small
\begin{tabular}{rllllll}
& \multicolumn{2}{l}{2RPQ} &\multicolumn{2}{l}{C2RPQ} & \multicolumn{2}{l}{N2RPQ / CN2RPQ}  \\[.15cm]
& data \phantom{fd}& combined \phantom{k}& data \phantom{fd}&combined\phantom{k}& data \phantom{fd}&combined\phantom{k} \\[.1cm]
  \cmidrule[.8pt]{2-7}
Graph DBs \& RDFS
&  \nlspace-c &  \nlspace-c & \nlspace-c & \np-c & \nlspace-c & \textbf{\ptime-c}  / \np-c \\  \cmidrule[.2pt]{2-7}
  \dlc & \nlspace-c & \ptime-c &
\nlspace-c
& \pspace-c &  \textbf{\nlspace-c} &  \textbf{\exptime-c} \\   \cmidrule[.2pt]{2-7}
 Horn DLs (e.g., \el, \hornshiq)& \ptime-c & \ptime-c & \ptime-c & \pspace-c & \textbf{\ptime-c} & \textbf{\exptime-c} \\   \cmidrule[.2pt]{2-7}
 Expressive DLs (e.g., \alc, \shiq)& \conp-h & \exptime-c\phantom{fd} & \conp-h & \twoexptime-c\phantom{fd} & \conp-h &
 \textbf{\exptime-c} / \textbf{\twoexptime-c}\phantom{fd} \\   \cmidrule[.8pt]{2-7}
    %\bottomrule
\end{tabular}
%\vspace*{-4mm}
% \caption{Complexity of Boolean query entailment (new results in bold). $^{\dagger}$ P-hardness for RPQs applies only to \dlh.}
\caption{Complexity of query answering. The `c' indicates
  completeness, the  `h' hardness. New results are marked in bold. For
  existing results, refer to
  \protect\cite{BienvenuOS13,PerezAG10,Baeza13,calv-etal-09,OrtizRS11} and
  references therein. \vspace*{-.3cm}
}
\label{tab:results}
\end{table*}

\section{Preliminaries}
\label{sec:preliminaries}

% Notation:
% \begin{itemize}
% \item concept names: $A$ (we don't call them atomic)
% \item complex concepts: $C$
% \item individuals: $a$, $b$ (we don't call them constants)
% \item terms: defined in sect 3
% \item test roles $A?$, $\{a\}?$: defined in sect 3
% \item domain elements / objects: $o$
% \end{itemize}

% Logics:
% \begin{itemize}
% \item Syntax of $\elhibot$ and fragments \dlc, \dlh, \el,
%   $\mathcal{ELI}$.
% \item Normal form for \elhibot\ KBs ? (want small number of "tail concepts" for can mods)
% \item Semantics and canonical models.
% \end{itemize}

We briefly recall the syntax and semantics of description logics (DLs).
%the syntax of \elhibot\ and relevant sublogics.
As usual, we assume countably infinite, mutually disjoint sets \cn, \rn, and \inds\
of \emph{concept names}, \emph{role names}, and \emph{individuals}.
We typically use $A$ for concept names, $p$ for role names, and
$a,b$ for individuals.
 An \emph{inverse role} takes the form $p^-$ where $p\in \rn$.
%%A \emph{role} $r$ is
We let $\rni\,{=}\,\rn\cup\{p^-\,{\mid}\,p\,{\in}\,\rn\}$ and denote by $r$ elements of $\rni$.

A DL knowledge base (KB) consists of a \emph{TBox} and an \emph{ABox}, whose forms
depend on the DL in question.
In the DL \elhibot, a TBox is defined as a set of
\emph{(positive) role inclusions}
of the form $r \sqsubseteq r'$ and
\emph{negative role inclusions}
of the form $r \sqcap r' \sqsubseteq \bot$
 with $r,r' \in \rni$, and
 \emph{concept inclusions} of the form
$C \sqsubseteq D$, where $C$ and $D$ are \emph{complex concepts}
formed according to the following syntax:\footnote{We slightly generalize the
usual \elhibot by allowing for negative role inclusions.}
  \[
    C ::= \top ~\mid~ \bot ~\mid~ A ~\mid~ \exists r.  C  ~\mid~ C\sqcap C
  \]
with $A \in \cn$ and $r \in \rni$.

Some of our results refer specifically to the \emph{lightweight DLs}
%%sublogics of \elhibot
that we define next.
\elhi is the fragment of \elhibot that has no $\bot$.
\elh and $\mathcal{ELI}$ are obtained by additionally disallowing inverse
roles and role inclusions, respectively.
\dlh is also a fragment of \elhibot, in which
%%  role inclusions are disallowed, and
concept inclusions can only take the forms $B_1 \ISA B_2$ and $B_1
\AND B_2 \ISA \bot$, for $B_i$ a concept name or concept of the form
$\exists{r}.\top$ with $r\in\rni$.
\dlc is the fragment of \dlh that disallows (positive and negative) role
inclusions.
%% \dlh extends \dlc with role inclusions $r_1 \ISA r_2$, and with
%% \emph{negative role inclusions} $r_1 \AND r_2 \ISA \bot$ (not present in
%% \elhibot).

An \emph{ABox} is a set of assertions of the form $C(a)$ or $r(a,b)$,
where $C$ is a complex concept, $r \in \rni$, and $a,b \in \inds$.
We use $\ainds(\Amc)$ to refer to the set of individuals in $\Amc$.

%\medskip\noindent\textbf{Sublogics} % of \elhibot}%{Lightweight DLs}
%%% cannot say that ELI is lightweight....
%Some of our results refer specifically to
%sublogics of \elhibot that we define next.
%\elhi is the fragment of \elhibot that has no $\bot$.
%\elh and $\mathcal{ELI}$ are obtained by additionally disallowing inverse
%roles and role inclusions, respectively.
%In \dlh, concept inclusions can only take the forms $C_1 \ISA C_2$ and $C_1
%\AND C_2 \ISA \bot$, for $C_i$ a concept name or concept of the form
%$\exists{r}.\top$ with $r\in\rni$. In \dlc, role inclusions are also
%disallowed.

\paragraph{Semantics.}
The semantics of DL KBs
%is defined in terms of
is based upon \emph{interpretations}, which take the form
$\Imc = (\Delta^{\Imc}, \cdot^{\Imc})$, where $\Delta^{\Imc}$
is a non-empty set and $\cdot^{\Imc}$ %is a function mapping
maps each $a \in \inds$ to $a^{\Imc} \in \Delta^{\Imc}$,
each $A \in \cn$ to $A^{\Imc} \subseteq \Delta^{\Imc}$, and
each $p \in \rn$ to $p^{\Imc} \subseteq \Delta^{\Imc} \times
\Delta^{\Imc}$.\footnote{Note that we do not make the \emph{unique name
  assumption} (UNA), but all of our results continue to hold if the UNA is
 adopted.}
The function $\cdot^{\Imc}$ can be straightforwardly extended to complex
concepts and roles.
In the case of \elhibot, this is done as follows:
$\top^\Imc=\Delta^\Imc$, $\bot^\Imc= \emptyset$,
$(p^{-})^{\Imc}=\{(c,d) \mid (d,c) \in p^{\Imc}\}$,
$(\exists r. C)^{\Imc}=\{c \mid \exists d:\, (c,d) \in r^{\Imc}, d \in C^{\Imc}\}$,
and $(C \sqcap D)^\Imc = C^\Imc \cap D^\Imc$.
An interpretation  $\Imc$ satisfies an inclusion
$G \sqsubseteq H$ if $G^{\Imc} \subseteq H^{\Imc}$,
and it satisfies an assertion
$C(a)$ (resp., $r(a,b)$) if
$a^{\Imc} \in A^{\Imc}$ (resp., $(a^{\Imc},b^{\Imc})\in r^{\Imc}$).
%We write $\Imc \models \alpha$ if $\Imc$ satisfies inclusion/assertion
%$\alpha$.
A \emph{model} of a KB $(\Tmc, \Amc )$ is an interpretation $\Imc$ which
satisfies all inclusions in $\Tmc$ and assertions in~\Amc.

\paragraph{Complexity.}
In addition to P and (co)NP,
our results refer to the complexity classes
\nlspace (non-deterministic logarithmic space),
%(N)P ((non-deterministic) polynomial time),
%\conp (problems whose complement is in \np),
\pspace (polynomial space), and (2)\exptime ((double) exponential time),
cf.\ \cite{Papa93}.

\section{Nested Regular Path Queries}

We now introduce our query languages.  In RPQs, nested RPQs and their
extensions, \emph{atoms} are given by (nested) regular expressions whose
symbols are \emph{roles}. The set
$\sigmaroles$ of roles contains %%the symbols of
 $\rni$, and all \emph{test
  roles} %%. The latter take
of the forms $\ctest{\{a\}}$
%%(sometimes called \emph{identity roles})
and $\ctest{A}$ with %%, where
$a\in \inds$ and $A\in \cn$. They are interpreted as $\INT{\ctest{\{a\}}}=
(\Int{a},\Int{a})$, and $\INT{\ctest{A}}=\{(o,o) \mid
o\in\Int{A}\}$.

\begin{definition}\label{def:nrpqs-syntax}
   A \emph{nested regular expression}
  (NRE), denoted by $E$, is constructed according to the
  following syntax:
  \[
  E ::= \sigma ~\mid~ E\cdot E ~\mid~ E\cup E ~\mid~ E^* ~\mid~
  \ptest{E}
  \]
  where $\sigma\in\sigmaroles$.

  We assume a countably infinite set \vn  of \emph{variables}
  (disjoint from \cn, \rn, and \inds). Each $t\in \vn\cup \inds$ is  a
  \emph{term}. %  $t$ is either
  % an individual $a \in \inds$ or a variable $x \in \vn$.
 % We use
 %  $\vec{t}$ to denote a vector of terms.  When needed, we identify a
 %  vector $\vec{t}$ with the set of terms appearing in it. \meghyn{where needed?}
 An \emph{atom} is either a \emph{concept atom} of the form $A(t)$,
  with $A\in \cn$ and $t$ a term, or a \emph{role atom}
  of the form $E(t,t')$, with $E$ an NRE and $t,t'$  two
  (possibly equal) terms.

%  An \emph{atom} is either a \emph{concept atom} of the form $A(t)$,
%  where $A$ is a concept name and $t$ is a term, or a \emph{role atom}
%  of the form $E(t,t')$, where $E$ is an NRE,  and $t,t'$ are two
%  (possibly equal) terms.

  A \emph{nested two-way regular path query} (\nq) $q(x,y)$ is an atom
  of the form $E(x,y)$, where $E$ is an NRE
  and $x,y$ are two distinct variables.  A \emph{conjunctive \nq}
  (\cnq) $q(\vec{x})$ with \emph{answer variables} $\vec{x}$ has the
  form $\exists\vec{y}.\varphi$, where $\varphi$ is a conjunction of
  atoms whose variables are among $\vec{x}\cup\vec{y}$.

  A (plain) regular expression (RE) is an NRE that does not have
  subexpressions of the form $\ptest{E}$.  \emph{Two-way regular path
    queries} (2RPQs) and \emph{conjunctive 2RPQs}
%% two-way regular path queries}
(C2RPQs) are defined analogously to \nqs and \cnqs but
  allowing only plain REs in atoms.
\end{definition}

Given an interpretation $\Imc$, the semantics of an NRE $E$ is defined
by induction on its structure:
\begin{align*}
  \INT{E_1\cdot E_2} &= \Int{E_1}\circ\Int{E_2},\\
  \INT{E_1\cup E_2} &= \Int{E_1}\cup\Int{E_2},\\
  \INT{E_1^*} &= (\Int{E_1})^*,\\
  \Int{\ptest{E}} &= \{(o,o) \mid
  \begin{array}[t]{@{}l}
    \text{there is } o'\in\dom \text{ s.t. } (o,o')\in\Int{E}\}.
  \end{array}
\end{align*}

%Assume a C2NRPQ
%$q(\vec{x})=\exists\vec{y}.\varphi$.
A \emph{match} for %$q$
a C2NRPQ $q(\vec{x})=\exists\vec{y}.\varphi$
in an interpretation $\I$ is a
mapping from the terms in $\varphi$ to $\Delta^{\Imc}$ such that
\begin{inparaenum}[(i)]
\item $\pi(a)=\Int{a}$ for every individual $a$ of $\varphi$,
\item $\pi(x)\in \Int{A} $ for every concept atom $A(x)$ of $\varphi$,
  and
\item $(\pi(x),\pi(y))\in \Int{E} $ for every role atom $E(x,y)$ of $\varphi$.
\end{inparaenum}
Let  $\ans(q,\I)= \{\pi(\vec{x})\mid \pi \mbox{ is a match for }q
 \mbox{ in }\I\}$. An individual tuple $\vec{a}$ with the same arity as $\vec{x}$ is
called a \emph{certain answer} to $q$
over a KB $\tuple{\Tmc,\Amc}$ if  $\Int{(\vec{a})}\in \ans(q,\I)$ for every model $\I$ of
$\tuple{\Tmc,\Amc}$. We use $\ans(q,\tuple{\Tmc,\Amc})$ to denote the
set of all certain answers to $q$ over $\tuple{\Tmc,\Amc}$.
In what follows, by \emph{query answering}, we will mean the problem
of deciding whether $\vec{a} \in \ans(q,\tuple{\Tmc,\Amc})$.

\begin{example}
  We consider an ABox of advisor relationships of PhD
  holders%
  % in mathematics and related disciplines.
  \footnote{Our examples are inspired by the \emph{Mathematics
      Genealogy Project
      (\url{http://genealogy.math.ndsu.nodak.edu/})}.}. We assume an
  $\pf{advisor}$ relation between nodes representing academics.
  % mathematicians.
  There are also nodes for theses, universities, research topics, and
  countries, related in the natural way via roles $\pf{wrote}$,
  $\pf{subm}$(itted), $\pf{topic}$, and $\pf{loc}$(ation).
  We give two queries over this ABox.
  \[
  \small q_1(x,y) =
  % \ptest{\pf{wrote}\conc\pf{topic}\conc\ctest{\pf{DBs}}} \conc{}\\
  % \quad
  (\pf{advisor} \conc
  \ptest{\pf{wrote}\conc\pf{topic}\conc\ctest{\pf{Physics}}})^*~(x,y)
  \]
  Query $q_1$ is an \nq that retrieves pairs of a person $x$
  % whose thesis was on a Database topic,
  and an academic ancestor $y$ of $x$
  % whose thesis was in Physics, and
  such that all people on the path from $x$ to $y$ (including $y$
  itself) wrote a thesis in Physics.
  \[
  \small
  \begin{array}[t]{@{}l}
    q_2(x,y,z) =
    \pf{advisor}^-(x,z), ~~
    \pf{advisor}^*(x,w),\\
    \!\!\quad \pf{advisor}^-\conc\ptest{\pf{wrote}
      \conc \ptest{\pf{topic}\conc\ctest{\textit{DBs}}}
      \conc \pf{subm}
      \conc \pf{loc} \conc \ctest{\{\textit{usa}\}}}(y,z),\\
    \!\!\quad \big(\pf{advisor} \conc
    \ptest{\pf{wrote}\conc \ptest{ \pf{topic}
        \conc\ctest{\pf{Logic}} } \conc  \pf{subm}
      \conc \pf{loc} \conc{} \ctest{\textit{EU}}}  \big)^*~(y,w)
  \end{array}
  \]
  Query $q_2$ is a \cnq that looks for triples of individuals $x$,
  $y$, $z$ such that $x$ and $y$ have both supervised $z$, who wrote a
  thesis on Databases and who submitted this thesis to a university in
  the USA.  Moreover, $x$ and $y$ have a common ancestor $w$, and all
  people on the path from $x$ to $w$, including $w$, must have written
  a thesis in Logic and must have submitted this thesis to a
  university in an EU country.
  \qedfull
\end{example}

It will often be more convenient to deal with an
automata-based representation of (C)\nqs, % (and \cnqs),
which we provide
next.

\begin{definition}\label{def:NFAs}
  A \emph{nested NFA} (\nnfa) has the form
  $(\Abf,s_0,F_0)$ where $\Abf$ is an indexed set
  $\{\alpha_1,\ldots,\alpha_n\}$, where each $\alpha_l\in\Abf$ is
  an automaton of the form $(S,s,\delta,F)$, where $S$ is a set
  of states, $s\in S$ is the initial state, $F\subseteq S$ is the set
  of final states, and
  \[
  \delta \subseteq S \times (\sigmaroles \cup \{
  \begin{array}[t]{@{}l}
    \ptest{j_1,\ldots,j_k} \mid\\
    ~ l<j_i\leq n, \text{ for } i\in\{1,\ldots,k\} \}
    ) \times S
  \end{array}
  \]
  % $ \delta \subseteq S \times (\Sigma \cup \{\ptest{j_1,\ldots,j_k}
  % \mid l<j_1,\ldots,l<j_k \}
  %    % \cup \{\ltest{j} \mid i<j\}
  % ) \times S $
  We assume that the sets of states of the automata in $\Abf$ are
  pairwise disjoint, and we require that $\{s_0\}\cup F_0$ are states
  of a single automaton in $\Abf$.  If in each transition
  $(s,\ptest{j_1,\ldots,j_k},s')$ of each automaton in $\Abf$ we have
  $k=1$, then the \nnfa is called \emph{reduced}.
\end{definition}
%In the following,
When convenient notationally, we will %also
denote an \nnfa $(\Abf,s_0,F_0)$ by $\Abf_{s_0,F_0}$. Moreover, we will use
$S_i, \delta_i,$ and $F_i$ to refer to the %set of
states, transition
relation, and %set of
 final states of~$\alpha_i$.
% \added[DC]{Moreover, for each automaton $\alpha_l\in\Abf$ we will
% assume that
% $\alpha_l=(\Sigma,S_l,s_l,\delta_l,F_l)$}.\notedc{Should
% we keep this?}

\begin{definition}\label{def:nnfa-semantics}
  Given an interpretation $\Imc$, we define $\Int{\Abf_{s_0,F_0}}$
  inductively as follows.  Let $\alpha_l$ be the (unique) automaton
  in $\Abf$ such that $\{s_0\}\cup F_0\subseteq S_l$.
  % , and let $\alpha'=(\Sigma,S,s_0,\delta,F_0)$.
  Then $(o,o')\in\Int{\Abf_{s_0,F_0}}$ if there is a sequence
  $s_0o_0s_1\cdots o_{k-1}s_ko_k$, for $k\geq 0$, such that $o_0=o$,
  $o_k=o'$, $s_k\in F_0$, and for $i\in\{1,\ldots,k\}$ there is a
  transition $(s_{i-1},\sigma_i,s_i)\in\delta_l$ such that either
  \begin{itemize}[--]
  \item $\sigma_i\in\sigmaroles$ and $(o_{i-1},o_i)\in \Int{\sigma_i}$, or
    % \item \notems{OLD } $\sigma_i=\ptest{j}$, with
    %   $\alpha_j=(\Sigma,S',s',\delta',F')$, $o_{i-1}=o_i$, and there
    %   is $o''\in\dom$ such that $(o_i,o'')\in\Int{\Abf_{s',F'}}$.
  \item $\sigma_i=\ptest{j_1,\ldots,j_k}$ such that, for every
    $m\in\{1,\ldots,k\}$, there exists $o_m\in\dom$ with
    $(o_i,o_m)\in\Int{\Abf_{s',F'}}$, where $s'$ and $F'$ are the
    initial and final states of $\alpha_{j_m}$
    respectively. % is the initial state of
    % $\alpha_{j_m}$ and $F'$ is the set of final states in
    % $\alpha_{j_m}$.
  \end{itemize}
  % When $(o,o')\in\Int{\Abf_{s_0,F_0}}$, we say that $o'$ is an
  % \emph{$\Abf_{s_0,F_0}$-successor} of $o$.
\end{definition}

% \notemo{we are not modifying the initial/final states anymore. Does
% it make sense to have the $\Abf_{s_0,F_0}$ notation, or should we
% switch to something like $\Abf,j$?}

Note that an \nnfa $\Abf_{s_0,F_0}$ such that there are no transitions
of the form $(s,\ptest{j_1,\ldots,j_k},s')$ in the unique
$\alpha_l$ with $\{s_0\} \cup F_0 \subseteq S_l$
%% can also be written $\tuple{\{\alpha_l\},s_0,F_0}$ and it
is equivalent to a standard NFA.
% NFA $\tuple{\sigmaroles,S_l,s_0,\delta_l,F_0}$.  For such
% NFAs,\notems{should this be for ``such n-NFAs''?} we
% may use $\Int{\alpha_l}$ in place of
% $\INT{\tuple{\{\alpha_l\},s_0,F_0}}$.

%%Note that
For every NRE $E$ one can construct in polynomial time an
\nnfa $\Abf_{s_0,F_0}$ such that $\Int{E}=\Int{\Abf_{s_0,F_0}}$ for
every interpretation $\Imc$. This is an almost immediate consequence
of the correspondence between regular expressions and finite state
automata. Moreover, any \nnfa can be transformed into an equivalent
reduced \nnfa by introducing linearly many additional states. In the
following, unless stated otherwise, we assume all \nnfa{}s are reduced.

\section{Upper Bounds via Reductions}
\label{sec:reductions}

In this section, we derive some upper bounds on the complexity of
answering (C)N2RPQs in different DLs, by means of reductions to other
problems. For simplicity, we assume in the rest of this section that query atoms
do not employ test roles of the form
%% identity roles
$\{a\}?$. This
is without loss of generality, since each symbol $\ctest{\{a\}}$ can
be replaced by $\ctest{A_a}$ for a fresh concept name $A_a$, by adding
the ABox assertion $A_a(a)$.

We start by showing that answering \cnqs can be polynomially reduced
to answering non-nested C2RPQs using TBox axioms that employ inverses,
conjunction on the left, and qualified existential restrictions.

\begin{proposition}\label{prop:eli-reduction}
  For each \cnq $q$, one can compute in polynomial time an
  $\mathcal{ELI}$ TBox $\Tmc'$ and C2RPQ $q'$ \mbox{such that}
  $\ans(q,\tuple{\Tmc,\Amc})\,{=}\,\ans(q',\tuple{\Tmc \cup
    \Tmc',\Amc})$ for every KB $\tuple{\Tmc,\Amc}$.
\end{proposition}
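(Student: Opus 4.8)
The plan is to eliminate nesting from the query by pushing each existential test into a fresh concept name whose meaning is axiomatized by the TBox. The key observation is that a nested test $\ptest{E}$ asserts, at a point $o$ on the main path, that $o$ is the origin of some $E$-path; this is exactly the semantics of the $\mathcal{ELI}$ concept "there exists an $E$-successor," provided $E$ itself can be captured. Since $E$ is a nested expression, I proceed \emph{bottom-up} over the nesting structure, handling the innermost tests first so that by the time I process an outer test, all tests appearing inside it have already been replaced by plain role atoms (and thus the expression governing the outer test is non-nested).

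Concretely, I would work with the reduced \nnfa representation guaranteed by the discussion after Definition~\ref{def:nnfa-semantics}, so that each test transition carries a single index $j$, i.e.\ has the form $(s,\ptest{j},s')$. For each such test transition, pointing to an automaton $\alpha_j$ whose language (after inner tests have been eliminated) is a plain regular expression $E_j$, I introduce a fresh concept name $A_j$ and intend it to hold exactly at the origins of $E_j$-paths. To enforce this, I add to $\Tmc'$ axioms capturing $\exists E_j.\top \sqsubseteq A_j$, but since $E_j$ is a regular expression and not directly an $\mathcal{ELI}$ concept, I would instead simulate the automaton $\alpha_j$ backwards using auxiliary concept names $B_s$ for each state $s$ of $\alpha_j$: set $B_s \sqsubseteq A_j$ for the initial state $s$, add $\top \sqsubseteq B_s$ for each final state $s$, and for each transition $(s, r, s')$ with $r \in \rni$ add the inclusion $\exists r. B_{s'} \sqsubseteq B_s$; concept-test transitions $(s,\ctest{A},s')$ become $A \sqcap B_{s'} \sqsubseteq B_s$. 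This is precisely where inverses, left-conjunction, and qualified existentials are all needed, matching the claimed target logic $\mathcal{ELI}$. In the query, I then replace the test transition $(s,\ptest{j},s')$ by a concept-test transition $(s,\ctest{A_j},s')$, which is a legal symbol in a non-nested 2RPQ.

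The main correctness argument is to show $\ans(q,\tuple{\Tmc,\Amc}) = \ans(q',\tuple{\Tmc \cup \Tmc',\Amc})$ for every KB. For the direction showing the answer sets coincide, the subtle point is that $\Tmc'$ defines $A_j$ only as a \emph{superset} of the origins of $E_j$-paths (the axioms force $A_j$ to hold wherever such a path exists, but a model could in principle make $A_j$ hold elsewhere too). I would handle this by taking certain answers over \emph{all} models: a tuple is a certain answer to $q'$ iff it matches in every model of $\Tmc \cup \Tmc'$, and in particular in those models where $A_j$ is interpreted exactly as $(\exists E_j.\top)$. The backward simulation guarantees that in any model, $A_j^{\Imc} \supseteq \{o \mid o \text{ originates an } E_j\text{-path}\}$, so a $q'$-match never spuriously uses $A_j$ at a point failing the test in \emph{every} model; conversely any original $q$-match extends to a $q'$-match because the test-points satisfy $A_j$ by the superset inclusion. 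Because $A_j$ is a fresh name not constraining the original signature, every model of $\tuple{\Tmc,\Amc}$ extends to one of $\tuple{\Tmc \cup \Tmc',\Amc}$ and vice versa by restriction, which aligns the two certain-answer sets.

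The hard part will be getting the bookkeeping of the bottom-up induction exactly right, and in particular arguing that processing tests in order of nesting depth is well-founded and that each $E_j$ is genuinely non-nested by the time $A_j$ is introduced — the reduced-\nnfa condition $l < j \le n$ from Definition~\ref{def:NFAs} is what makes the index ordering acyclic and the induction sound. A secondary subtlety is confirming the polynomial time bound: each test transition contributes a number of fresh concepts and axioms linear in the size of the corresponding automaton, and there are only polynomially many test transitions across all automata in $\Abf$, so the total blowup is polynomial, and the construction of the \nnfa from the NRE is itself polynomial by the earlier remark.
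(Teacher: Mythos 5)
Your construction is essentially identical to the paper's: the paper likewise introduces a concept name $A_s$ per automaton state, adds $\top \sqsubseteq A_f$ for final states, $\exists r. A_{s'} \sqsubseteq A_s$ for role transitions, $A_{s'} \sqcap A \sqsubseteq A_s$ for concept tests, and $A_{s'} \sqcap A_{s_j} \sqsubseteq A_s$ for nested-test transitions (using the initial state's concept $A_{s_j}$ directly where you introduce an extra alias $A_j$), and then replaces $\ptest{j}$ by $\ctest{A_{s_j}}$ in the query. Your correctness sketch also mirrors the paper's: the axioms force each $A_{s_j}$ to be a superset of the true test set in every model (proved by induction on nesting level, just as you organize your bottom-up pass), while any model of the original KB extends to one of $\Tmc\cup\Tmc'$ interpreting the fresh names exactly.
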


\begin{proof}
  Let $q$ be an arbitrary \cnq whose role atoms are given by {\nnfa}s,
  that is, they take the form $\Abf_{s_0,F_0}(x,y)$.
  % For readability, we omit the initial and final states ${s_0,F_0}$
  % when they are not relevant, and \replaced[DC]{use only $\Abf$
  % instead of $\Abf_{s_0,F_0}$.}{denote $\Abf_{s_0,F_0}$ by $\Abf$
  % only}.  \notedc{If we just ``denote'' $\Abf_{s_0,F_0}$ by $\Abf$,
  % it is still a triple, and we cannot write ``$\alpha_i\in\Abf$''}

  For each atom $\Abf_{s_0,F_0}(x,y)$ in $q$ and each $\alpha_i \in
  \Abf$, we use a fresh concept name $A_s$ for each state $s \in S_i$,
  and define a TBox $\Tmc_{\alpha_i}$ that contains:
  %% the following axioms for each $\alpha_i \in \Abf$:
  \begin{itemize}
  \item $\top \sqsubseteq A_f$ for each $f \in F_i$,
  \item $\exists r. A_{s'} \sqsubseteq A_s$ for each $(s,r,s') \in
    \delta_i$ with $r \in \rni$,
  \item $A_{s'} \sqcap A \sqsubseteq A_s$ for each $(s,A?,s') \in
    \delta_i$ with $A \in \cn$, and
  \item $A_{s'} \sqcap A_{s_j} \sqsubseteq A_s$ for each
    $(s,\ptest{j},s') \in \delta_i$, with $s_j$ the initial state of
    $\alpha_j$.
  \end{itemize}
  We denote by $\Tmc_{\Abf}$ the union of all $\Tmc_{\alpha_i}$ with
  $\alpha_i \in \Abf$, and define $\Tmc'$ as the union of
  $\Tmc_{\Abf}$ for all atoms $\Abf_{s_0,F_0}(x,y) \in q$. To obtain
  the query $q'$ we replace each atom $\Abf_{s_0,F_0}(x,y)$ by the
  atom $\alpha'_{i}(x,y)$, where $\alpha_i$ is the unique automaton in
  $\Abf$ with $\{s_0\} \cup F_0 \subseteq S_i$, and $\alpha_i'$ is
  obtained from $\alpha_i$ by replacing each transition of the form
  $(s,\ptest{j},s') \in \delta_i$ with $(s,\ctest{A_{s_j}},s')$, for
  $s_j$ the initial state of $\alpha_j$. Note that each $\alpha_i'$ is
  a standard NFA.
  We show in the appendix that $\ans(q,\tuple{\Tmc,\Amc}) =
  \ans(q',\tuple{\Tmc\cup\Tmc',\Amc})$, for every KB
  $\tuple{\Tmc,\Amc}$.
\end{proof}

It follows that in every DL that contains \eli, answering
\cnqs is no harder than answering plain C2RPQs.  From existing upper
bounds  for C2RPQs % for the latter
\cite{calv-etal-09,%GlimmLHS08,
  OrtizRS11}, we obtain:

\begin{corollary}\label{corollary:cnqs} Answering \cnqs is:
  % The following upper bounds for answering \cnqs hold:
  \begin{itemize}
  \item in \twoexptime in combined complexity for all DLs contained in
    \shiq, \shoi, \ziq, or \zoi.
    % \item \conp in data complexity for all DLs contained
    %   in $\mathcal{SHIQ}$.
  \item in \exptime %% (resp., \twoexptime)
    in combined complexity and \ptime in data complexity for all DLs
    contained in \hornshoiq.
    %% (resp., Horn-$\mathcal{SROIQ}$).
  \end{itemize}
\end{corollary}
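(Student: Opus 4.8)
The plan is to derive the corollary directly from the reduction in Proposition~\ref{prop:eli-reduction} together with the already-established upper bounds for answering plain C2RPQs. Given a \cnq $q$ and a KB $\tuple{\Tmc,\Amc}$ formulated in any of the DLs under consideration, I would first invoke Proposition~\ref{prop:eli-reduction} to compute, in polynomial time, an \eli TBox $\Tmc'$ and a C2RPQ $q'$ satisfying $\ans(q,\tuple{\Tmc,\Amc})=\ans(q',\tuple{\Tmc\cup\Tmc',\Amc})$. This converts the task of answering a \cnq into that of answering a C2RPQ, at the sole cost of enlarging the TBox by $\Tmc'$.

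The key observation is that each of the target logics \shiq, \shoi, \ziq, \zoi, and \hornshoiq contains \eli, so that for any DL $\mathcal{L}$ contained in one of them, $\mathcal{L}$ extended with \eli inclusions is still contained in that logic. Since the axioms produced by the reduction --- namely $\top\sqsubseteq A_f$, $\exists r.A_{s'}\sqsubseteq A_s$, $A_{s'}\sqcap A\sqsubseteq A_s$, and $A_{s'}\sqcap A_{s_j}\sqsubseteq A_s$ --- are all \eli inclusions, and moreover all \emph{Horn} (a single concept name on the right-hand side), the combined TBox $\Tmc\cup\Tmc'$ remains in exactly the same DL as $\Tmc$, including in the Horn case. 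Hence $\ans(q',\tuple{\Tmc\cup\Tmc',\Amc})$ is a legitimate C2RPQ-answering instance in that logic, to which the known results apply: the \twoexptime combined bound for \shiq, \shoi, \ziq, \zoi from \cite{calv-etal-09}, and the \exptime combined / \ptime data bounds for \hornshoiq from \cite{OrtizRS11}.

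It then remains to verify that the reduction respects both complexity measures. For combined complexity this is immediate, as the transformation runs in polynomial time and the relevant classes (\exptime, \twoexptime) are closed under polynomial-time reductions. The one point to handle with a little care is data complexity: here I would note that both $\Tmc'$ and $q'$ are constructed from $q$ alone and are entirely independent of $\Amc$, so enlarging the TBox by $\Tmc'$ does not increase the size of the data and leaves the data-complexity measure unaffected; thus the \ptime data bound for \hornshoiq transfers unchanged. I do not anticipate a genuine difficulty, since the corollary is essentially a bookkeeping consequence of Proposition~\ref{prop:eli-reduction}; the only substantive things to get right are the containment of the freshly generated inclusions in each logic (in particular that they stay Horn for \hornshoiq) and the fact that $\Tmc'$ is query-dependent but data-independent, which is precisely what justifies the transfer of the data-complexity bound.
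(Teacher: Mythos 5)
Your proposal is correct and follows essentially the same route as the paper: the corollary is obtained directly from Proposition~\ref{prop:eli-reduction} by observing that the added axioms are \eli{} (hence Horn) inclusions absorbed into the target logics, and then citing the known C2RPQ upper bounds from \cite{calv-etal-09,OrtizRS11}. Your additional remark that $\Tmc'$ and $q'$ are ABox-independent, which is what lets the \ptime{} data bound transfer, is exactly the right point to make explicit.
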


We point out that the \twoexptime upper bound for expressive DLs can
also be inferred, without using the reduction above,
%% also follows
from the existing results for answering C2RPQs in
%% the DLs
$\mathcal{ZIQ}$ and $\mathcal{ZOI}$ \cite{calv-etal-09}.\footnote{For
  queries that do not contain inverse roles, that is, (1-way) CRPQs,
  the same applies to $\mathcal{ZOQ}$ and its sublogics.}  Indeed,
these DLs support regular role expressions as concept constructors,
and a nested expression $\ptest{E}$ in a query can be replaced by a
concept $\exists{E}.\top$ (or by a fresh concept name $A_{E}$ if the
axiom $\exists{E}.\top \ISA A_{E}$ is added to the TBox).
Hence, in $\mathcal{ZIQ}$ and $\mathcal{ZOI}$, nested
expressions provide no additional expressiveness and \cnqs and C2RPQs
coincide.
% Since C2RPQs can be answered in \twoexptime in the very expressive
% DLs $\mathcal{ZIQ}$ and $\mathcal{ZOI}$, this provides an
% alternative proof of the \twoexptime upper bound above for any DL
% that is a sublogic of one of these.  \footnote{For queries that do
% not contain inverse roles, that is, (1-way) CRPQs, the same applies
% to $\mathcal{ZOQ}$ and its sublogics.}

%% A minor adaptation of
The construction used in Proposition~\ref{prop:eli-reduction} also
allows us to reduce the evaluation of a \nq to standard reasoning in
any DL that contains $\mathcal{ELI}$.

\begin{proposition}\label{prop:eli-reduction-oneatom}
  For every \nq $q$ and every pair of individuals $a,b$, one can
  compute in polynomial time an $\mathcal{ELI}$ TBox $\Tmc'$, and a
  pair of assertions $A_b(b)$ and $A_s(a)$ such that $(a,b) \in
  \ans(q,\tuple{\Tmc,\Amc})$ iff $\tuple{\Tmc \cup \Tmc',\Amc \cup
    \{A_b(b)\}} \models A_s(a)$, for every DL KB $\tuple{\Tmc,\Amc}$.
\end{proposition}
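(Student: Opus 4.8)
The plan is to reuse the automata-to-TBox encoding from the proof of Proposition~\ref{prop:eli-reduction}, but to \emph{anchor} the accepting runs of the main automaton at the target individual $b$ rather than letting them terminate at an arbitrary element. Recall that in that construction each state $s$ of each automaton $\alpha_i\in\Abf$ receives a fresh concept name $A_s$, and the transitions of $\alpha_i$ are translated into \eli inclusions so that $A_s$ propagates \emph{backwards} along the automaton: an $r$-transition $(s,r,s')$ yields $\exists r.A_{s'}\sqsubseteq A_s$, a concept-test transition $(s,A?,s')$ yields $A_{s'}\sqcap A\sqsubseteq A_s$, and a nesting transition $(s,\ptest{j},s')$ yields $A_{s'}\sqcap A_{s_j}\sqsubseteq A_s$. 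The only thing that distinguishes ``does an accepting path exist'' (the test semantics handled in Proposition~\ref{prop:eli-reduction}) from ``does an accepting path \emph{from $a$ to $b$} exist'' (what we need here) is the base case of this backward propagation.

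Concretely, let $\alpha_l$ be the unique automaton of $\Abf$ with $\{s_0\}\cup F_0\subseteq S_l$. I would build $\Tmc'$ as the union of the $\Tmc_{\alpha_i}$ of Proposition~\ref{prop:eli-reduction} for the transition axioms, and for the final-state axioms I would keep $\top\sqsubseteq A_f$ (for $f\in F_i$) for every nested automaton $\alpha_i\neq\alpha_l$, since nested tests still have purely existential semantics; but for the main automaton $\alpha_l$ I would replace these by $A_b\sqsubseteq A_f$ for each $f\in F_0$. I then set the target assertion to $A_s(a)$ with $s=s_0$, and add $A_b(b)$. As all automata together have only polynomially many states and transitions, $\Tmc'$ and the two assertions are computable in polynomial time.

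For correctness I would prove both directions through the intended reading of the fresh names: for a state $s$ of $\alpha_l$, the concept $A_s$ marks origins of automaton paths reaching $b$ in a final state, while for a state of a nested automaton $\alpha_j$, $A_s$ marks origins of paths accepted by $\alpha_j$ (leading anywhere). For the ``only if'' direction, take any model $\I$ of $\Tmc\cup\Tmc'\cup\Amc\cup\{A_b(b)\}$; it is also a model of $\tuple{\Tmc,\Amc}$, so $(a^\I,b^\I)\in(\Abf_{s_0,F_0})^{\I}$ by hypothesis, and I would push the accepting run backwards through the inclusions of $\Tmc'$ --- using $b^\I\in A_b^\I$ together with $A_b\sqsubseteq A_f$ at the anchored end, and an inner induction on nesting depth to discharge the nesting transitions --- to obtain $a^\I\in A_{s_0}^\I$, whence the entailment holds. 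For the ``if'' direction, given an arbitrary model $\I$ of $\tuple{\Tmc,\Amc}$, I would extend it to a model $\I'$ of $\Tmc\cup\Tmc'\cup\Amc\cup\{A_b(b)\}$ by interpreting the fresh names as the \emph{least} sets satisfying the (positive) inclusions of $\Tmc'$ over $A_b^{\I'}=\{b^\I\}$; since the entailment forces $a^\I\in A_{s_0}^{\I'}$, a matching induction on the fixpoint stages shows that this membership witnesses an accepting run from $a^\I$ to $b^\I$ in $\I$, so $(a,b)\in\ans(q,\tuple{\Tmc,\Amc})$.

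The main obstacle is the ``if'' direction, specifically establishing the exact least-fixpoint characterization of the fresh names in the presence of nesting. This calls for a careful double induction --- on nesting depth across the automata in $\Abf$, and, within each automaton, on the length of the witnessing run (equivalently, the fixpoint iteration stage) --- together with a clean argument that taking the least extension of the fresh names disturbs neither the original symbols nor the anchoring assertion $A_b(b)$, so that $\I'$ genuinely remains a model of $\tuple{\Tmc,\Amc}$ agreeing with $\I$ on all query-relevant vocabulary.
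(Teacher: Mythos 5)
Your proposal is correct and takes essentially the same approach as the paper: both reuse the encoding of Proposition~\ref{prop:eli-reduction}, anchor acceptance at $b$ via the fresh concept $A_b$ together with the assertion $A_b(b)$, and prove the two directions by backward propagation through the inclusions and by a minimal extension of an arbitrary model on the fresh names. The only (immaterial) difference is in how the anchoring is realized: you replace the final-state axioms of the main automaton by $A_b \sqsubseteq A_f$ for $f \in F_0$, whereas the paper forms the query $\ptest{E \conc \ctest{A_b}}$ and keeps $\top \sqsubseteq A_f$ for its final states --- two logically equivalent variants of the same construction.
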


From this and existing upper bounds for instance checking in DLs, we
easily obtain:

\begin{corollary}\label{corollary:nqs}
  Answering \nqs is in \exptime in combined complexity for every DL that
  contains \eli and is contained in \shiq, \shoi, \ziq, or \zoi.
\end{corollary}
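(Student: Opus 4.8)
The plan is to read this off as a direct corollary of Proposition~\ref{prop:eli-reduction-oneatom}: that proposition turns the problem of deciding whether a pair $(a,b)$ is a certain answer to an \nq into a single instance-checking problem over an only polynomially larger KB, and for each of the DLs named in the statement instance checking is already known to be in \exptime. So the whole argument amounts to composing a polynomial-time reduction with an \exptime decision procedure.

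Concretely, I would fix a DL $\mathcal{L}$ with $\eli \subseteq \mathcal{L}$ and $\mathcal{L}$ contained in one of \shiq, \shoi, \ziq, \zoi, an \nq $q$, a pair of individuals $a,b$, and an $\mathcal{L}$-KB $\tuple{\Tmc,\Amc}$; by the convention fixed at the start of this section we may assume $q$ uses no test roles $\{a\}?$. First I would apply Proposition~\ref{prop:eli-reduction-oneatom} to compute, in time polynomial in $|q|+|\tuple{\Tmc,\Amc}|$, an \eli TBox $\Tmc'$ together with assertions $A_b(b)$ and $A_s(a)$ (for fresh concept names $A_s,A_b$) such that
\[
(a,b)\in\ans(q,\tuple{\Tmc,\Amc}) \quad\text{iff}\quad \tuple{\Tmc\cup\Tmc',\,\Amc\cup\{A_b(b)\}}\models A_s(a).
\]

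The key observation is that the target KB is still an $\mathcal{L}$-KB: since $\Tmc$ is an $\mathcal{L}$ TBox and $\Tmc'$ is an \eli TBox with $\eli\subseteq\mathcal{L}$, the union $\Tmc\cup\Tmc'$ is a valid $\mathcal{L}$ TBox, and $\Amc\cup\{A_b(b)\}$ is an ABox over the same signature extended by one fresh concept name; both have size polynomial in the original input. Because $\mathcal{L}$ is contained in one of \shiq, \shoi, \ziq, or \zoi, instance checking over $\mathcal{L}$-KBs inherits the \exptime upper bound known for these expressive DLs (instance checking reduces to KB (un)satisfiability, which is decidable in exponential time; for the $\mathcal{Z}$ family this is the bound of \cite{calv-etal-09}). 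Hence the right-hand side above can be decided in time exponential in $|\tuple{\Tmc\cup\Tmc',\,\Amc\cup\{A_b(b)\}}|$, and thus exponential in the original input size.

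Putting the two steps together yields the claimed \exptime algorithm for \nq answering: the reduction runs in polynomial time and produces an output of polynomial size, on which the \exptime instance-checking procedure then runs in exponential time. I do not expect a genuine obstacle here beyond bookkeeping; the only two points requiring care are (i) verifying that adding the \eli axioms $\Tmc'$ does not push the KB outside the host logic $\mathcal{L}$, which holds precisely because $\eli\subseteq\mathcal{L}$, and (ii) invoking the appropriate existing \exptime instance-checking bound for each of the four expressive DLs. The correctness of the reduction itself is already guaranteed by Proposition~\ref{prop:eli-reduction-oneatom}, so nothing further needs to be established.
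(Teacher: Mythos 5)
Your proposal matches the paper's argument exactly: Corollary~\ref{corollary:nqs} is obtained there by combining the polynomial-time reduction of Proposition~\ref{prop:eli-reduction-oneatom} with the known \exptime\ upper bounds for instance checking in the listed DLs, noting (as you do) that the added \eli\ TBox keeps the KB inside the host logic. The proposal is correct and requires no changes.
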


\section{Lower Bounds}
\label{sec:lower-bounds}

The upper bounds we have stated in Section~\ref{sec:reductions} are
quite general, and in most cases
%% , as we will see in this section, they are
worst-case optimal.

The \twoexptime\ upper bound %for expressive DLs
 stated in the first item of
Corollary~\ref{corollary:cnqs} is optimal already for C2RPQs and
$\mathcal{ALC}$.  Indeed, the \twoexptime hardness proof for conjunctive
queries in $\mathcal{SH}$  by
\citeauthor{EiterLOS09}~(\citeyear{EiterLOS09})
can be adapted to use an $\mathcal{ALC}$ TBox and a C2RPQ.
  Also the \exptime bounds in
Corollaries~\ref{corollary:cnqs} and~\ref{corollary:nqs} are
optimal for all DLs that contain $\mathcal{ELI}$, because %already
standard reasoning tasks like satisfiability checking are already \exptime-hard
 in this logic \cite{OWLED-short}.  For the same reasons, the
\ptime bound for data complexity in Corollary~\ref{corollary:cnqs} is
tight for \el and its extensions \cite{CalvaneseGLLR06}.

However, for the lightweight DLs \dlh\ and \el, %in the \dlc and \el families,
% (for which
%standard reasoning is tractable in combined complexity),
the best combined complexity lower bounds we have are
\nlspace\ (resp., \ptime) for \nqs
%(stemming from RPQs over
%plain DBs)
%% (i.e., in the absence of a DL TBox),
and \pspace for \cnqs, inherited from the lower bounds for (C)NRPQs % from % (which holds for CRPQs in \dlc and \elh
\cite{BienvenuOS13}.
This leaves a significant gap with respect to the \exptime\ upper bounds in
%provided by % membership
%statements in
Corollaries~\ref{corollary:cnqs} and~\ref{corollary:nqs}.

We show next that these upper bounds are tight. This is the one of the
core technical results of this paper, and probably the most surprising
one: already evaluating one \nq in the presence of a \dlc or \el TBox
is \exptime-hard. %We point out that,

\begin{theorem}In \dlc\ and \el, N2RPQ answering
%  the problem of deciding whether $\vec{c}\in \ans(q,\tuple{\Tmc,\Amc})$,
  % where $\Tmc$ is in \elh or \dlc, and $q$ is an N2RPQ,
   is \exptime-hard in combined
  complexity.
\end{theorem}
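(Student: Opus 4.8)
The plan is to prove hardness by a reduction from an \exptime-complete problem whose combinatorial shape matches the two navigational features that separate \nqs from plain 2RPQs: the Kleene star, which supplies unbounded iteration, and the nesting construct $\ptest{E}$, which supplies a \emph{balanced}, return-to-origin test that ordinary two-way regular expressions cannot express. A natural source problem is the acceptance problem for alternating Turing machines running in polynomial space (recall that alternating \pspace equals \exptime): a configuration is a string of polynomial length, there are exponentially many configurations, and the transition relation alternates existential and universal branching. First I would fix, by a standard normal form, a machine that branches into exactly two successors per step and whose accepting halting configurations are recognizable by a local test.

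Next I would construct, in polynomial time from the machine and its input, an \el (respectively \dlc) KB $\tuple{\Tmc,\Amc}$ whose canonical (universal) model is a tree that encodes the configuration space together with the transition relation. In \el this is achieved with qualified existential axioms of the form $A \sqsubseteq \exists r.\,B$ that generate, from a node representing a configuration, the nodes for its two successors and for the individual tape cells; in \dlc, which offers only unqualified existentials $\exists r.\top$, the same labelling is installed indirectly by propagating successor information through inverse-role axioms $\exists r^-.\top \sqsubseteq B$, so the required tree can be built in both logics (in \el the \emph{query} still uses inverse roles, which is always allowed). The ABox pins down a distinguished individual $a$ for the initial configuration and an auxiliary $b$, exactly as in Proposition~\ref{prop:eli-reduction-oneatom}. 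Since \nqs are preserved under homomorphisms, certain-answer evaluation coincides with evaluation over this single canonical model, so it suffices to design the query to match precisely when the encoded initial configuration is accepting.

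The heart of the construction is a single \nq $q$ together with the pair $(a,b)$ such that $(a,b)$ is a certain answer iff the machine accepts. Existential configurations are handled by a union inside the regular expression that selects one accepting successor, whereas universal configurations are handled by a conjunction of boundedly many nested tests, e.g.\ $\ptest{E_0}\cdot\ptest{E_1}$ evaluated at the configuration node, which forces \emph{both} successor subtrees to be accepting while each test returns to its origin. The recursion through the exponentially deep computation is carried by a Kleene star over the main navigation, and it is precisely the nesting that supplies the matched ``descend into a subtree and come back'' behaviour that a plain star cannot, mirroring the jump from \ptime to \exptime. I would then prove both directions by relating accepting strategy subtrees of the machine to matches of $q$ in the canonical model.

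The main obstacle is reconciling the unbounded alternation depth of the computation, whose accepting strategy tree can be exponentially deep, with the fact that any fixed query has only constant nesting depth, since the test references of an \nnfa are acyclic ($l<j_i$). The delicate point is therefore to push all unboundedness onto the star while letting the bounded nesting encode only the \emph{local} alternation pattern at a single configuration, so that repeatedly traversing the starred main path reinstantiates the same bounded-depth test at each of the exponentially many configurations visited. Making this interplay correct in a logic as weak as \dlc---where even the tree labels must be installed through inverse roles---and then verifying soundness and completeness against the canonical model, is where the real work lies; an analogous construction, using qualified existentials in place of inverse-role propagation, yields the same bound for \el.
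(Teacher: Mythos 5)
You have the right source problem (acceptance of a polynomially space-bounded ATM) and the right overall architecture---a polynomial-size KB whose canonical model is a tree over-approximating the computations, plus a single \nq{} that selects and verifies an accepting run---and this matches the paper's strategy. But the way you deploy nesting contains a genuine gap. Handling universal configurations by a pair of nested tests $\ptest{E_0}\cdot\ptest{E_1}$, each certifying that one successor's subtree is accepting, cannot work with a fixed query: acceptance of a subtree is itself defined by universal branching arbitrarily deep inside it, so $E_0$ would need its own nested tests at the universal configurations it encounters, and the required nesting depth grows with the alternation depth of the run---whereas test references in an \nnfa{} are acyclic, so any fixed query has constant nesting depth. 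You notice this tension, but your proposed resolution (``push all unboundedness onto the star, let nesting encode only the local alternation pattern'') does not resolve it, because ``both successor subtrees are accepting'' is not a local property that a bounded-depth test can re-instantiate: a Kleene star inside $E_0$ can follow one branch downward, but cannot itself branch at deeper universal nodes. The paper sidesteps this entirely: alternation is handled by the \emph{un-nested} main automaton $\alpha_0$, which performs a two-way depth-first traversal of the strategy tree (descend into the first successor, climb back to the universal node along inverse roles, then descend into the second), so no nesting at all is spent on branching.

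Second, your reduction never explains how a path in the generated tree is certified to be a \emph{legal} computation. A polynomial-size TBox cannot label each of the exponentially many tree nodes with its full tape contents, so the tree necessarily over-generates transition sequences, and some mechanism must reject paths in which a transition reads a symbol that is not actually under the head. This is precisely where the paper needs nesting: at each final configuration the query launches $m$ nested automata $\alpha_1,\ldots,\alpha_m$, one per tape cell, each of which walks back up to the root tracking the evolution of its own cell and accepts only if the transition sequence is consistent with that cell. A single non-nested automaton cannot do this (tracking all $m$ cells simultaneously would require exponentially many states), which is exactly why the construction requires \nqs{} rather than plain 2RPQs. Without this component, or an equivalent consistency check, your reduction is unsound.
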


\def\blank{\ensuremath{\mathfrak{b}}}
\def\chooseleft{c_L}
\def\chooseright{c_R}

\begin{proof}
  We provide a reduction from the word problem for \emph{Alternating
    Turing Machines (ATMs)} with polynomially bounded space, which is
  known to be \exptime-hard \cite{ChandraKS81}.
%%
%%  For the reduction, assume 
An ATM is given as a tuple
  $M=(\Sigma,S_{\exists},S_{\forall},\delta,s_{\mathit{init}},s_{\mathit{acc}},s_{\mathit{rej}})$,
  where $\Sigma$ is an \emph{alphabet}, $S_\exists$ is a set of
  \emph{existential states}, $S_\forall$ is a set of \emph{universal
    states}, $\delta\subseteq (S_{\exists}\cup S_{\forall}) \times
  \Sigma \cup \{\blank\} \times (S_{\exists}\cup S_{\forall})\times \Sigma
  \cup \{\blank\} \times \{-1,0,+1\}$ is a \emph{transition relation}, $\blank$
  is the blank symbol, and
  $s_{\mathit{init}},s_{\mathit{acc}},s_{\mathit{rej}} \in
  S_{\exists}$ are the \emph{initial state}, the \emph{acceptance
    state} and the \emph{rejection state}, respectively.

  Consider a word $w\in \Sigma^*$. We can w.l.o.g.\,assume that
  $\Sigma=\{0,1\}$, that $M$ uses
  only $|w|$ tape cells and that $|w|\geq 1$. % has at least one symbol.
  Let $m=|w|$, and, for each $1 \leq i \leq m$, let $w(i)$ denote the
  $i$th symbol of $w$.  Let
  $S=S_{\exists}\cup S_{\forall}$. We %w.l.o.g.\,
  make the following
  further assumptions: \setlist{leftmargin=.75cm,labelindent=0cm}
  \begin{compactenum}[(i)]
    % \item Let $\Sigma_w$ be the set of symbols used in $w$. We
    %   assume that $\sigma'\not \in\Sigma_w$ for each transition
    %   $(s,\sigma,s',\sigma',d)\in \delta$. Intuitively, the
    %   ``working'' alphabet of $M$ is disjoint from the alphabet of
    %   the input.
  \item The initial state %$s_{\mathit{init}}$
   is not a final state:
    %i.e.
    $s_{\mathit{init}} \not\in
    \{s_{\mathit{acc}},s_{\mathit{rej}}\}$.
  \item Before entering
  %halting in the
  a state
   $s_{\mathit{acc}}$ or
    $s_{\mathit{rej}}$, $M$ writes $\blank$ in all $m$ tape cells. %\meghyn{must we halt in final states?}
    % \item For any $(s,\sigma,s',\sigma',d)\in \delta$, we have that
    %   $s\in S_{\exists}$ implies $s'\in S_\forall$ and $s\in
    %   S_{\forall}$ implies $s'\in S_\exists$.
    % \item $|\{(s',\sigma',d)\mid (s,\sigma,s',\sigma',d)\in
    %   \delta\}|=2$ for any $s\in S\setminus
    %   \{s_{\mathit{acc}},s_{\mathit{rej}}\}$ and $\sigma \in
    %   \Sigma\cup \{\blank\} $.
  \item There exist % two
  functions $\delta_1, \delta_2 : S \times
    \Sigma \cup \{\blank\} \rightarrow S\times \Sigma \cup \{\blank\} \times
    \{-1,0,+1\}$ such that
    $\{\delta_1(s,\sigma),\delta_2(s,\sigma)\}=\{(s',\sigma',d)\mid
    (s,\sigma,s',\sigma',d)\in \delta\}$ for every $s\in S\setminus
    \{s_{\mathit{acc}},s_{\mathit{rej}}\}$ and $\sigma \in
    \Sigma\cup \{\blank\} $. In other words, non-final states of $M$ give
    rise to exactly two successor configurations
    %computation paths (the \emph{1st} path and the
    %\emph{2nd} path)
    described by the functions $\delta_1, \delta_2$.
  \end{compactenum}
  Note that the machine $M$ can be modified in polynomial time to ensure
  %%have properties
  (i-iii), while preserving the acceptance of $w$. %%  is preserved
                                %%  and $M$ has the properties (i-iii).

We next show how to construct in polynomial
  time a \dlc\ KB $\Kmc=(\Tmc,\Amc)$ %, where $\Kmc$ is in \elh or \dlc,
  and a
  query $q$ such that $M$ accepts $w$ iff
 $a \in \ans(q,\Kmc)$ 
  % $\Kmc\models q$ 
  (we return to \el\
  later). The high-level idea underlying the reduction is to use a KB to
  enforce a tree %% whose  edges correspond to the possible transitions of $M$. 
 that contains all possible computations of $M$ on $w$.  
   The query $q$ selects a computation in this tree and verifies that it
  corresponds  to a proper, error-free, accepting run.

  \smallskip\noindent {\it Generating the tree of transitions.} 
 First we construct $\Kmc$, which enforces
a tree whose
  edges correspond to the possible transitions of $M$. More precisely, each edge
  encodes a transition together with the resulting position of the
  read/write head of $M$, and indicates whether the transition is
  given by $\delta_1$ or $\delta_2$. This is implemented using role
  names $r_{p,t,i}$, where $p\in \{1,2\}$,
  $t\in\delta$, and $0 \leq i \leq m+1$.
  To mark the nodes that correspond to the initial (resp., a
  final) configuration of $M$, we employ the concept name
  $A_{\mathit{init}}$ (resp., $A_{\mathit{final}}$), and we %will also
  use %concept names
  $A_{\exists}$ and $A_{\forall}$ to store the transition type. %% of a  transition.

  We let $\Amc = \{ A_{\mathit{init}}(a),A_{\exists}(a)\}$, and then
  we initiate the construction of the tree by including in $\Tmc$
  the axiom
  \begin{equation}
    \label{eq:2}
    A_{\mathit{init}}\ISA \exists r_{p, (s_{\mathit{init}},\sigma,s',\sigma',d),1+d }.
  \end{equation}
  for each $\sigma\in \Sigma\cup \{\blank\}$
  and $p\in \{1,2\}$ such that
  $\delta_p(s_{\mathit{init}},\sigma)=(s',\sigma',d)$.
  To generate further transitions,
  %we include additional axioms in $\Tmc$
  %whose form depends on whether we work with \dlc\ or \el.
%  We supp \dlc\ for the moment
%  and return to \el.
  %If we want to formulate $\Tmc$ in \dlc\ (we later show how to adapt the proof to \el),
 $\Tmc$ contains
  \begin{equation}
    \label{eq:1}
    \exists r^-_{p, (s,\sigma,s',\sigma',d),i} \ISA \exists r_{p' ,(s',\sigma^*,s'',\sigma'',d'),i+d'}
  \end{equation}
  for each $(s,\sigma,s',\sigma',d)\in \delta$, $1 \leq i
  \leq m$, $\sigma^*\in \Sigma\cup \{\blank\}$ and $p,p'\in \{1,2\}$ such
  that $\delta_{p'}(s',\sigma^*)=(s'',\sigma'',d')$.
%  (we use \dlc here, and later discuss how to employ \elh)
%  The tree is
%  generated by inclusion axioms of the following TBox $\Tmc$ (we use
%  \dlc here, and later discuss how to employ \elh).
%
%  For each $\sigma\in \Sigma\cup \{\blank\}$ and $p\in \{1,2\}$ such that
%  $\delta_p(s_{\mathit{init}},\sigma)=(s',\sigma',d)$, the TBox $\Tmc$
%  has the axiom
%  \begin{equation}
%    \label{eq:2}
%    A_{\mathit{init}}\ISA \exists r_{p, (s_{\mathit{init}},\sigma,s',\sigma',d),1+d }.
%  \end{equation}
%
%  All further possible transitions of $M$ are represented as
%  follows. For each $(s,\sigma,s',\sigma',d)\in \delta$, $1 \leq i
%  \leq m$, $\sigma^*\in \Sigma\cup \{\blank\}$ and $p,p'\in \{1,2\}$ such
%  that $\delta_{p'}(s',\sigma^*)=(s'',\sigma'',d')$, the TBox $\Tmc$
%  has the axiom
%  \begin{equation}
%    \label{eq:1}
%    \exists r^-_{p, (s,\sigma,s',\sigma',d),i} \ISA \exists r_{p' ,(s',\sigma^*,s'',\sigma'',d'),i+d'} .
%  \end{equation}
  Note that a transition $t'=(s',\sigma^*,s'',\sigma'',d')\in \delta$
  can follow %a transition
  $t=(s,\sigma,s',\sigma',d) \in \delta$ only
  if $\sigma^*$ is the symbol written on %the
  tape cell % in position
  $i$, for $i$ the position of the read/write head after
  executing $t$.  This is not guaranteed by (2). Instead, 
  we ``overestimate'' the possible successive transitions, and use
  the query $q$ to select paths that correspond to a proper
    computation.  
% The above axiom does not guarantee this, it simply
%   ``overestimates'' the possible successive transitions.  We will use
%   the query $q$ to select paths that correspond to a proper
%     computation of $M$.  
% %%proper computation paths. 

  We complete the definition of $\Tmc$ by adding inclusions to label the
  nodes according to the type of states resulting from
  transitions. For each $1 \leq i \leq m$, $p\in \{1,2\}$ and
  transition $(s,\sigma,s',\sigma',d)\in \delta$, we have the axiom
  \[\exists r^-_{p, (s,\sigma,s',\sigma',d),i} \ISA A_{Q}, \mbox{
    where }\]
  \begin{enumerate}[-]
  \item $A_Q = A_{\mathit{final}}$ if
    $s'\in\{s_{\mathit{acc}},s_{\mathit{rej}}\} $,
  \item $A_{Q}=A_\exists$, if $s'\in
    S_{\exists}\setminus\{s_{\mathit{acc}},s_{\mathit{rej}}\} $,
    and
  \item $A_{Q}=A_\forall$, if $s'\in
    S_{\forall}\setminus\{s_{\mathit{acc}},s_{\mathit{rej}}\} $.
  \end{enumerate}

  % \begin{enumerate}[-]
  % \item $A_{\mathit{init}}\ISA \exists r_{1,p,
  %   (s_0,\sigma,s',\sigma',d) }$ for each $\sigma\in \Sigma\cup
  %   \{\blank\}$ and $p\in \{l,r\}$ such that
  %   $\delta_p(s_0,\sigma)=(s',\sigma',d)$.

  % \item $\exists r^-_{i,p, (s,\sigma,s',\sigma',d)} \ISA \exists
  %   r_{i+d,p' ,(s',\sigma^*,s'',\sigma'',d')} $ for each
  %   $(s,\sigma,s',\sigma',d)\in \delta$, $1 \leq i \leq m$,
  %   $\sigma^*\in \Sigma\cup \{\blank\}$ and $p,p'\in \{l,r\}$ such that
  %   $\delta_{p'}(s',\sigma^*)=(s'',\sigma'',d')$.

  % \item $\exists r^-_{i,p, (s,\sigma,s',\sigma',d)} \ISA A_{Q}$ for
  %   each $1 \leq i \leq m$, $p\in \{l,r\}$ and transition
  %   $(s,\sigma,s',\sigma',d)\in \delta$, where
  %   \begin{enumerate} \item $A_Q = A_{\mathit{final}}$ if
  %     $s'\in\{s_{\mathit{acc}},s_{\mathit{rej}}\} $,
  %   \item $A_{Q}=A_\exists$, if $s'\in
  %     S_{\exists}\setminus\{s_{\mathit{acc}},s_{\mathit{rej}}\}
  %     $, and
  %   \item $A_{Q}=A_\forall$, if $s'\in
  %     S_{\forall}\setminus\{s_{\mathit{acc}},s_{\mathit{rej}}\}
  %     $.
  %   \end{enumerate} \end{enumerate}

  \smallskip

  We turn to the construction of the query $q$, for which we employ
  the \nnfa representation.
  We construct an \nnfa $\alpha_q = (\Abf,s,F)$ where
  $\Abf$ has $m+1$ automata
  $\{\alpha_0,\ldots,\alpha_{m}\}$. Intuitively, the automaton
  $\alpha_0$ will be responsible for traversing the tree representing
  candidate computation paths. At nodes corresponding to the end of a
  computation path, $\alpha_0$ launches $\alpha_1,\ldots,\alpha_{m}$
  which ``travel'' back to the root of the tree and test for the
  absence of errors along the way.
 We start by defining the tests %%the automata 
  $\alpha_1,\ldots,\alpha_{m}$. Afterwards we define $\alpha_0$,
    which 
    selects a set of paths that correspond to a full
    computation,  and launches these tests at the end of each path. 

  \smallskip\noindent {\it Testing the correctness of a computation path.}
   For
  each $1 \leq l \leq m$, the automaton $\alpha_l =
  (S_l,s_l,\delta_l,F_l)$ is built as follows. We let $S_{l} = \{ \sigma_l\mid \sigma\in \Sigma\}\cup
  \{\blank_l\}\cup \{s_l'\}$. That is, $S_{l}$ contains a copy of %is a subscripted copy of
  $\Sigma\cup\{\blank\}$ plus the additional state $s_l'$. We define the
  initial state as $s_l=\blank_l$ and let $F_l=\{s_l'\}$. Finally, the
  transition relation $\delta_l$ contains the following tuples:
  \setlist{leftmargin=*,labelindent=0cm}
  \begin{enumerate}[(T1)]
  \item $(\sigma_l,r_{p,(s,\sigma,s',\sigma',d) ,i }^-,\sigma_l)$ for
    all $1\leq i \leq m$, $p\in \{1,2\}$, all transitions
    $(s,\sigma,s',\sigma',d)\in \delta$, and each $\sigma_l\in S_l
    \setminus \{s_l'\} $ with $l\neq i-d$;
  \item $(\sigma_{l}',r_{p,(s,\sigma,s',\sigma',d),i}^-,\sigma_{l})$
    for all $1\leq i \leq m$, $s\in S$ and $p\in \{1,2\}$ with $
    \delta_p (s,\sigma)=(s',\sigma',d) $ and $l=i-d$;
  \item $(\sigma_l,A_{\mathit{init}}?,s_l')$ for $\sigma={w}(l)$.
  \end{enumerate}
  % \begin{enumerate}[(T1)]
  % \item $(\sigma_i,r_{j,p,t}^-,\sigma_i)$ for each $\sigma_i\in S_i
  %   \setminus \{s_i'\} $, $p\in \{1,2\}$, $t\in \delta$ and $1 \leq
  %   j\neq i \leq m$;
  % \item $(\sigma_i',r_{i,p,(s,\sigma,s',\sigma',d)}^-,\sigma_i)$ for
  %   each $s\in S$ and $p\in \{1,2\}$ with $ \delta_p
  %   (s,\sigma)=(s',\sigma',d) $.
  % \item $(\sigma_i,A_{\mathit{init}}?,s_i')$ with $ {w}(i) = \sigma
  %   $.
  % \end{enumerate}
  The working of $\alpha_l$ can be explained as follows. Each state
  $\sigma_l\in S_l \setminus \{s_l'\} $ corresponds to one of the %different
  symbols that may be written in position $l$ of the tape during a run
  of $M$. When $\alpha_l$ is launched at some node in a computation tree induced by
  $\Kmc$, it attempts to travel up to the root node, and the only
  reason it may fail is when a wrong symbol
 is written in position $l$ at some point in the computation path.
  Recall that in each final configuration of $M$, all symbols are set
  to the blank symbol, and thus the initial state of $\alpha_l$ is
  $\blank_l$.

%  Consider a word $w'\in \sigmaroles^*$ that
%  is accepted by $\alpha_l$ for every $1\leq l \leq m$. Since we are
%  interested in words that occur in the tree induced by $\Kmc$
%  (formally: the canonical model of $\Kmc$ -- see Section 6)
%%  can actually appear in the canonical model
%%  of $\Kmc$ (see Section 6 for the formal definition),
%  we can assume $w'$ is of the form
%  \begin{equation}
%    r_{p_k,t_k,i_k}^- \cdots r_{p_1,t_1,i_1}^- \cdot A_{\mathit{init}}?,\label{eq:3}
%  \end{equation}
%  where $t_k,\ldots,t_1$ are transitions from $\delta$.
%  We aim to show that $t_1, \ldots, t_k$
%  is a correct sequence of transitions.

%  Now c
  Consider a word $w'\in \sigmaroles^*$ of the form
  \begin{equation}
    r_{p_k,t_k,i_k}^- \cdots r_{p_1,t_1,i_1}^- \cdot A_{\mathit{init}}?\label{eq:3}
  \end{equation}
  that describes a path from some node in the tree induced by $\Kmc$ up to the
  root node $a$.
  We claim that $w'$ is accepted
  by every $\alpha_l$ ($1\leq l \leq m$) just in the case that $t_1, \ldots, t_k$
  is a correct sequence of transitions.
  To see why, first suppose that every $\alpha_l$ accepts $w'$, and let
  $(pos_0,st_0,tape_0)$ be the tuple with $pos_0=1$,
  $st_0=s_{\mathit{init}}$ and $tape_0$ contains for each
  $1\leq l \leq m$, the symbol $\sigma_l$ corresponding to the state
   of $\alpha_l$ when reading $A_{\mathit{init}}$.  Clearly, due to (T3), the tuple
  $(pos_0,st_0,tape_0)$ describes the initial configuration
  of $M$ on input $w$.
  For $1 \leq j \leq k$, if $t_j=(s,\sigma,s',\sigma',d)$, then
  we define $(pos_j,st_j,tape_j)$ as follows: $pos_j=i_j$, $st_j=s'$, and
  $tape_j$ contains for each %$\alpha_i$,
  $1\leq i \leq m$, the state
  of $\alpha_i$ when reading $r_{p_j,t_j,i_j}^-$.
  A simple inductive argument shows that for every $1 \leq j \leq k$,
  the tuple $(pos_j,st_j,tape_j)$ describes the
  configuration of $M$ after applying the transitions $t_1,\ldots,
  t_{j}$ from the initial configuration. Indeed,
  let us assume that $(pos_{j-1},st_{j-1},tape_{j-1})$
  correctly describes the configuration after executing $t_1,\ldots,
  t_{j-1}$ and $t_j=(s,\sigma,s',\sigma',d)$.
  %We argue that
  %$(pos_{j},st_{j},tape_{j})$ properly describes the configuration
  %that results from $t_j$.
  After executing $t_j$, the read/write head
  is in position $pos_{j-1}+d$ and the state is $s'$.
  Since the only way to enforce an $r_{p_j,t_j,i_j}^- $-edge is via axioms
  \eqref{eq:2} and \eqref{eq:1}, we must have $pos_j=pos_{j-1}+d$ and
  $st_j=s'$. It remains to show that $tape_{j}$ describes the tape
  contents after executing $t_j$. Consider some position $1 \leq l \leq  m$.
  There are %can be
  two cases:
  \setlist{leftmargin=*,labelindent=0cm}
  \begin{enumerate}
  \item $l \neq i_j - d$. In this case, we know that the symbol in
    position $l$ is not modified by executing $t_j$. We have to show
    that $\sigma_l\in tape_{j-1}$ implies $\sigma_l\in tape_{j}$. This
    follows from the construction of $\alpha_l$. In particular, when
    reading ${r_{p_j,t_j,i_j}}^{-}$, it must employ a transition from (T1).
  \item $l = i_j - d$. In this case, %we know that
  after executing $t_j$, we must
    have $\sigma'$ in position $l$. We have to show that $\sigma_l\in
    tape_{j-1}$ implies $\sigma_l'\in tape_{j}$. This again follows
    from the construction of $\alpha_l$. In particular, when reading
    ${r_{p_j,t_j,i_j}}^-$, there is only one possible transition available
    in (T2), namely $(\sigma_l',{r_{p_j,t_j ,i_j }}^-,\sigma_l)$.
  \end{enumerate}
  Conversely, it is easy to see that any word of the form \eqref{eq:3}
  that appears in the tree induced by $\Kmc$ %appearing
  %in the canonical model of $\Kmc$
  and represents a correct computation
  path will be accepted by all of the $\alpha_l$.

  \smallskip\noindent {\it Selecting a proper computation.}
  It remains to define $\alpha_0$, which selects a
   subtree corresponding to a full candidate computation of $M$, and then launches the
  tests 
   %% ``test'' automata 
  defined above at the end of each path. 
  % candidate computation \blue{path} for $M$ and then launches the
  % tests 
  %  %% ``test'' automata 
  % defined above. 
  We let
  $\alpha_0=(S_0,s_0,\delta_0,F_0)$, where
  % \begin{inparaenum}[(i)]
  % \item
    $S_0=\{s_{\downarrow},\chooseleft,\chooseright,s_{\uparrow},s_{\updownarrow},s_{\mathit{test}},s_f\}$,
  %%\item 
$s_0=s_{\downarrow}$,
  %%\item 
$F_0=\{s_f\}$, and $\delta_0$ is defined next.
%%  \end{inparaenum}

  The automaton operates in two main modes: moving %going
  down the tree
  away from %into the
  %descendants of
  the root and moving back up towards the root.
  Depending on the type of the state of $M$, in state $s_{\downarrow}$
  the automaton either selects a child node to process next, or
  chooses to launch the test automata. If the tests are
  successful, it switches to moving up. To this end, $\delta_0$ has
  the following transitions:
  \begin{align*}
    (s_{\downarrow},A_{\exists}?,\chooseleft),
    (s_{\downarrow},A_{\exists}?,\chooseright),
    (s_{\downarrow},A_{\forall}?,\chooseleft), \\
    (s_{\downarrow},A_{\mathit{final}}?,s_{\mathit{test}}), \text{ and
    } (s_{\mathit{test}},\ptest{1,\ldots,m},s_{\uparrow}).
  \end{align*}
  % \begin{enumerate}[-]
  % \item $(s_{\downarrow},?A_{\exists},\chooseleft)$,
  % \item $(s_{\downarrow},?A_{\exists},s_{r})$,
  % \item $(s_{\downarrow},?A_{\forall},\chooseleft)$,
  % \item $(s_{\downarrow},?A_{\mathit{final}},s_{\mathit{test}})$,
  %   and
  % \item $(s_{\mathit{test}},\ptest{1,\ldots,m},s_{\uparrow})$.
  % \end{enumerate}
  The transitions that implement a step down or up are: % as follows:
  \begin{enumerate}[-]
  \item $(\chooseleft,r_{1,t,i},s_{\downarrow})$ for every $1 \leq i \leq m$
    and $t\in \delta$,
  \item $(\chooseright,r_{2,t,i},s_{\downarrow})$ for every $1 \leq i \leq m$
    and $t\in \delta$,
  \item $(s_{\uparrow},r_{1,t,i}^-,s_{\updownarrow})$ for every $1
    \leq i \leq m$ and $t\in \delta$, and
  \item $(s_{\uparrow},r_{2,t,i}^-,s_{\uparrow})$ for every $1
    \leq i \leq m$  and $t\in \delta$.

  \end{enumerate}

  After making a step up from the state $s_{\uparrow}$ via an $r_{1,t,i}^-$-edge, the automaton
  enters the state $s_{\updownarrow}$. Depending on the encountered
  state of $M$, % in the latter state
  the automaton decides either to verify
  the existence of a computation tree for the alternative transition,
  to keep moving up, or to accept the word.
  This is implemented using the following transitions of $\delta_0$:
%  In particular, we have
%  the following transitions in $\delta_0$:
  % \begin{enumerate}[-]
  % \item $(s_{\updownarrow}, ?A_\forall,s_{r} )$,
  % \item $(s_{\updownarrow}, ?A_\exists,s_{\uparrow} )$, and
  % \item $(s_{\updownarrow}, ?A_{\mathit{init}},s_f )$.
  % \end{enumerate}
  \begin{align*}
    (s_{\updownarrow}, ?A_\forall,\chooseright ), (s_{\updownarrow},
    ?A_\exists,s_{\uparrow} ), \text{ and } (s_{\updownarrow},
    ?A_{\mathit{init}},s_f ).
  \end{align*}

  To conclude the definition of $\alpha_q = (\Abf,s,F)$,  set 
  $s=s_{\downarrow}$ and $F=\{s_f\}$. Note that $\alpha_q$ has a
  constant number of states, so it can be converted into an
  equivalent NRE $E_{q}$ in polynomial time. The desired query is
%% defined as 
   $q(x,y)=E_{q}(x,y)$.

  The above \dlc TBox $\Tmc$ can be easily rephrased in \el.
  Indeed, we simply %replaced by an \elh TBox. To this
  %end,
  take a fresh concept name $ A_{p,t,i}$ for each role
  $r_{p,t,i}$, % above,
  and
   %First,
  replace every axiom $C\ISA \exists
  r_{p,t,i} $ by $C\ISA \exists r_{p,t,i}. A_{p,t,i} $ and %. Finally,
  %replace
  every axiom $\exists r_{p,t,i}^- \ISA C$ by $A_{p,t,i} \ISA
  C$.
\end{proof}

The above lower bound for answering N2RPQs hinges on the support for existential concepts
in the right-hand-side of inclusions. If they are
disallowed, then one can find a polynomial time
algorithm \cite{PerezAG10}. However, it was
open until now whether the polynomial-time upper bound is optimal. We
next prove
\ptime-hardness of the problem, already for plain graph databases.

% show that the latter
% upper bound is worst-case optimal.

% Next we show a result for plain databases. It is well-known that 2RPQs
% are \nlspace-complete
% in combined complexity. A polynomial-time
% algorithm for answering N2RPQs
% is presented in \cite{where?}, but to our
% knowledge, no matching lower bound has been obtained.
% The next theorem closes this
% gap. % next.
% \notemo{check!}

\begin{figure}
  \centering
\input{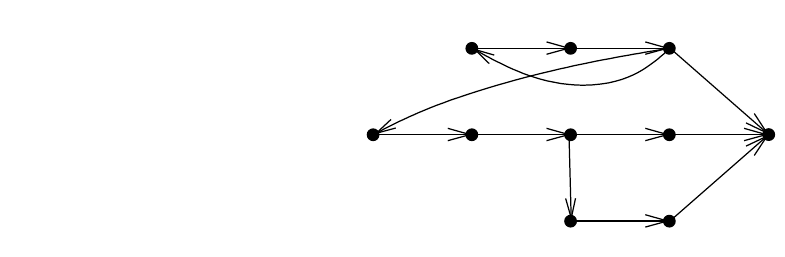_t}  
  \caption{Example ABox in the proof of Theorem~\ref{thm:ptime-lower}}
  \label{fig:abox}
\end{figure}

\begin{theorem}\label{thm:ptime-lower}
  Given as input an N2RPQ $q$, a finite interpretation $\Imc$ and a
  pair $(o,o')\in \Delta^{\Imc}\times \Delta^{\Imc}$, it is
  \ptime-hard to check whether $(o,o')\in \ans(q,\I)$.
\end{theorem}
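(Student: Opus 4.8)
The plan is to prove \ptime-hardness by a logspace reduction from a \ptime-complete problem that exhibits the characteristic alternation of conjunction and disjunction, namely deciding whether a distinguished atom $g$ is entailed by a set $P$ of propositional Horn clauses (equivalently, Path System Accessibility). The source of hardness over graph databases must be nesting: plain 2RPQ evaluation is only \nlspace-complete because it amounts to reachability, whereas deciding membership in the least model of $P$ is an alternating-reachability (AND/OR) task. I would encode $P$ into a finite interpretation $\Imc$ as follows. For each proposition $v$ take a node $p_v$; for each clause $C$ with head $h$ and body $b_1,\dots,b_k$ take a clause node $c_C$ with an $s$-edge from $p_h$ to $c_C$ (the letter $s$ ``selects'' a clause deriving $h$), and lay the body out as a \emph{chain} of fresh nodes $c_C \to n_1 \to \cdots \to n_k \to e_C$ using a role $t$, where $e_C$ carries a marker concept $\mathsf{End}$ and each $n_j$ has a single $h$-edge to the shared node $p_{b_j}$. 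Fact clauses (empty body) give the degenerate chain $c_C \to e_C$. All of this is linear in $P$.

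The query is where the construction is delicate. Since the nesting of a reduced \nnfa is strictly well-founded (tests only call higher-indexed automata), there is no genuine recursion, so I would \emph{unroll} the least-fixpoint computation $T_P^0,T_P^1,\dots,T_P^n$ (which stabilises after $n=|P|$ rounds) into $n$ levels of nesting. Define $E_0$ to be a test satisfied nowhere (e.g.\ $\ctest{A}$ for an unused concept name), and for $1\le i\le n$ set
\[
  E_i \;=\; \ptest{\, s \conc
      \ptest{\, (t \conc \ptest{\, h \conc E_{i-1}\,})^{*}
               \conc t \conc \ctest{\mathsf{End}} \,} \,}.
\]
Intuitively $E_i$ tests at a node $p_h$ whether $h\in T_P^i(\emptyset)$. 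The outer $\ptest{\,s\conc(\cdots)}$ realises the \emph{disjunction} over clauses with head $h$ (it existentially picks one $s$-successor $c_C$), while the inner expression realises the \emph{universal} quantification over the body: traversing the $t$-chain of $c_C$ up to $\mathsf{End}$ forces passing through \emph{every} body node $n_j$, and pairing each $t$-step with the test $\ptest{h\conc E_{i-1}}$ forces the associated proposition $p_{b_j}$ to be derivable at level $i{-}1$. Crucially $E_{i-1}$ occurs exactly once in $E_i$, so $|E_n|=O(n)$ and the query has no exponential blow-up; fact clauses are handled by the zero-iteration reading of the star reaching $\mathsf{End}$ immediately. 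The final query is $q(x,y)=(E_n \conc d)(x,y)$, where I add a single $d$-edge from $p_g$ to a fresh node $o'$, and I ask whether $(p_g,o')\in\ans(q,\Imc)$.

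Correctness reduces to the invariant: for every $i$ and proposition $h$, the pair $(p_h,p_h)$ lies in $\Int{E_i}$ iff $h\in T_P^i(\emptyset)$, proved by induction on $i$ using the chain-and-test analysis above; then $g$ is entailed iff $g\in T_P^n(\emptyset)$ iff $E_n$ holds at $p_g$ iff $(p_g,o')\in\ans(q,\Imc)$. The main obstacle, and the one point that needs the most care, is exactly the encoding of the \emph{universal} condition ``all body atoms are derivable'' using only \emph{existential} test operators: the argument must show that the linear $t$-chain cannot be short-circuited, i.e.\ that the only way to reach $\mathsf{End}$ from $c_C$ is to traverse all $k$ body edges and hence to discharge the test at each $n_j$ (so no body atom can be skipped), while the cyclic clauses such as $g\to g$ in the example do not spuriously certify $g$ precisely because derivability is bounded by the level index $i$. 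Everything in the reduction is clearly computable in logarithmic space, yielding \ptime-hardness already for nesting over plain graph databases, with no TBox.
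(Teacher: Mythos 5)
Your proposal is correct and follows essentially the same route as the paper: a logspace reduction from propositional definite Horn entailment in which the bounded fixpoint computation is unrolled into linearly many nesting levels (each level containing a single occurrence of the previous one, so the query stays polynomial), disjunction over clauses is realized by the existential choice inside a test, and the universal condition on clause bodies is enforced by laying the body out as a chain that must be traversed in full to reach an end marker, discharging a nested test at every step. The only differences are cosmetic encoding choices (shared proposition nodes with an $\mathsf{End}$ concept marker versus the paper's variable-indexed role labels and cross-linking $t$-edges into a common sink), so the argument matches the paper's proof in substance.
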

\begin{proof}
  To simplify the presentation, we prove the lower bound for a slight reformulation
  of the problem. In particular, we show
  \ptime-hardness of deciding $\vec{c}\in
  \ans(q,\tuple{\emptyset,\Amc})$, where $q$ is an N2RPQ and $\Amc$ is
  an ABox with assertions only of the form $A(a)$ or $r(a,b)$, where
  $A\in \cn$ and $r \in \rn$.

  We provide a %logarithmic-
  logspace reduction from the classical \ptime-complete problem of
   checking entailment in
  propositional definite Horn theories. %which is
%  a classical
%  \ptime-complete problem.
  Assume a set
  $T=\{\varphi_1,\ldots,\varphi_n\}$ of definite clauses over a set of propositional variables $V$,
  where each $\varphi_i$ is represented as a rule $v_1 \wedge \ldots \wedge v_m \rightarrow v_{m+1}$.

  Given a variable $g\in V$, we define an ABox $\Amc$, an
  N2RPQ $q$, and %an individual
   tuple $(a_1,a_2)$ such that $T
  \models g$ iff $(a_1,a_2)\in \ans(q,\tuple{\emptyset,\Amc})$.
  %For the reduction we use
  We may assume w.l.o.g.\, that $\varphi_1= g\rightarrow g$.
%  The ABox $\Amc$ will use the role names $s,t$, and %the role names
%  $p_v$, where $v\in V$. We assume w.l.o.g.\, that $\varphi_1= g\rightarrow g$.
  We define the desired ABox as $\Amc = \Amc_1\cup \Amc_2$,
  using the role names  $s,t$, and %the role names
  $p_v$, where $v\in V$. The ABox
  $\Amc_1$ simply encodes $T$ and contains for
  every %clause
  $\varphi_i= v_1 \wedge \ldots \wedge v_m \rightarrow v_{m+1}$,
 % $\varphi_i  = v_{m+1} \leftarrow  v_1\land \ldots \land v_m$,
  %of $T$,
  %$\Amc_1$ %contains
  the following assertions:
% \blue{\[p_{v_{m+1}}(b^i,e^i_{m+1}),p_{v_{m}}(e_{m+1}^{i},e_{m}^{i}),\ldots,p_{v_1}(e_2^{i},e_1^{i}),s(e_1^{i},f). \]}
\[p_{v_{m+1}}(e^i_{m+1},e^i_{m}),\ldots,p_{v_1}(e_1^{i},e_0^{i}),s(e_0^{i},f). \]
  The ABox $\Amc_2$ %is used to
  links variables in rule bodies %the body of a
  %rule
  with their occurrences in rule heads.
  For every pair of rules
  $\varphi_i= v_1 \wedge \ldots \wedge v_m \rightarrow v_{m+1}$ and
  $\varphi_j= w_1 \wedge \ldots \wedge w_n \rightarrow w_{n+1}$, and
  each $1 \leq l \leq m$ with $v_{l} = w_{n+1} $, it contains the
  assertion $t(e_{l-1}^{i} ,e_{n+1}^{j})$. See Figure~\ref{fig:abox}
  for an example.

%  We can now define the query, which intuitively tries to build a
%  proof tree for $g$.
  The existence of a proof tree for $g$, which can be
  limited to depth $|V|$, is expressed using the query
  $q(x,y)=E_{|V|}(x,y)$, with $E_1,E_2,\ldots,E_{|V|}$ %are
   defined
  inductively: % as follows:
  \begin{equation*}
    E_{1}  = \bigcup_{v\in V} \big ( p_v\cdot t \cdot p_v \big )\cdot s\label{eq:4}
  \end{equation*}
  \begin{equation*}
    E_{i}  = \bigcup_{v\in V} \big ( p_v\cdot t \cdot p_v \big )\cdot
    \big ( \tuple{E_{i-1} }\cdot \bigcup_{v\in V} p_v \big )^* \cdot s \qquad (i > 1) \label{eq:5}
  \end{equation*}
  Finally, we let $a_1=e_{1}^1$ and $a_2=f$.
\end{proof}

\section{Concrete Approach for Horn DLs}
% \section{An Algorithm for Optimal Data Complexity}
\label{sec:horn-dls}

% The upper bounds stated in Section~\ref{sec:reductions}
% %% Corollaries~\ref{corollary:cnqs} and~\ref{corollary:nqs}
% are quite general, and as we have seen, in most cases they are
% worst-case optimal.
Our complexity results so far leave a gap for the data complexity of
the \dlc family: we inherit \nlspace-hardness from
%we know it is hard for \nlspace (this holds already for
plain RPQs, % in databases,
but we only have the \ptime upper bound
stemming from Proposition~\ref{prop:eli-reduction}.
% holds already for evaluating a single RPQ over a database, in the
% absence on a DL ontology.
In this section, we close this gap by providing an \nlspace\ upper
bound. % showing completeness for \nlspace.
% SAY SOMETHING POSITIVE IN INTRO AND CONCLUSION The result is quite
% positive, since \nlspace usually considered a significantly better
% data complexity than \ptime.

This section has an additional
goal.
%% Another limitation of
We recall that the upper bounds in Corollaries~\ref{corollary:cnqs}
and~\ref{corollary:nqs} rely on reductions to answering
(C)2RPQs in extensions of $\mathcal{ELI}$, like
Horn-$\mathcal{SHOIQ}$, $\mathcal{ZIQ}$, and $\mathcal{ZOI}$.
Unfortunately, known algorithms for C2RPQ answering in these logics use
automata-theoretic techniques that are best-case exponential and not
considered suitable for implementation. Hence, we want to provide a direct algorithm that may serve
as a basis for practicable techniques.
To this end, we take an existing algorithm for answering C2RPQs in \elh and \dlh\
due to Bienvenu et al.\ (\citeyear{BienvenuOS13})
and show how it can be extended to handle \cnqs and \elhibot\ KBs.

%% \subsection{Canonical Models for \elhibot
For presenting the algorithm in this section, it will be useful to
first recall the canonical model property of \elhibot.

\subsection{Canonical Models}
We say that an \elhibot\ TBox $\Tmc$ is in \emph{normal form} if all
of its concept inclusions are of one of the following forms:
\[
% \begin{array}{l}
A \sqsubseteq \bot \quad A \sqsubseteq \exists r . B \quad \top
\sqsubseteq A \quad B_1 \sqcap B_2 \sqsubseteq A \quad \exists r . B
\sqsubseteq A
%\end{array}
\]
with $A,B,B_1,B_2 \in \cn$ and $r \in \rni$.

By introducing fresh concept names to stand for complex concepts,
every TBox $\Tmc$ can be transformed %rewritten into normal form
in polynomial time into a TBox $\Tmc'$ in normal form that is a model-conservative extension of $\Tmc$. % of the original TBox.
%\footnote{The transformation introduces fresh concept names
%to stand for complex concepts, and yields a TBox that is a model-conservative extension of the original TBox. }.
Hence, in what follows, we assume that \elhibot TBoxes are in
normal form.
%, and recall the construction of the \emph{canonical model}
%of a KB.
%% of a consistent \elhibot\ KBs in normal form.

The domain of the canonical model $\Imc_\mathcal{\Tmc, \Amc}$ of a consistent KB
$\tuple{\Tmc, \Amc}$ consists of all sequences $a r_{1} C_{1} \ldots
r_{n} C_{n}$ ($n \geq 0$) such that:
\begin{itemize}
\item $a \in \ainds(\Amc)$ and $r_i\in\rni$ for each $1 \leq i \leq
  n$;
\item each $C_{i}$ is a finite conjunction of concept names;
\item if $n \geq 1$, then $\Tmc,\Amc \models (\exists r_1 .
  C_1)(a)$; % if $n \geq 1$,
\item for $1 \leq i < n$, $\Tmc \models C_i \sqsubseteq \exists
  r_{i+1}. C_{i+1}$.
  % \item each $r_{i} \in \rn$
  % \item each $C_i$ is a subconcept of \Tmc.
\end{itemize}
\noindent For an
$o\,{\in}\,\Delta^{\Imc_{\Tmc,\Amc}}\,{\setminus}\,\ainds(\Amc)$, we
use $\mathsf{tail}(o)$ to denote its final concept. %% in %% $o$.
%% an object $o \in $ $\Delta^{\Imc_{\Tmc,\Amc}}\setminus
%% \ainds(\Amc)$.
The interpretation $\Imc_{\Tmc,\Amc}$ is then defined as follows:
\begin{align*}
  a^{\Imc_{\Tmc,\Amc}} =    &   \, a \text{ for all } a \in  \ainds(\Amc)  \\
  A^{\Imc_{\Tmc,\Amc}} =   &   \,  \{ a \in \ainds(\Amc) \mid \Tmc,\Amc \models A(a) \} \\
  & \quad \cup \ \{ o \in
  % \mathsf{paths}(\Tmc,\Amc)
  \Delta^{\Imc_{\Tmc,\Amc}}\setminus \ainds(\Amc) \mid
  \Tmc \models \mathsf{tail}(o) \sqsubseteq A \}   \\
  p^{\Imc_{\Tmc,\Amc}} =   &  \,  \{ (a,b) \mid p(a,b) \in \Amc \} \, \cup   \\
  & \, \{ (o_{1},o_{2})
  % \in \Delta^{\Imc_{\Tmc,\Amc}} \times \Delta^{\Imc_{\Tmc,\Amc}}
  \mid
  o_{2} = o_{1} r \, C  \text{ and } \Tmc \models r \sqsubseteq p \} \,\cup   \\
  & \, \{ (o_{2},o_{1})
  % \in \Delta^{\Imc_{\Tmc,\Amc}} \times \Delta^{\Imc_{\Tmc,\Amc}}
  \mid \ o_{2} = o_{1} r \, C \text{ and } \Tmc \models r \sqsubseteq
  p^{-} \}
\end{align*}
\noindent
Observe that $\Imc_{\Tmc,\Amc}$ is composed of a core part containing
the individuals from $\Amc$ and an \emph{an\-on\-ymous part}
consisting of (possibly infinite) trees rooted at the ABox
individuals.  We use $\Imc_{\Tmc,\Amc}|_{o}$ to denote the restriction
of $\Imc_{\Tmc,\Amc}$ to domain elements having $o$ as a prefix.

It is well-known that the canonical model of a consistent \elhibot\ KB
$\Imc_{\Tmc,\Amc}$ can be homomorphically embedded into any model of
$\tuple{\Tmc,\Amc}$. Since \cnqs\ are preserved under homomorphisms,
we have:

\begin{lemma}\label{lift-canmod}
  For every consistent \elhibot\ KB $\tuple{\Tmc,\Amc}$, \cnq\ $q$, and tuple
  $\vec{a}$ of individuals: $\vec{a} \in \ans(q,\tuple{\Tmc,\Amc})$
  if and only if $\vec{a}\in \ans(q, \canmod)$. %q^{\canmod}$.
\end{lemma}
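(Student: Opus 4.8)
The plan is to prove the equivalence in two directions, using the canonical model property of $\elhibot$ together with the fact that $\cnq$s are preserved under homomorphisms. The statement claims $\vec{a} \in \ans(q,\tuple{\Tmc,\Amc})$ iff $\vec{a}\in \ans(q,\canmod)$, where $\canmod = \Imc_{\Tmc,\Amc}$ is the canonical model constructed above.

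For the forward direction (``only if''), I would argue that if $\vec{a}$ is a certain answer, then it is an answer in \emph{every} model of $\tuple{\Tmc,\Amc}$; since $\canmod$ is itself a model of $\tuple{\Tmc,\Amc}$ (this is the standard soundness of the canonical-model construction, which I would take as given), $\vec{a}$ must in particular be an answer over $\canmod$, i.e.\ $\vec{a} \in \ans(q,\canmod)$. This direction is essentially immediate and requires only that $\canmod \models \tuple{\Tmc,\Amc}$.

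For the converse (``if''), suppose $\vec{a}\in \ans(q,\canmod)$, witnessed by a match $\pi$ of $q$ into $\canmod$. Let $\Imc$ be an arbitrary model of $\tuple{\Tmc,\Amc}$. By the cited embedding property, there is a homomorphism $h\colon \canmod \to \Imc$ that preserves the interpretation of individuals, concept names, and role names, and that fixes the ABox individuals. The key point is that $\cnq$ matches are transported along such homomorphisms: composing $\pi$ with $h$ yields a match $h\circ\pi$ of $q$ into $\Imc$ with the same image on the answer variables, so $\vec{a}\in \ans(q,\Imc)$. Since $\Imc$ was arbitrary, $\vec{a}$ is a certain answer.

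The one step that deserves care, and which I expect to be the main obstacle, is justifying that $\cnq$ matches really are preserved under homomorphisms, because $\cnq$s contain nested regular role atoms rather than plain relational atoms. Concretely, I must check that if $(o,o')\in\Int[\canmod]{E}$ for an NRE $E$ and $h$ is a homomorphism, then $(h(o),h(o'))\in\Int[\Imc]{E}$. This would be shown by structural induction on $E$: the base cases for $\sigmaroles$ (including test roles $\ctest{A}$, which are preserved since $h$ preserves concept membership) follow from $h$ being a homomorphism, and the inductive cases for $\cdot$, $\cup$, and ${}^*$ are routine since homomorphisms preserve composition, union, and reachability. The delicate case is the existential test $\ptest{E'}$: here $(o,o)\in\Int[\canmod]{\ptest{E'}}$ means some witness $o''$ satisfies $(o,o'')\in\Int[\canmod]{E'}$, and by the induction hypothesis $(h(o),h(o''))\in\Int[\Imc]{E'}$, so $h(o'')$ witnesses $(h(o),h(o))\in\Int[\Imc]{\ptest{E'}}$. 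Since nesting only ever asserts the \emph{existence} of outgoing paths (a positive, existential condition), it is preserved in the forward direction of a homomorphism, which is exactly what is needed. I would note that the lemma explicitly requires consistency of the KB, so that $\canmod$ is well-defined, and that the homomorphism-preservation argument is precisely why nesting does not break the canonical-model technique.
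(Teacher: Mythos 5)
Your proposal is correct and follows essentially the same route as the paper, which likewise derives the lemma from the homomorphic embeddability of $\canmod$ into every model together with preservation of \cnq{} matches under homomorphisms. The structural induction you sketch for NREs, including the existential-test case, is exactly the (unstated) justification the paper relies on when it asserts that \cnqs{} are preserved under homomorphisms.
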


\subsection{Computing Jump and Final %Generalized
  Transitions}%Computing Partial Paths through Anonymous Part}

A crucial component of our algorithm is to compute relevant partial
paths in a subtree $\canmod|_o$ rooted at an object $o$ in the
anonymous part of \canmod. Importantly,
%When computing such paths, %them,
we also need to remember which parts of the nested automata that %are launched somewhere along the considered path
have been partially navigated below $o$ still need to be
continued.
 This will allow us to `forget' the tree below $o$.

In what follows, it will be convenient use \emph{runs} to talk about
the semantics of {\nnfa}s.
\begin{definition}\label{rundef}
  Let $\Imc$ be an interpretation, and let $(\Abf,s_0,F_0)$ be an
  \nnfa.  Then a \emph{partial run for $\Abf$ %,s_0,F_0)$
    on $\Imc$} % from $e \in \Delta^\I$}
   is a finite node-labelled tree $(T, \ell)$ such that every node is
    labelled
   with an element from $\Delta^\Imc \times (\bigcup_i S_i)$ and for each
  non-leaf node $v$ having label $\ell(v)=(o,s)$ with $s \in
  S_i$, % a state from $\alpha_i$, % belonging to $S_i$, then
  one of the following
  holds: % satisfies one of the following conditions: % satisfying the following conditions:%$x$ with $\ell(x)=(o,s)$
  \begin{itemize}
    % \item the root node is labelled $(e, s_0)$
    % \item if $x$ is an internal node and $\ell(x)=(o,s)$ with $s \in
    %   S_i$, then one of the following holds:
    %   \begin{itemize}
%
  \item $v$ has a unique child $v'$ with $\ell(v')=(o',s')$, and there
    exists
    % a transition
    $(s,\sigma,s') \in \delta_i$ such that $\sigma \in \sigmaroles$ and
    $(o,o') \in \sigma^{\Imc}$;
    % such that the unique child $y$ of $x$ is
    % labelled % has $\ell(y)=
    % $(o',s')$
%
  \item $v$ has exactly two children $v'$ and $v''$ with
    $\ell(v')=(o,s')$ and $\ell(v'')=(o,s'')$, with $s''$ the initial
    state of $\alpha_j$, and there exists a transition $(s, \ptest{j},
    s') \in \delta_i$.
%
    % \end{itemize}
  \end{itemize}
  If $T$ has root labelled $(o_1,s_1)$ and a leaf node labelled
  $(o_2,s_2)$ with $s_1,s_2$ states of the same $\alpha_i$,
  % where $\{s_1, s_2\} \subseteq S_i$ for some $i$,
  then $(T,\ell)$ is called an \emph{$(o_1, s_1, o_2, s_2)$-run}, and
  it is
  % We say that $(T,\ell)$ Such a run is said to be
  \emph{full} if every leaf label $(o',s') \neq (o_2,s_2)$ is
  such that $s' \in F_k$ for some $k$.
  % A \emph{full run} %for $\Abf$ on $\I$
  % is a partial run in which every leaf label $(o,s)$ is such that $s
  % \in F_k$ for some $k$. %$ $\alpha_k$.
  % We say that a (partial or full) run %$T$
  % is an $(o_1, s_1, o_2, s_2)$-run for $\Abf_{s_0,F_0}$ if the root
  % has label $(o_1,s_0)$ and some leaf has label $(o_2,s)$ with $s
  % \in F_0$.
\end{definition}

Full runs provide an alternative characterization of the semantics of
{\nnfa}s in Definition~\ref{def:nnfa-semantics}.

\begin{fact}\label{run-fact}
  For every interpretation $\I$, $(o_1,o_2) \in
  \INT{\Abf_{s_1,\{s_2\}}}$ if and only if there is a full $(o_1,s_1,
  o_2, s_2)$-run for $\Abf$ in $\I$.
\end{fact}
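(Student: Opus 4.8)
The plan is to prove Fact~\ref{run-fact} by a simultaneous induction that mirrors the inductive definition of $\INT{\Abf_{s_0,F_0}}$ in Definition~\ref{def:nnfa-semantics}. The difficulty is that the semantics in that definition is itself defined by a nested induction: membership of a pair in $\INT{\Abf_{s_0,F_0}}$ for the ``top'' automaton $\alpha_l$ depends, through test transitions $\ptest{j_1,\ldots,j_k}$, on membership in $\INT{\Abf_{s',F'}}$ for strictly higher-indexed automata. Since the paper assumes reduced {\nnfa}s, every test transition has the form $\ptest{j}$ with a single target, so the two bullets of Definition~\ref{rundef} exactly match the two cases (role transition vs.\ single test) available in a reduced automaton. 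I would therefore structure the induction on the index depth of the automata, i.e.\ on $n-l$ where $\alpha_l$ is the automaton containing $s_1,s_2$; because test transitions always point to automata with strictly larger index ($l<j_i$), this is a well-founded measure.

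First I would prove the ``if'' direction. Assume there is a full $(o_1,s_1,o_2,s_2)$-run $(T,\ell)$ for $\Abf$ in $\I$. Reading the tree along the unique path of ``non-test'' children from the root to the distinguished leaf labelled $(o_2,s_2)$ yields a sequence $s_1 = p_0, p_1, \ldots, p_k = s_2$ of states of $\alpha_l$ together with domain elements $o_1 = d_0, d_1, \ldots, d_k = o_2$; each consecutive pair arises either from a role transition $(p_{i-1},\sigma,p_i)$ with $(d_{i-1},d_i)\in\INT{\sigma}$, or from a test transition $(p_{i-1},\ptest{j},p_i)$ with $d_{i-1}=d_i$ whose sibling subtree is a run rooted at $(d_{i-1},s_j)$ that must be full except possibly at its own distinguished leaf. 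The key point: by fullness, every leaf other than $(o_2,s_2)$ lands in a final state, so each test-sibling subtree is in fact a full $(d_{i-1},s_j,o',s'_j)$-run with $s'_j\in F_j$. By the induction hypothesis (applied to the strictly higher-indexed $\alpha_j$) this witnesses $(d_{i-1},o')\in\INT{\Abf_{s_j,F_j}}$, which is precisely what the second bullet of Definition~\ref{def:nnfa-semantics} requires for the test to fire. Thus the sequence $s_1 o_1 \cdots o_{k} s_k$ is an accepting sequence in the sense of Definition~\ref{def:nnfa-semantics}, establishing $(o_1,o_2)\in\INT{\Abf_{s_1,\{s_2\}}}$.

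For the ``only if'' direction I would reverse this reading. Given $(o_1,o_2)\in\INT{\Abf_{s_1,\{s_2\}}}$, Definition~\ref{def:nnfa-semantics} supplies an accepting sequence $s_1 d_0 \cdots d_k s_k$ with $s_k = s_2$, where each step uses a role transition or a test transition whose side condition is witnessed by some $(d_{i-1},o')\in\INT{\Abf_{s_j,F_j}}$. I build the run tree by laying down the main branch from these states and elements, and for each test step I invoke the induction hypothesis in the reverse direction to obtain a full $(d_{i-1},s_j,o',s'_j)$-run for some $s'_j\in F_j$, attaching it as the sibling subtree. Fullness of the resulting tree holds because every leaf other than the distinguished $(o_2,s_2)$ either ends a test subtree (final by construction/IH) or is the final state $s_k\in F_0$ of the main branch.

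The main obstacle will be bookkeeping the interaction between the ``full except at the distinguished leaf'' condition and the nesting: one must be careful that when a test transition fires, the corresponding subtree is genuinely \emph{full} (all leaves final) rather than merely having one final leaf, since the test semantics only asks for \emph{existence} of a terminating path into a final state, not a distinguished non-final exit. Formally this means the distinguished leaf of a test subtree coincides with a final state, so the notions align; I would isolate this as a small observation (a full $(o_1,s,o_2,s')$-run with $s'\in F_k$ certifies $(o_1,o_2)\in\INT{\Abf_{s,F_k}}$) and use it uniformly in both directions. Everything else is routine structural manipulation of the run trees.
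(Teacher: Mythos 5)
The paper offers no proof of this Fact --- it is stated as an immediate consequence of Definitions~\ref{def:nnfa-semantics} and~\ref{rundef} --- so there is nothing to compare against; your induction on the nesting depth (well-founded because test transitions $\ptest{j}$ always point to strictly higher-indexed automata) is precisely the argument the authors leave implicit, and both directions go through as you describe. Two details are worth making explicit when writing it out: (i) in the ``if'' direction, the distinguished leaf of a test-sibling subtree lies in $F_j$ specifically, and not merely in some $F_k$, because the state sets of the automata are pairwise disjoint and the main path of that subtree stays inside $S_j$ --- this is what licenses applying the inductive hypothesis to conclude membership in $\Int{\Abf_{s'',F_j}}$ as required by the test clause; (ii) the equivalence tacitly relies on reading Definition~\ref{def:nnfa-semantics} so that $o_{i-1}=o_i$ whenever $\sigma_i$ is a test transition, which is the intended semantics (tests denote subsets of the identity) and is what the run-tree definition hard-codes by labelling both children of a test node with the same domain element.
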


% \meghyn{in previous def, don't know if we should keep the $\ell$ or
% just got with $T$, probably need labelling function for the proofs,
% but maybe not in the main body... Next definition is fine without.}

We use partial runs to characterize when
an \nnfa $\Abf$
%% representing a nested regular expression
can be partially navigated inside a tree
$\canmod|_o$ whose root satisfies some conjunction of concepts
$C$. Intuitively, $\loops$ stores pairs $s_1$, $s_2$ of states of some $\alpha
\in \Abf$
such that a path from
$s_1$ to $s_2$ exists,
while $\floops$ stores states $s_1$ for which a path to some final
state exists, no matter
%% in which we do not need to identify the end point
where the
final state is reached. Both $\loops$ and $\floops$ store a set $\Gamma$ of
states $s$ of other automata nested in $\alpha$, for which a path from
$s$ to a final state remains to be found.

\begin{definition}\label{jumpdef}
  Let $\Tmc$ be an \elhibot\ TBox in normal form and $(\Abf,s_0,F_0)$
  an \nnfa.  The set $\loops$ consists of tuples $(C,s_1, s_2,
  \pathreq)$ where $C$ is either $\top$ or  a conjunction of concept names from~$\Tmc$,
  $s_1$ and $s_2$ are states from $\alpha_i \in \Abf$, and $\pathreq
  \subseteq \bigcup_{j> i} S_j$.  A tuple $(C,s_1,s_2, \pathreq)$
  belongs to $\loops$ if there exists a partial run $(T,\ell)$ of
  $\Abf$ in the canonical model %$\Imc$ for
  of $\tuple{\Tmc, \{C(a)\}}$ that satisfies the following conditions:
  \begin{itemize}
  \item the root of $T$ is labelled $(a,s_1)$;
  \item there is a leaf node $v$ %of $T$
    with $\ell(v)=(a,s_2)$;
    % labelled $(a,s_2)$
    % \item for every leaf node $x$ labelled $(a,s)$: %we have
    %%   \in F_k$ for some $k$, or
    %   $s \in \pathreq$
  \item for every leaf node $v$ %of $T$
    with $\ell(v)=(o,s)\neq
    (a,s_2)$, %that is labelled $(o,s)\neq (a,s_2)$,
    either $s\in F_j$ for some $j > i$, or $o=a$ and $s \in \pathreq$.
  \end{itemize}
  The set $\floops$ contains all tuples  $(C, s_1, F, \pathreq)$
  %$(C, s_1, \pathreq)$ such that
  there is a partial run $(T,\ell)$ of $\Abf$ in the canonical
  model %$\Imc$ for
  of $\tuple{\Tmc, \{C(a)\}}$ that satisfies the following conditions:
  \begin{itemize}
  \item the root of $T$ is labelled $(a,s_1)$;
  \item there is a leaf node $v$ %of $T$
    with $\ell(v)=(o,s_f)$ and $s_f \in F$;
    % of $T$ with label $(o,s_f)$ with $s_f \in F_i$
  \item for every leaf node $v$ %of $T$
    with $\ell(v)=(o,s)$, %with label $(o,s)$,
    either $s$ is a final state in some $\alpha_k$, or $o=a$ and $s
    \in \pathreq$.
  \end{itemize}
\end{definition}

% \begin{definition}\label{jumpdef}
%   Let $\Tmc$ be an \elhibot\ TBox in normal form and
%   $(\Abf,s_0,F_0)$ an \nnfa.  The set $\loops$ consists of tuples
%   $(C,s_1, s_2, \pathreq, \loopreq)$ where $C$ is a conjunction of
%   concept names from~$\Tmc$, $s_1$ and $s_2$ are states from
%   $\alpha_i \in \Abf$, $\pathreq \subseteq \bigcup_{j> i} S_j$, and
%   $\loopreq \subseteq \bigcup_{j> i} S_j \times S_j$.  A tuple
%   $(C,s_1, s_2, \pathreq, \loopreq)$ belongs to $\loops$ if for
%   every ABox $\Amc$ and every $o \in \Delta^{\canmod}$, we have
%   $(o,o) \in (\Abf,s_1,\{s_2\})^\Imc$ \emph{with the witnessing path
%   from $s_1$ to $s_2$ contained within $\canmod|_o$} whenever the
%   following holds:
%   \begin{itemize}
%   \item $o \in C^\Imc$
%   \item for every $s \in \pathreq$ with $s \in S_j$, there is $o'
%     \in \dom$ such that $(o,o') \in (\Abf,s,F_j)^\Imc$
%   \item for every $(s,s') \in \loopreq$, $(o,o) \in (\Abf, s,
%     \{s'\})^\Imc$
%%     then $(o,o) \in (\Abf,s_1,\{s_2\})^\Imc$
%   \end{itemize}
% \end{definition}

\begin{proposition}\label{jump-exptime}
  It can be decided in exponential time if a tuple belongs to $\loops$
  or $\floops$.
  % $(C,s^*,\pathreq) %, \loopreq)
  % \in \loops$.
\end{proposition}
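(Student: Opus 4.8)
The plan is to characterise membership in $\loops$ and $\floops$ as the least solution of a system of closure rules over the space of all candidate tuples, and to compute this least solution by saturation. The structural fact I would rely on is the \emph{regularity} of the canonical model: for an \elhibot\ TBox $\Tmc$ in normal form, the subtree $\canmod|_{o}$ rooted at an anonymous element $o$ is isomorphic to the (single-individual-rooted) canonical model of $\tuple{\Tmc,\{C(a)\}}$ for $C=\mathsf{tail}(o)$, so whether a partial run can be carried out inside a subtree depends only on the tail-type. Since a tuple's first component $C$ ranges over $\top$ and conjunctions of concept names of $\Tmc$, and its obligation set $\pathreq$ ranges over subsets of automata states, there are only exponentially many candidate tuples, while the state components $s_1,s_2$ are polynomial.

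Reading a partial run top-down from its root $(a,s_1)$, the first transition of $\alpha_i$ is of one of three kinds, and these induce the generating rules. (i) The reflexive base case puts $(C,s,s,\emptyset)$ into $\loops$. (ii) A test-role transition $(s_1,\ctest{A},s')\in\delta_i$ with $\Tmc\models C\ISA A$, or a nesting transition $(s_1,\ptest{j},s')\in\delta_i$, keeps the automaton at the root: in the latter case the launched automaton $\alpha_j$ is summarised either by a tuple of $\floops$ (if it reaches a final state) or, if it is deferred at the root for upward continuation, by recording its state in $\pathreq$; the main automaton then continues from $s'$. (iii) A role transition $(s_1,r,t_1)\in\delta_i$ descends to a child $o'=a\,r''\,B$ created because $\Tmc\models C\ISA \exists r''.B$ with $\Tmc\models r''\ISA r$; the computation then loops inside $\canmod|_{o'}$, summarised recursively by a $\loops$/$\floops$ tuple for the tail-type $B$, before a role transition returns it to $a$. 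Concatenating such steps, taking unions of the $\pathreq$-sets, and requiring termination at $(a,s_2)$ yields the closure rules for $\loops$; the rules for $\floops$ are analogous, except the run is allowed to reach a final state at an arbitrary descendant, obtained by composing $\loops$-steps with a final-reaching excursion summarised by $\floops$ for a child type.

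I would then establish soundness and completeness of these rules. Soundness follows by induction on the derivation: each rule application pastes the partial runs witnessing its premises—relocated into the appropriate child subtrees using the isomorphism above—into a single partial run witnessing the conclusion. Completeness follows by induction on the size of a witnessing partial run: the first transition out of the root fixes the rule that must have produced the tuple, and the sub-runs hanging off it are strictly smaller, hence derivable. For the complexity bound, since there are at most exponentially many candidate tuples, the least fixpoint is reached after at most exponentially many rounds, and each round checks, per candidate, polynomially many rule instances whose side conditions are memberships of already-derived tuples together with standard entailments of the forms $\Tmc\models C\ISA A$, $\Tmc\models C\ISA \exists r.B$, and $\Tmc\models r\ISA r'$, all decidable in exponential time for \elhibot. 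Hence the whole saturation runs in exponential time and membership in $\loops$ or $\floops$ can be read off.

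The main obstacle is the correct threading of the $\pathreq$ components through the two-way navigation, which is precisely what the obligation sets were introduced for. A nested automaton launched inside a child subtree may descend, come back up, and be \emph{deferred} at the root of that subtree; when the enclosing excursion moves back up to $a$, each such deferred state must be \emph{resumed} from the child and either driven to a final state or carried up and recorded in the parent's $\pathreq$. Formulating rule (iii) so that these deferred obligations are resumed correctly in the presence of inverse roles—forbidding a non-final computation to get stuck at a proper descendant while allowing it to be deferred only at the current root—and then checking that the finite-run induction for completeness respects this distinction is the delicate heart of the argument.
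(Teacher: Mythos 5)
Your route is genuinely different from the paper's. You compute $\loops$ and $\floops$ directly as a least fixpoint over the (exponentially large) space of candidate tuples, deriving closure rules from a case analysis on the first transition of a partial run and using the regularity of $\canmod$ to summarise excursions into child subtrees by tuples for the child's tail-type. The paper instead \emph{reduces} membership to subsumption in an augmented \elhibot\ TBox: it introduces a concept name $A_s$ per state, adds axioms $\top \sqsubseteq A_f$ (for final states of lower automata), $\exists r.A_{s'} \sqsubseteq A_s$, $A_{s'}\sqcap B \sqsubseteq A_s$, and $A_{s'}\sqcap A_{s''}\sqsubseteq A_s$ mirroring the transitions, and shows that $(C,s_1,s_2,\pathreq)\in\loops$ iff $\Tmc'\models (C\sqcap A_{s_2}\sqcap\bigsqcap_{s\in\pathreq}A_s)\sqsubseteq A_{s_1}$; the \exptime\ bound is then inherited from \elhibot\ reasoning, and correctness is established by a homomorphism argument in one direction and a chase construction in the other. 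What the paper's encoding buys is precisely that the two-way navigation and the threading of deferred obligations never have to be composed by hand: the concept $A_s$ means ``some partial run from here discharges all its leaves,'' and the DL reasoner performs your fixpoint implicitly. What your approach would buy is a self-contained algorithm that does not route through a subsumption oracle.

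That said, the step you yourself call the ``delicate heart'' is not actually carried out, and it is not a routine detail. In your rule (iii), the excursion into $\canmod|_{o'}$ is summarised by a tuple with obligation set $\pathreq'$ deferred at $o'$; but a nested automaton launched deep inside that subtree may climb through $o'$ up to $a$, and its continuation from $(o',u)$ may re-enter $\canmod|_{o'}$, launch further nested automata, and generate \emph{new} obligations at $o'$ before its $\alpha_k$-path leaf finally lands in $F_k$ or at $a$ with a state in the parent's $\pathreq$. Discharging $\pathreq'$ therefore requires either an inner fixpoint on the obligation sets or a closure condition on $\pathreq'$, together with an auxiliary decomposition of each resumption into a $\loops$-piece (ending at $a$ in $\pathreq$) or an $\floops$-piece (ending in $F_k$); your completeness induction ``on the size of the witnessing run'' also has to be run simultaneously over all tail-types $C$, since the excursion pieces live in a different canonical model. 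None of this is fatal---the pieces are sub-runs, so the induction can be made well-founded---but until these composition rules are written down and proved sound and complete, the argument is a plan rather than a proof. (A minor point: per candidate conclusion, rule (iii) has exponentially many instances, not polynomially many, since the child tail-type $B$ ranges over conjunctions of concept names; the overall \exptime\ bound survives.)
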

\begin{proof}[Proof idea]
  We first show how to use TBox reasoning to decide whether $(C,s_1,
  s_2, \pathreq) % \loopreq)
  \in \loops$.
  % First, f
  For every $\alpha_j \in \Abf$, we introduce a fresh concept name
  $A_s$ for each state $s \in
  S_j$. % and a fresh concept name $A_{s,s'}$ for each pair of states $s,s'$.
  % states $u,u'$ from $\alpha_i$, we introduce fresh concept names
  % $A_u$, $A_{u'}$ and $A_{u,u'}$.  we introduce a fresh concept name
  % $A_u$ for each state $u$ which appears in some $\alpha \in \Abf$,
  % and fresh concept names $A_{u,u'}$ for every pair of states $u,u'$
  % which appear in the same $\alpha \in \Abf$.
  Intuitively, $A_s$ expresses that there is an outgoing path that
  starts in $s$ and reaches a final state.
  % and $A_{s,s'}$ signifies that there is a loop starting in state
  % $s$ and ending in $s'$.  Let
  If $\{s_1,s_2\} \subseteq S_i$, then
  we %$\alpha_i$ is the automaton containing $s^*$,
  add the following inclusions to $\Tmc$:
  % add to %extend
  % $\Tmc$ with the following inclusions:
  \begin{itemize}
  \item $\top \sqsubseteq A_s$, for every $s\in F_j$ with $j> i$;
    % is a final state in some $\alpha \in \Abf$
  \item $\exists r. A_{s'} \sqsubseteq A_s$, whenever $(s,r,s') \in
    \delta_i$ with $r \in \rni$;
    % if some automaton in $\Abf$ contains a transition $(u,r,v)$
    % where $r \in \rni$
  \item $A_{s'} \sqcap B \sqsubseteq A_s$, whenever
    $(s,B?,s')\in\delta_i$;
  \item $A_{s'} \sqcap A_{s''} \sqsubseteq A_s$, whenever
    $(s,\ptest{j},s')\in\delta_i$ and $s''$ is the initial state of
    $\alpha_j$.
    % if some automaton in $\Abf$ contains a transition $(u,[j],v)$
    % with $w$ the initial state of $\alpha_j$
    % \item $A_{s,s'} \sqcap A_{s',s''} \sqsubseteq A_{s,s''}$,
    %   whenever $s,s',s'' \in
    %   S_i$ % are states in some $\alpha \in \Abf$
    % \item $\exists r. A_{s,s'} \sqsubseteq A_{u,u'}$, whenever
    %   $(u,t,s), (s',t',u') \in \delta_i$, $t,t' \in \rni$, $\Tmc
    %   \models r \sqsubseteq t$, and $\Tmc \models r^- \sqsubseteq
    %   t'$
  \end{itemize}
  Let $\Tmc'$ be the resulting TBox. In the long version, we show that
  $(C,s_1, s_2, \pathreq) % \loopreq)
  \in \loops$ iff
$$\Tmc' \models (C \sqcap A_{s_2}
\sqcap \bigsqcap_{s \in \pathreq}
A_s) % \sqcap_{(s,s') \in \loopreq} A_{s,s'})
\sqsubseteq A_{s_1}.$$
To decide if $(C, s_1, F, \Gamma) \in \floops$, we
must also include in $\Tmc'$ the following inclusions:
\begin{itemize}
\item $\top \sqsubseteq A_s$, for every $s\in F$.
\end{itemize}
We then show that $(C,s_1, F, \pathreq) % \loopreq)
\in \floops$ iff
$$\Tmc' \models (C \sqcap \bigsqcap_{s \in \pathreq} A_s) % \sqcap_{(s,s') \in \loopreq} A_{s,s'})
\sqsubseteq A_{s_1}.$$
To conclude the proof, we simply note that both
problems can be decided in single-exponential time, as TBox reasoning
in \elhibot\ is known to be \exptime-complete. %
\end{proof}

\subsection{Query Rewriting}

%%%Now we describe the %% main %%query rewriting  algorithm.
The core idea of our query answering algorithm %%Its central idea
is to rewrite a given \cnq $q$ into
a set of queries $Q$ such that the answers to $q$ and the union of the
answers for all $q' \in Q$ coincide. However, for evaluating each $q'
\in Q$, we only need to consider mappings from the variables to the
individuals in the core of \canmod.  Roughly, a rewriting step makes
some assumptions about the query variables that are mapped deepest
into the anonymous part and, using the structure of the canonical
model, generates a query whose variables are matched one level closer
to the core.  Note that, even when we assume that no variables are
mapped below some element $o$ in \canmod, the satisfaction of the
regular paths may require to go below $o$ and back up in different
ways. This is handled using jump and final transitions.
The query rewriting algorithm is an adaptation of the algorithm for C2RPQs in
\cite{BienvenuOS13}, to which the reader may refer for
more detailed explanations and examples.

The query rewriting algorithm is presented in
Figure~\ref{fig-firstrewritingstep}.
In the algorithm, we use atoms of
the form $\ptest{A_{s,F}}(x)$, which are semantically equivalent to
$A_{s,F}(x,z)$ for a variable $z$ not occurring anywhere in the query.
This alternative notation will spare us additional variables and make
the complexity arguments simpler.  % We also use atoms of the form
% $A(t)$ with $A \in\cn$ as a shortcut for an atom $\ctest{A}(t,t)$.
To
slightly simplify the notation, we may write $\Abf_{s,s'}$ instead of
$\Abf_{s,\{s'\}}$.

\begin{figure}[t!]
   \scalebox{0.98}{
  \begin{boxedminipage}{8.45cm}
    \noindent \textsc{Procedure} $\rewrite$\\[1mm] %(q,\Tmc)$
    Input: \cnq $q$, \elhibot\ TBox $\Tmc$ in normal form
    \setlist{leftmargin=*,labelindent=0cm}
    \begin{enumerate}%[\leftmargin=1cm,\itemindent=2cm]
    \item Choose either to output $q$ or to continue.
    \item\label{item:chooseVars} Choose a non-empty set $\leaf
      \subseteq \vars(q)$ and $y \in \leaf$. Rename all variables in
      $\leaf$ to $y$.
    \item\label{item:chooseType} Choose a conjunction $C$ of concept
      names from $\Tmc$ such that $\Tmc \models C \sqsubseteq B$
      whenever $B(y)$ is an atom of $q$.  Drop all such atoms from
      $q$.
    \item\label{item:statesStep} For each atom $\at \in q$ of the form
      $\ptest{\Abf_{s_0,F}}(t)$ or $\Abf_{s_0,F}(t,t')$ with $y \in
      \{t,t'\}$:
      \begin{enumerate}
      \item let $\alpha_i \in \Abf$ be the automaton containing $s_0,F$
      \item choose a sequence $s_1, \ldots, s_{n-1}$ of distinct
        states from $S_i$ and some $s_n \in F$
      \item replace $\at$ by the atoms $\Abf_{s_0,s_1}(t,y)$,
        $\Abf_{s_1,s_2}(y,y)$, \ldots, $\Abf_{s_{n-2},s_{n-1}}(y,y)$,
        and
        \begin{itemize}%\begin{asparaitem}
        \item $\Abf_{s_{n-1},s_n}(y,t')$ if $\at = \Abf_{s_0,F}(t,t')$,
          or
        \item $\ptest{\Abf_{s_{n-1},s_n}}(y)$ if $\at =
          \ptest{\Abf_{s_0,F}}(y)$.
        \end{itemize}%\end{asparaitem}
      \end{enumerate}
    \item\label{item:conceptstep} For each atom $\at_j$ of the form
      $\Abf_{s_{j},s_{j+1}}(y,y)$ or $\ptest{\Abf_{s_{j},s_{j+1}}}(y)$
      in $q$, either do
      nothing, %% and go to Step~\ref{item:chooseStep},
      or:
      \begin{itemize}
        % \begin{asparaitem}
      \item Choose some $(C,s_{j},s_{j+1},\pathreq) \in \loops$ if
        $\at_j = \Abf_{s_{j},s_{j+1}}(y,y)$.
      \item Choose some
      $(C,s_{j},\!\{s_{j+1}\},\pathreq) \in \floops$ if $\at_j =
        \ptest{\Abf_{s_{j},s_{j+1}}}(y)$.
      \item Replace $\at_j$ by $\{
        \ptest{\Abf_{u,F_k}}(y)\,{\mid}\,u\,{\in}\,\pathreq\,{\cap}\,S_k,
        \}$.
        %% u\,{\not\in}\,F_k \}$. \notemo{do we need to rule out $u
        %% \in F_k$?}
        % \end{asparaitem}
      \end{itemize}
    \item\label{item:chooseStep} Choose a conjunction $D$ of concept
      names from $\Tmc$ and $r, r_{1}, r_{2} \in \rni$ such that:
      \begin{enumerate}
      \item $\Tmc \models D \sqsubseteq \exists r. C$,
        $\Tmc \models r \sqsubseteq r_{1}$, and $\Tmc \models r
        \sqsubseteq r_{2}$.
      \item For each atom $\Abf_{u,U}(y,t)$ of $q$ with $u \in S_i$,
        there exists $v \in S_i$ such that $(u, r_1^-, v)\in \delta_i$
      \item For each atom $\Abf_{u,U}(t,y)$ of $q$ with $u \in S_i$,
        there exists $v \in S_i$ and $v' \in U$
        with %and some $(\Omega,\Theta) \in \Sigma$ with
        $(v,r_2, v') \in \delta_i$.
      \item For each atom $\ptest{\Abf_{u,U}}(y)$ of $q$ with $u \in
        S_i$, there exists $v \in S_i$ such that $(u, r_1^-, v)\in
        \delta_i$.
      \end{enumerate}
      For atoms $\Abf_{u,U}(y,y)$, both (b) and (c) apply.
    \item\label{item:makeStep} Replace
      \begin{itemize}
      \item each atom $\Abf_{u,U}(y,t)$ with $t\neq y$ by
        $\Abf_{v,U}(y,t)$,
      \item each atom $\Abf_{u,U}(t,y)$ with $t\neq y$ by
        $\Abf_{u,v}(y,t)$,
      \item each atom $\Abf_{u,U}(y,y)$ by atom $\Abf_{v,v'}(y,y)$,
        and
      \item each atom $\ptest{\Abf_{u,U}}(y)$ by atom
        $\ptest{\Abf_{v,U}}(y)$
      \end{itemize}
      with $v,v'$ as in Step~\ref{item:chooseStep}.
      % Remove all atoms $\Abf_{u,U}(t,t')$ or $\ptest{\Abf_{u,U}}(t)$
      % with $u \in U$. \notemo{do we need to remove empty atoms?}
    \item\label{item:addvarStep} Add $A(y)$ to~$q$ for each $A \in D$
      and return to Step 1.
    \end{enumerate}
  \end{boxedminipage}}
  \caption{Query rewriting procedure $\rewrite$.\vspace*{-.4cm}}
  \label{fig-firstrewritingstep}
\end{figure}

The following proposition states the correctness of the rewriting
procedure. Its proof follows the ideas outlined above and can be found in the
appendix of the long version. Slightly abusing notation, we will
also use  $\rewrite(q,\Tmc)$ to denote the set all of queries that can be
obtained by an execution of the rewriting algorithm on $q$ and $\Tmc$.

% Correctness
\begin{proposition}\label{rewrite-correctness}
  Let $\langle \Tmc, \Amc\rangle$ be an \elhibot\ KB and $q(\vec{x})$
  a C2NRPQ.  Then $\vec{a} \in\ans(q, \langle \Tmc, \Amc\rangle)$ iff  there exists $q' \in
  \rewrite(q,\Tmc)$ and a match $\pi$ for $q'$ in $\canmod$ such that $\pi(\vec{x})=\vec{a}$ and $\pi(y) \in
  \ainds(\Amc)$ for every variable $y$ in $q'$.
\end{proposition}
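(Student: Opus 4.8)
The plan is to first reduce the statement to a purely model-theoretic one about matches in the canonical model. By Lemma~\ref{lift-canmod}, $\vec a\in\ans(q,\tuple{\Tmc,\Amc})$ iff $\vec a\in\ans(q,\canmod)$, so it suffices to show that $q$ has a match $\pi$ in $\canmod$ with $\pi(\vec x)=\vec a$ iff some $q'\in\rewrite(q,\Tmc)$ has such a match that sends every variable into $\ainds(\Amc)$. Throughout I would use Fact~\ref{run-fact} to replace each role atom match by a full run, and exploit the fact that $\canmod$ splits into a core over $\ainds(\Amc)$ and anonymous trees, where for any anonymous $o$ the subtree $\canmod|_o$ is isomorphic to the canonical model of $\tuple{\Tmc,\{\mathsf{tail}(o)(a)\}}$. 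This isomorphism is exactly what links runs confined to $\canmod|_o$ to the sets $\loops$ and $\floops$ of Definition~\ref{jumpdef}.

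For the \emph{completeness} (``only if'') direction I would argue by well-founded induction on the finite multiset $\{\,d(\pi(z))\mid z\in\vars(q)\,\}$, where $d(o)=0$ for core elements and $d(a\,r_1 C_1\cdots r_n C_n)=n$ for anonymous ones, showing that one pass through Steps 2--8 strictly decreases this multiset (in the standard multiset ordering) while preserving the existence of a match. Given $\pi$ with some variable mapped into the anonymous part, pick a deepest element $o$ in the image, let $\leaf=\pi^{-1}(o)$ and $C=\mathsf{tail}(o)$; these drive the choices in Steps 2--3. Since $o$ is deepest, no variable lies strictly below it, so cutting each full run at its visits to $o$ yields the state sequence guessed in Step~4, whose intermediate loop pieces $\Abf_{s_j,s_{j+1}}(y,y)$ are each realized either entirely inside $\canmod|_o$ or by an excursion that leaves $o$ upward. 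For the former I would invoke Step~5: such a sub-run, possibly spawning nested automata, is exactly what membership of $(C,s_j,s_{j+1},\pathreq)$ in $\loops$ (and of $(C,s_j,\{s_{j+1}\},\pathreq)$ in $\floops$ for test atoms) records, with the nested automata escaping $\canmod|_o$ upward collected in $\pathreq$ and re-exposed as atoms $\ptest{\Abf_{u,F_k}}(y)$; the latter loops are left untouched by Step~5 and handled in subsequent iterations. The single up-steps from $o$ to its parent then supply a role $r$ with $\Tmc\models D\sqsubseteq\exists r.C$, for $D$ the parent's type, together with the transitions demanded by Steps~6--7, and Step~8 records $D$ via the atoms $A(y)$. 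Reassigning $y$ (and hence all of $\leaf$) to the parent of $o$ yields a match of the rewritten query with strictly smaller depth multiset; iterating until every variable sits in the core produces the desired $q'$.

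For the \emph{soundness} (``if'') direction I would show, by induction on the length of the rewriting, that each individual step is invertible: if $q_{k+1}$ results from $q_k$ by one application of Steps 2--8 and $q_{k+1}$ has a match $\pi_{k+1}$ in $\canmod$, then $q_k$ has a match with the same restriction to $\vec x$. The crucial reconstruction is at Step~6: the match sends $y$ to some element $o'$ whose type contains $D$ (guaranteed by the atoms $A(y)$, $A\in D$), and since $\Tmc\models D\sqsubseteq\exists r.C$ there is a child $o=o'\,r\,C$ of type $C$ in $\canmod$; setting the new match to send $y$ to $o$, and undoing the state advances of Step~7 using conditions (b)--(d), recovers a match of $q_k$. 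The $\loops$/$\floops$ replacements of Step~5 are sound here because membership guarantees that the required partial runs actually exist inside $\canmod|_o$, and the suspended obligations in $\pathreq$ are discharged by the very test atoms introduced. The base case is Step~1 outputting $q$, where a match sending all variables into the core is already a match of $q$ in $\canmod$.

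The step I expect to be the main obstacle is the bookkeeping for \emph{nested} automata across the boundary of $\canmod|_o$. A full run may launch a nested automaton at $o$ that navigates back up out of the subtree, and that nested run may itself spawn further nested runs; making the correspondence between a global run in $\canmod$ and its decomposition into the part ``contained below $o$'' (certified by $\loops$/$\floops$) and the parts ``suspended at $o$ and continued later'' (tracked by $\pathreq$ and the new test atoms) completely precise --- and checking that the reduced-\nnfa assumption keeps each test a single obligation --- is the delicate heart of the argument and the place where the inductive invariant must be stated with the most care, since the non-nested skeleton of the construction is already known to be correct from~\cite{BienvenuOS13}.
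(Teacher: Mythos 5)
Your proposal follows essentially the same route as the paper: the paper also splits the statement into a soundness lemma showing that a single pass through Steps 2--8 is invertible (reconstructing the child $o = \pi(y)\,r\,C$ from $\Tmc\models D\sqsubseteq\exists r.C$ and undoing the Step-7 state advances), and a completeness lemma that lifts a match one level toward the core by choosing $\leaf=\pi^{-1}(o)$ for a prefix-maximal anonymous $o$, cutting runs at their visits to $o$, and certifying the confined pieces via $\loops$/$\floops$ with $\pathreq$ re-exposed as test atoms. The delicate point you flag --- decomposing a run into the part below $o$ and the suspended obligations in $\pathreq$ --- is exactly where the paper's detailed claims do their work, so your plan is on target.
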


% \todo{add text explaining the complexity bounds, write statement}

We note that the query rewriting does not introduce fresh terms. Moreover, it
employs an at most quadratic number of linearly sized
\nnfa{}s, obtained %%which result
from the \nnfa{}s of the input  query. Thus, the size of each $q'\in
\rewrite(q,\Tmc)$ is polynomial in the size of $q$ and $\Tmc$. Given
that all the employed checks in Figure~\ref{fig-firstrewritingstep}
can be done in exponential time (see Proposition~\ref{jump-exptime}),
we obtain the following.

\begin{proposition}\label{rewrite-complexity}
  The set $\rewrite(q,\Tmc)$ can be computed in exponential time in
  the size of $q$ and $\Tmc$.
\end{proposition}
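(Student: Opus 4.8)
The plan is to view $\rewrite(q,\Tmc)$ as the set of queries reachable from $q$ under the (nondeterministic) rewriting procedure of Figure~\ref{fig-firstrewritingstep}, and to compute it by a standard fixpoint/worklist exploration of this reachability space. Two ingredients drive the bound: a bound on the total number of distinct reachable queries, and a bound on the cost of generating and validating the successors of a single query. Recall from the discussion preceding the statement that the procedure never introduces fresh terms, that all \nnfa{}s occurring in rewritten atoms are among the at most quadratically many, linearly sized \nnfa{}s derived from the atoms of $q$, and that each $q' \in \rewrite(q,\Tmc)$ has polynomial size.

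First I would bound the number of distinct queries. Since no fresh terms are ever created, every term occurring in any reachable query is among the terms of $q$; all automaton fragments $\Abf_{s,s'}$ have their components drawn from the polynomially many states of the fixed \nnfa{}s; and every concept name comes from $\Tmc$. Consequently the set $\mathcal{U}$ of atoms that can ever occur --- those of the forms $A(y)$, $\Abf_{s,s'}(t,t')$, and $\ptest{\Abf_{s,F}}(y)$ with components drawn from these polynomially bounded pools --- has polynomial cardinality~$N$. As a query is, up to the idempotence of conjunction, just a subset of $\mathcal{U}$, there are at most $2^{N}$ distinct queries overall, i.e.\ exponentially many, and each has at most $N$ atoms.

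Next I would analyze a single iteration. Starting from a fixed query $q'$, Steps~2--8 branch over a choice of $\leaf$ and $y$, of conjunctions $C$ and $D$ of concept names of $\Tmc$, of sequences of distinct automaton states, of tuples from $\loops$ and $\floops$, and of roles $r,r_1,r_2 \in \rni$. Each of these ranges is at most exponentially large (the conjunctions $C,D$ being the only source of exponential branching, as there are at most $2^{k}$ of them for $k$ the number of concept names in $\Tmc$), so $q'$ has at most exponentially many one-step successors. For each choice, the side conditions are entailment checks such as $\Tmc \models C \sqsubseteq B$ and $\Tmc \models D \sqsubseteq \exists r.C$ together with membership tests in $\loops$ and $\floops$; by Proposition~\ref{jump-exptime} and the \exptime-completeness of \elhibot\ reasoning, each such check runs in exponential time. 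Hence all successors of one query can be enumerated and validated in exponential time, since the product of two exponential quantities is again exponential.

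Finally I would assemble the global bound. The algorithm maintains a set $R$ of discovered queries, initialised to $\{q\}$, together with a worklist; it repeatedly removes a query, generates and validates its successors as above, and inserts those not yet present in~$R$. Since $|R| \le 2^{N}$ and each query is expanded exactly once, the loop runs at most exponentially often; each expansion costs exponential time, and membership tests against the exponentially many polynomial-size queries in $R$ are likewise exponential. Multiplying the exponentially many iterations by the exponential cost per iteration yields a single exponential running time in $|q|+|\Tmc|$, as required. The one point demanding care is termination: it relies on identifying queries up to idempotence and on the invariant that no fresh terms or automata are ever introduced, which confines the entire search to the exponentially bounded universe $2^{\mathcal{U}}$; absent this, the exponential branching factor could a priori produce a super-exponential or non-terminating search.
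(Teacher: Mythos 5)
Your proposal is correct and follows essentially the same route as the paper, which justifies this proposition by the short argument preceding it: no fresh terms are introduced, all automata are drawn from a polynomial pool of states, hence every rewritten query has polynomial size and there are only exponentially many of them, and all side conditions (including membership in $\loops$ and $\floops$) are decidable in exponential time by Proposition~\ref{jump-exptime}. Your worklist formulation and the explicit atom universe $\mathcal{U}$ merely make that sketch precise; the only nitpick is that the conjunctions $C,D$ are not the sole source of exponential branching (the choice of $\leaf$ and the sequences of distinct states in Step~4 also branch exponentially), but this does not affect the bound.
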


% \begin{proof}
% \todo{move (simplified version of) this to txt before proposition}
%  The algorithm only adds atoms of two forms:
%   \begin{inparaenum}[(i)]
%   \item $\Abf_{u,U}(t,t')$ and $\ptest{\Abf_{u,U}}(y)$, with $t,t',y$
%     terms from $q$, $\Abf$ a set of automata from $q$, $u$ a state
%     from some automata in $q$, and $U$ either a single state or the
%     set of final states from some automata in $q$; and
%   \item $A(y)$, with $y$ a variable from the input $q$, and $A$ a
%     concept name from $\Tmc$.
%   \end{inparaenum}
%   The number of atoms of both types is polynomially bounded, and the
%   different queries that can be obtained by combining original atoms
%   and atoms of this type is single exponential in $q$ and $\Tmc$.
% \end{proof}

\subsection{Query Evaluation}

In Figure \ref{fig-evalatom}, we present an algorithm \evalatom\ for
evaluating N2RPQs.  The idea is similar to the standard
non-deterministic algorithm for deciding reachability: we guess
a sequence $(c_0,s_0) (c_1, s_1) \cdots (c_m, s_m)$ of individual-state pairs,
keeping only two successive elements in memory at any time.
Every element $(c_{i+1}, s_{i+1})$ must be reached from the
preceding element $(c_i,s_i)$ by a single normal, jump, or final
transition.
Moreover, in order to use a jump or final transition, we must ensure that its associated
conditions are satisfied. To decide if the current individual belongs to $C$, we can
employ standard reasoning algorithms, but
to determine whether an outgoing path exists for one of the states in $\Gamma$,
we must make a recursive call to \evalatom.
Importantly, these recursive calls involve ``lower" automata, and
so the depth of recursion is bounded by the number of automata in the N2RPQ
(and so is independent of $\Amc$).
It follows that the whole procedure can be implemented in non-deterministic
logarithmic space in $|\Amc|$, if we discount the concept and role membership tests.
By exploiting known complexity results for instance checking in  \dlh\ and \elhibot,
we obtain:

%We can show:

%\begin{itemize}
%\item similar to standard RPQ evaluation - guess a path in the ABox
%\item main difference is that in addition to normal transitions, which
%  correspond to taking a step along an edge in the ABox, we can also
%  use jump transitions. However, a jump transition can only be used if
%  its associated conditions are satisfied.  We can check .... by
%  making recursive calls to \evalatom. Importantly, these recursive
%  calls involve ``lower" automata, and so the depth of recursion is
%  bounded by the number of automata in the N2RPQ.
%\end{itemize}

\def\current{\mathsf{current}} \def\counter{\mathsf{count}}
\def\maxcount{\mathsf{max}} \def\yes{\ensuremath{\mathtt{yes}}}
\def\no{\ensuremath{\mathtt{no}}} \def\anon{\mathsf{anon}}

\begin{figure}[t!]
   \scalebox{0.98}{
  \begin{boxedminipage}{8.45cm}
    \noindent \textsc{Procedure} $\evalatom$\\[1mm]
    Input: \nnfa $(\Abf,s_0,F_0)$, % with nesting depth $k$,
    \elhibot\ KB $\Kmc= \langle \Tmc, \Amc \rangle$ in \\
    \hspace*{.8cm} normal form, %and
    $(a,b) \in \ainds(\Amc) \times (\ainds(\Amc) \cup \{\anon\})$
    \setlist{leftmargin=*,labelindent=0cm}
    \begin{enumerate}%[\leftmargin=0cm]%,\itemindent=2cm]
    \item Let $i$ be such that $s_0 \in S_i$, and set $\maxcount =
      |\Amc| \times |S_i|+1$.
    \item Initialize $\current=(a,s_0)$ and $\counter=0$.
    \item While $\counter < \maxcount$ and $\current \neq (b,s_f)$
      for %all %every
      $s_f \in F_0$
      \begin{enumerate}
      \item Let $\current =
        (c,s)$.  % and $i$ be such that $s \in S_i$.
      \item Guess %Choose non-deterministically
        a pair $(d,s') \in (\ainds(\Amc) \cup \{\anon\})\times S_i$
        such that one of the following holds:
        \setlist{leftmargin=0.5cm}%,labelindent=0cm}
        \begin{enumerate}
        \item $d \in \ainds(\Amc)$ and there exists %a transition
          $(s, \sigma, s') \in \delta_i$ with $\sigma \in \sigmaroles$ such
          that $(c,d) \in \sigma^{\canmod}$
          % \item $d=c$ and there is $(s, \langle j \rangle, s') \in
          %   \delta_i$ such that:
%$$\evalatom((\Abf,u_0,F_j), \Kmc, (c,c)) = \yes$$
%where $u_0$ is the initial state of $\alpha_j$.
%%%% LONGER
% \item $d=c$ and $\loops$ contains a tuple $(C,s, s',\pathreq)$ such
%   that: \setlist{leftmargin=0.35cm}
%   \begin{itemize}
%   \item[$-$] $c \in C^{\canmod}$
%   \item[$-$] for every $j>i$ and every $u \in \pathreq \cap S_j$,
%     there exists $d' \in \ainds(\Amc)$ such that
%$$\evalatom((\Abf,u,F_j), \Kmc, (c,d')) = \yes$$
%% \item[$-$] for every $(u,u') \in \loopreq$,
%%$$\evalatom((\Abf,u,\{u'\}), \Kmc, (c,c)) = \yes$$
% \end{itemize}
% \item $d=\anon$, $s' \in F_0$, and $\floops$ contains a tuple
%   $(C,s,\pathreq)$ such that: \setlist{leftmargin=0.35cm}
%   \begin{itemize}
%   \item[$-$] $c \in C^{\canmod}$
%   \item[$-$] for every $j>i$ and every $u \in \pathreq \cap S_j$,
%     there exists $d' \in \ainds(\Amc) \cup \{\anon\}$ such that
%$$\evalatom((\Abf,u,F_j), \Kmc, (c,d')) = \yes$$
% \end{itemize}
%%% END LONGER
        \item $d=c$ and $\loops$ contains a tuple $(C,s, s',\pathreq)$
          such that $c \in C^{\canmod}$ and for every $j>i$ and every
          $u \in \pathreq \cap S_j$, %there exists $d' \in \ainds(\Amc)\cup \{\anon\}$ such that
$$\evalatom((\Abf,u,F_j), \Kmc, (c,\anon)) = \yes$$
\item $d=\anon$, $s' \in F_0$, and $\floops$ contains a tuple
  $(C,s,F_0,\pathreq)$ such that $c \in C^{\canmod}$ and for every $j>i$
  and every $u \in \pathreq \cap S_j$,
  % there exists $d' \in \ainds(\Amc) \cup \{\anon\}$ such that
$$\evalatom((\Abf,u,F_j), \Kmc, (c,\anon)) = \yes$$
\end{enumerate}
\item Set $\current = (d,s')$ and increment $\counter$.
\end{enumerate}
\item If $\current =(d,s_f)$ for some $s_f \in F_0$, and either $b=d$
  or $b=\anon$, return $\yes$. Else return $\no$.

  % If $\current =(b,s_f)$ for some $s_f \in F_0$, return $\yes$,\\
  % else return $\no$.
%
  % \item Initialize a stack $\stack$ and push $\{(a,s_0)\}$ onto
  %   \stack.
  % \item While $\top(\stack) \neq \{(b,s_f)\}$ for $s_f \in F_0$
  %   \begin{enumerate}
  %   \item Let $(c,s)$ from the set at top of stack
  %   \item ....
  %   \end{enumerate}
\end{enumerate}
\end{boxedminipage}}
\caption{\nq\ evaluation procedure $\evalatom$.\vspace*{-.2cm}}
\label{fig-evalatom}
\end{figure}

\begin{figure}[t!]
   \scalebox{0.98}{
  \begin{boxedminipage}{8.45cm}
    \noindent \textsc{Procedure} $\evalquery$\\[1mm]
    Input: Boolean \cnq\ $q$, % with nesting depth $k$,
    \elhibot\ KB $\Kmc= \langle \Tmc, \Amc \rangle$ in\\
    \hspace*{.85cm} normal form \setlist{leftmargin=*,labelindent=0cm}
    \begin{enumerate}%[\leftmargin=0cm]%,\itemindent=2cm]
    \item Test whether $\Kmc$ is satisfiable, output $\yes$ if not.
      % \item Remove all axioms of the form $C \sqsubseteq \bot$ from
      %   $\Tmc$.
    \item Set $Q = \rewrite(q,\Tmc)$. Replace all atoms in $Q$ of
      types $C(a)$, $R(a,b)$ by equivalent atoms of type
      $\Abf_{s_0,F_0}(a,b)$.
    \item Guess some $q' \in Q$ and an assignment $\vec{a}$ of
      individuals to the quantified variables $\vec{v}$ in $q'$
      % For every $q' \in Q$ and every possible assignment $\vec{a}$
      % of individuals to the quantified variables $\vec{v}$ in $q'$
      \begin{itemize}
      \item Let $q''$ be obtained by substituting $\vec{a}$ for
        $\vec{v}$.
        % \item For every unary atom $C(a)$ in $q''$
        %   \begin{itemize}
        %   \item[] check if $\Kmc \models C(a)$
        %   \end{itemize}
      \item For every %binary
        atom $\Abf_{s_0,F_0}(a,b)$ in $q''$
        \begin{itemize}
        \item[] check if $\evalatom((\Abf,s_0,F_0), \Kmc, (a,b))=\yes$
        \end{itemize}
      \item If all checks succeed, return $\yes$.
      \end{itemize}
    \item Return $\no$.
    \end{enumerate}
  \end{boxedminipage}}
  \caption{\cnq\ entailment procedure $\evalquery$.\vspace*{-.3cm}}
  \label{fig-evalquery}
\end{figure}

\begin{proposition}\label{evalatom-props}
  $\evalatom$ is a sound and complete procedure for \nq\ evaluation
  over satisfiable \elhibot\ KBs.  It can be implemented so as to run
  in non-deterministic logarithmic space (resp., polynomial time) in
  the size of the ABox % w.r.t.\ data complexity
  % in \nlspace\ (resp., \ptime) w.r.t.\ data complexity
  for \dlh\ (resp., \elhibot) KBs.
  % , and in \ptime\ w.r.t.\ data complexity in \elhibot\ KBs.
\end{proposition}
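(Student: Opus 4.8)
The plan is to prove the semantic correctness of $\evalatom$ by reducing \nq\ evaluation to a reachability problem over the core of the canonical model, and then to read off the two complexity bounds from the structure of the resulting search. By Lemma~\ref{lift-canmod} it suffices to evaluate the query over $\canmod$, and by Fact~\ref{run-fact} (extended to sets of final states by taking a union over $s_f\in F_0$) a role atom $\Abf_{s_0,F_0}(a,b)$ holds in $\canmod$ exactly when there is a full $(a,s_0,b,s_f)$-run. The central claim I would establish is that, for every automaton $\alpha_i\in\Abf$, all $s_0\in S_i$, $F_0\subseteq S_i$, and all $a\in\ainds(\Amc)$ and $b\in\ainds(\Amc)\cup\{\anon\}$, we have $\evalatom((\Abf,s_0,F_0),\Kmc,(a,b))=\yes$ iff there is a full run witnessing $(a,b)\in\Abf_{s_0,F_0}^{\canmod}$, where $b=\anon$ means the run may end at an arbitrary element reaching a final state. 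Given this claim, soundness and completeness of $\evalatom$ for \nq\ evaluation over satisfiable \elhibot\ KBs follow immediately.

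The heart of the argument is a \emph{decomposition lemma} showing that any full run in $\canmod$ that starts at a core element can be cut at the time points where it visits core individuals. Each maximal segment lying strictly between two such visits stays inside a single anonymous subtree $\canmod|_c$ rooted at a core individual $c$; I would argue that such a segment which returns to $c$ is summarised precisely by a tuple in $\loops$, whereas the terminal segment of a run ending inside the anonymous part (the case $d=\anon$, $s'\in F_0$ in step (b)(iii)) is summarised by a tuple in $\floops$. The delicate point — and the step I expect to be the main obstacle — is the bookkeeping of the \emph{nested} automata: when the main run descends into $\canmod|_c$ it may launch lower automata via $\ptest{j}$ transitions whose own runs may descend further and return to $c$; these pending obligations are exactly the states collected in the $\pathreq$ component of the $\loops$/$\floops$ tuple, and they are discharged by the recursive calls $\evalatom((\Abf,u,F_j),\Kmc,(c,\anon))$. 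Since every such obligation concerns an automaton $\alpha_j$ with $j>i$, and the top automaton admits no nested tests, I would carry out the correctness proof by induction downward on the automaton index $i$ (equivalently, on $n-i$), using the induction hypothesis to validate the recursive calls and the definitional semantic characterisation of $\loops$ and $\floops$ for the summarised excursions.

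To justify the loop bound $\maxcount=|\Amc|\times|S_i|+1$, I would observe that every move guessed in step~(b) has both endpoints in the finite set $\ainds(\Amc)\times S_i$; hence a successful search can always be taken to be free of repeated pairs, so whenever a full run exists there is a witnessing sequence of length at most $|\ainds(\Amc)|\times|S_i|$, and conversely the bounded search never accepts spuriously. This makes the while-loop both sound and complete for the core reachability.

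For the complexity bounds, the key observations are that the recursion depth is bounded by the number of automata in $\Abf$ (each recursive call passing to a strictly higher index), which is independent of $\Amc$, and that each invocation keeps only $\current$ and $\counter$ in memory, i.e.\ $O(\log|\Amc|)$ bits. Moreover $\loops$ and $\floops$ depend only on $\Tmc$ and $\Abf$, so for data complexity their membership tests are constant, and the only data-dependent subroutines are the satisfiability-aware concept- and role-membership tests $c\in C^{\canmod}$ and $(c,d)\in\sigma^{\canmod}$. By exploiting the known complexity of instance checking, for \dlh\ these tests are in \nlspace\ in $|\Amc|$; since \nlspace\ is closed under complementation and under composition at constant recursion depth, the nondeterministic procedure runs in \nlspace. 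For \elhibot, instance checking is \ptime-complete in data complexity, so rather than relying on the nondeterministic search I would implement the bounded-depth reachability deterministically — computing, level by level from the highest automaton index downward, the set of accepting pairs $(c,s)$ by a polynomial fixpoint that invokes the \ptime\ instance checker — which yields a \ptime\ data-complexity bound and completes the proof.
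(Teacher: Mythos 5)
Your proposal is correct and follows essentially the same route as the paper's proof: reduction to full runs in the canonical model, decomposition of a run at its visits to ABox individuals with excursions into anonymous subtrees summarised by $\loops$/$\floops$ tuples whose $\pathreq$ components record the pending nested obligations discharged by recursive calls, induction on the nesting structure of the automata (your downward induction on the index is interchangeable with the paper's induction on level), the same distinct-pairs argument for the counter bound, and the same accounting of recursion depth, logarithmic storage, ABox-independence of the jump tables, and instance-checking complexity. The only (inessential) deviation is that you make the \ptime\ bound for \elhibot\ explicit via a deterministic fixpoint computation, whereas the paper simply observes that the same nondeterministic search with \ptime\ membership tests runs in polynomial time.
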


We present in Figure \ref{fig-evalquery} the complete procedure $\evalquery$ for
deciding  \cnq\ entailment.

\begin{theorem}\label{evalquery-props}
  $\evalquery$ is a sound and complete procedure for deciding \cnq\
  entailment over \elhibot\ KBs.  In the case of \dlh\ KBs, it runs in
  non-deterministic logarithmic space in the size of the ABox.
\end{theorem}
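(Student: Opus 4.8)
The plan is to prove correctness by composing the two main lemmas already at our disposal --- correctness of rewriting (Proposition~\ref{rewrite-correctness}) and correctness of the atom-evaluation subroutine (Proposition~\ref{evalatom-props}) --- and then to analyse space usage component by component. First I would reduce to the Boolean case: for a general \cnq, freeze the answer variables to the candidate answer tuple and treat them as individuals, so it suffices to decide entailment of a Boolean \cnq. If $\Kmc$ is unsatisfiable then every tuple is (vacuously) a certain answer, and Step~1 of $\evalquery$ correctly returns $\yes$; so I assume $\Kmc$ satisfiable from here on, which is exactly the hypothesis needed to invoke the earlier results about $\canmod$.

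For the satisfiable case, Proposition~\ref{rewrite-correctness} says that the Boolean $q$ is entailed iff there is some $q'\in\rewrite(q,\Tmc)$ admitting a match $\pi$ for $q'$ in $\canmod$ that sends every variable to an element of $\ainds(\Amc)$. The procedure $\evalquery$ guesses precisely such data: it guesses $q'\in Q$ together with an assignment $\vec a$ of ABox individuals to the quantified variables, producing a ground query $q''$ all of whose terms are individuals. It then remains to verify that each atom of $q''$ holds in $\canmod$, and here I would appeal to Proposition~\ref{evalatom-props} together with Lemma~\ref{lift-canmod}: for a role atom $\Abf_{s_0,F_0}(a,b)$ the call $\evalatom((\Abf,s_0,F_0),\Kmc,(a,b))$ returns $\yes$ iff $(a,b)\in\Int[\canmod]{\Abf_{s_0,F_0}}$, and for a test atom $\ptest{\Abf_{s_0,F_0}}(a)$ the call with third argument $(a,\anon)$ returns $\yes$ iff the corresponding test is satisfied at $a$ in $\canmod$. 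Chaining these equivalences, $\evalquery$ has an accepting computation iff some rewriting admits a match into the core of $\canmod$, which by Proposition~\ref{rewrite-correctness} is equivalent to entailment of $q$; this gives soundness and completeness for all \elhibot\ KBs.

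For the data-complexity bound for \dlh, I would argue that everything done beyond the $\evalatom$ calls is cheap in $|\Amc|$. The set $Q=\rewrite(q,\Tmc)$ depends only on $q$ and $\Tmc$, and each $q'\in Q$ has size polynomial in $|q|+|\Tmc|$; hence $q'$ and the ground query $q''$ can be guessed and stored using only constantly many (in data complexity) individual names, i.e.\ $O(\log|\Amc|)$ bits overall. The satisfiability test of Step~1 is in $\nlspace$ in $|\Amc|$ for \dlh, and each atom of $q''$ is checked by $\evalatom$, which by Proposition~\ref{evalatom-props} runs in $\nlspace$ in $|\Amc|$ for \dlh. Since $q''$ contains only constantly many atoms, I run these nondeterministic logspace checks one after another, reusing the work tape between them while keeping only the guessed $q''$ persistent and rejecting the current branch as soon as any call rejects; this sequential composition of constantly many $\nlspace$ subroutines remains in $\nlspace$.

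The most delicate point is this composition argument rather than correctness, which follows almost mechanically from the two propositions. Two things must be checked with care. First, storing the guessed assignment $\vec a$ across all atom-checks must cost only $O(\log|\Amc|)$ space; this holds because the number of quantified variables is bounded by $|q'|$, hence independent of $\Amc$. Second, the recursion inside $\evalatom$ --- the recursive calls triggered by states in the $\pathreq$-component of $\loops$/$\floops$ tuples --- must not blow up the space; this holds because the recursion depth is bounded by the number of automata in the \nnfa, again a constant in data complexity, while each stack frame stores only a current pair $(c,s)$ together with a counter bounded by $|\Amc|\times|S_i|+1$, i.e.\ $O(\log|\Amc|)$ bits per frame. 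Together these observations yield the claimed $\nlspace$ data-complexity bound for \dlh, completing the proof.
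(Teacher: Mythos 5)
Your proposal is correct and follows essentially the same route as the paper's own (very terse) proof idea: correctness is obtained by composing Propositions~\ref{rewrite-correctness} and~\ref{evalatom-props}, and the \nlspace\ bound follows from the ABox-independence of $\rewrite$ (so $q'$ and the guessed assignment fit in $O(\log|\Amc|)$ space), the \nlspace\ satisfiability check for \dlh, and the \nlspace\ bound on each $\evalatom$ call. Your additional care about sequentially composing constantly many \nlspace\ checks and about the bounded recursion depth inside $\evalatom$ just makes explicit what the paper leaves implicit.
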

\vspace*{-.25cm}
\begin{proof}[Proof idea]
  Soundness, completeness, and termination of \evalquery\ follow
  easily from the corresponding properties of the component procedures
  \rewrite\ and \evalatom\ (Propositions \ref{rewrite-correctness},
  \ref{rewrite-complexity}, and \ref{evalatom-props}).  In \dlh, KB
  satisfiability is known to be \nlspace-complete in data complexity.
  Since the rewriting step is ABox-independent, the %the number and
  size of queries in $\mathcal{Q}$ can be treated as a constant.  It
  follows that the query $q'$ and assignment $\vec{a}$ guessed in Step 3
  can be stored in logarithmic space in $|\Amc|$.
  By Theorem \ref{evalquery-props},
  each call to \evalatom\ runs in non-deterministic logarithmic space.
\end{proof}

\vspace*{-.1cm}

\begin{corollary}
  \cnq\ entailment over \dlh\ knowledge bases is \nlspace-complete in
  data complexity.
\end{corollary}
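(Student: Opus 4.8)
The plan is to establish the two matching bounds that together give \nlspace-completeness in data complexity, noting that almost all of the work has already been carried out in Theorem~\ref{evalquery-props}.

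For the upper bound I would appeal directly to Theorem~\ref{evalquery-props}: the procedure \evalquery\ is a sound and complete decision procedure for \cnq\ entailment over \elhibot\ KBs, and for \dlh\ KBs it runs in non-deterministic logarithmic space in $|\Amc|$. Since deciding whether $\vec{a} \in \ans(q,\tuple{\Tmc,\Amc})$ is exactly the yes/no problem that \evalquery\ solves, and in the data-complexity regime the query $q$ and TBox $\Tmc$ are fixed, the rewriting step is ABox-independent, so the finite set $\rewrite(q,\Tmc)$, the guessed $q'$, and the guessed assignment $\vec{a}$ of individuals to its quantified variables can all be handled within a logarithmic-space budget in $|\Amc|$. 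Membership in \nlspace\ is then immediate.

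For the lower bound I would show \nlspace-hardness by a logspace reduction from directed $s$--$t$ reachability, the canonical \nlspace-complete problem. Fix a role name $p \in \rn$ and the single \emph{fixed} \nq\ $q(x,y) = p^*(x,y)$, and take the empty TBox. Given a directed graph $G=(V,A)$ with distinguished vertices $s,t$, let $\Amc$ contain $p(u,v)$ for each edge $(u,v)\in A$, viewing vertices as individuals. With an empty (hence \dlc, and a fortiori \dlh) TBox, the canonical model \canmod\ coincides with $\Amc$ read as a graph database, so by Lemma~\ref{lift-canmod} we have $(s,t) \in \ans(q,\tuple{\emptyset,\Amc})$ iff $t$ is reachable from $s$ in $G$. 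Since plain RPQs are a special case of \cnqs, this yields \nlspace-hardness already in data complexity; equivalently, the bound is simply inherited from the known \nlspace-hardness of answering (C)(2)RPQs over \dlc\ and \dlh~\cite{BienvenuOS13}.

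I do not anticipate a genuine obstacle, since the corollary is essentially a packaging of Theorem~\ref{evalquery-props} with a standard hardness argument. The only point requiring care is keeping the measure straight: the hardness must be witnessed by a \emph{single fixed} query (the constant query $p^*$ above) so that complexity is genuinely measured in the ABox alone, and one should confirm that under the empty TBox the certain-answer semantics degenerates to ordinary graph reachability, which is exactly what Lemma~\ref{lift-canmod} guarantees.
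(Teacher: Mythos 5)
Your proposal is correct and follows essentially the same route as the paper: the upper bound is exactly Theorem~\ref{evalquery-props}, and the lower bound is, as the paper notes, inherited from the \nlspace-hardness of plain RPQ answering (your explicit reachability reduction with the fixed query $p^*$ and empty TBox is just an unfolding of that inherited bound). No gaps.
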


\section{Conclusions and Future Work}
We have studied the extension of (C)2RPQs with a nesting construct inspired by
XPath, and have characterized the data and combined complexity of answering
nested 2RPQs and C2RPQs for a wide range of DLs.  The only complexity bound we
leave open is whether the \conp lower-bound in data complexity for expressive
DLs is tight; indeed, the automata-theoretic approach used %in the presence of
%regular expressions
to obtain optimal bounds in combined complexity for these logics does not
seem to provide the right tool for tight bounds in data complexity.

In light of the surprising jump from \ptime to \exptime in the combined
complexity of answering \nqs in lightweight DLs,
a relevant research problem
%a natural problem for future work
is to identify classes of \nqs that exhibit better computational properties.
%, e.g., when transitive closure is not present.
We are also interested in exploring whether
the techniques developed in Section~\ref{sec:horn-dls} %for the upper bound
can be extended to deal with additional query constructs, such as existential
``loop-tests'' or forms of role-value maps.
Finally, containment of \nqs has been studied very recently \cite{Reut13},
but only for plain graph databases,
so it would be interesting to investigate containment also in the presence of DL constraints.
%Finally, we note that containment
%of \nqs has been studied very recently, but only for plain graph databases
%\cite{Reut13}. Thus, another interesting question would be to investigate
% containment of such queries in the presence of DL constraints.

\paragraph{Acknowledgments.}
This work has been partially supported by ANR project PAGODA
(ANR-12-JS02-007-01), by the EU IP Project FP7-318338 \emph{Scalable End-user
 Access to Big Data} (Optique), by the FWF projects T515-N23 and P25518-N23,
and by the WWTF project ICT12-015.

%\clearpage
\bibliographystyle{aaai}
\fontsize{9.5pt}{10.5pt} \selectfont

%\bibliography{pathqueries}

%%%% COMMENT FOR SHORT VERSION
 \clearpage
 \appendix

 % !TEX root =  kr-nested.tex

\section{Omitted Proofs}

\medskip

  Given an \nnfa\ $(\Abf, s_0, F_0)$, we define the \emph{level} $\level(j)$ of %an automaton
  $\alpha_j\in \Abf$ as follows:
  $\level(j)=0$ if the transitions of $\alpha_j$ do not involve any symbol
  $\ptest{k}$, % use only symbols from $\Sigma$,
  and otherwise $\level(j)= m+1$ where $m$ is the maximum value of
  $\level(k)$ among all % $k$ for which
  $\ptest{k}$ appearing in a transition of $\alpha_j$.

\medskip

\noindent\textbf{Proposition \ref{prop:eli-reduction}}.
\emph{For each \cnq $q$, one can compute
in polynomial time an $\mathcal{ELI}$ TBox $\Tmc'$ and C2RPQ $q'$ \mbox{such
  that}
  $\ans(q,\tuple{\Tmc,\Amc})\,{=}\,\ans(q',\tuple{\Tmc \cup \Tmc',\Amc})$
for every KB $\tuple{\Tmc,\Amc}$.}
\begin{proof}
Let $\Tmc'$ and $q'$ be defined as in the body of the paper.
To complete the argument, we must show  $\ans(q,\tuple{\Tmc,\Amc}) = \ans(q',\tuple{\Tmc \cup
    \Tmc',\Amc})$, for every DL KB $\tuple{\Tmc,\Amc}$.
  %% $\tuple{\Tmc,\Amc} \models q$ iff $\tuple{\Tmc \cup \Tmc',\Amc} \models
  %% q'$.

  To show the first inclusion $\ans(q,\tuple{\Tmc,\Amc}) \subseteq \ans(q',\tuple{\Tmc \cup \Tmc',\Amc})$,
%  suppose that $\vec{a} \in \ans(q,\tuple{\Tmc,\Amc})$, and let
%   let $\I$ be an arbitrary model of $\tuple{\Tmc \cup \Tmc',\Amc}$.
%  %% $\tuple{\Tmc,\Amc} \models q$ implies $\tuple{\Tmc \cup \Tmc',\Amc} \models
%  %% q'$,
  it suffices to show that
  \begin{equation}\label{eq:first-app-proof}
  \ans(q,\Imc) \subseteq \ans(q',\Imc)
  \end{equation}
  for every model  $\I$ of
  $\tuple{\Tmc \cup \Tmc',\Amc}$.
  Thus, let $\I$ be an arbitrary model of $\tuple{\Tmc \cup \Tmc',\Amc}$.
%   We will  prove that if $(o,o') \in
%  \Int{\Abf_{s_0,F_0}}$ then $(o,o') \in \INT{\alpha'_{i}}$.
%  %% Since $\I \models \tuple{\Tmc,\Amc}$,
%  This implies
%  $\ans(q,\Imc) \subseteq \ans(q',\Imc)$.
  %$\INT{q} \subseteq \INT{q'}$.
  %% $\vec{a} \in \Int{q}$ implies $\vec{a} \in \Int{q'}$.
  %% and $\tuple{\Tmc,\Amc} \models q$, this implies $\I \models q'$.
  %%
%  Note that if $\delta_i$ has no transitions of the form $(s,\ptest{j},s')$ (and
%  in particular if $i$ is the highest index in $\Abf$), then
%  $\Abf_{s_0,F_0}$ and $\alpha'_{i}$ are equivalent and the claim trivially
%  holds.
  %For the general case, we actually
  To show that Equation \ref{eq:first-app-proof} holds for $\I$,
  we prove that for every index $j$,
  %% if $o \in \Int{\ptest{j}}$ then $o \in \Int{A_{s_j}}$.  That is,
  \begin{center}
  $(\dagger)$ if $(o,o') \in \Int{\Abf_{s_j,F_j}}$ for some $o'$, then
  $o \in \Int{A_{s_j}}$.
  \end{center}
  Note that it follows from $(\dagger)$
  that any sequence $s_0o_0s_1\cdots o_{k-1}s_ko_k$
  witnessing  $(o,o')\in\Int{\Abf_{s_0,F_0}}$ also witnesses $(o,o') \in
  \INT{\alpha'_{i}}$, which yields Equation \ref{eq:first-app-proof}.
  We prove $(\dagger)$ by induction on the level of $\alpha_j$.
  For the base case, %%take the maximal $j$ in  $\Abf$, and
  assume $\level(j)=0$ and $(o,o') \in \Int{\Abf_{s_j,F_j}}$.
  %% Since %%$j$ is maximal
  Then there are no transitions of the form $(s,\ptest{j'},s')$ in $\delta_j$,
  so there is a sequence $s_0o_0s_1\cdots o_{k-1}s_ko_k$
  with $o_0=o$, $o_k=o'$, $s_0 = s_j$, and $s_k\in F_j$
  such that for every $i\in\{1,\ldots,k\}$ there is a
  transition $(s_{i-1},\sigma_i,s_i)\in\delta_j$ with
  $\sigma_i\in\sigmaroles$ and $(o_{i-1},o_i)\in \Int{\sigma_i}$.
  Since $\I$ is a model of $\Tmc_{\Abf}$, we must have $o_k \in \Int{A_{s_k}}$.
  Using the axioms in
  $\Tmc_{\Abf}$ and a simple
  inductive argument, we can show that $o_i \in \Int{A_{s_i}}$ for all
  $i\in\{0,\ldots,k\}$,
  since $(o_{i-1},o_i)\in \Int{\sigma_i}$.
  We thus obtain $o \in \Int{A_{s_j}}$.
  The inductive step is similar, but now
  the sequence $s_0o_0s_1\cdots o_{k-1}s_ko_k$
  may have  some $o_{i-1},s_{i-1},o_i,s_i$
  with  $o_{i-1}=o_i$ and $(s_{i-1},\ptest{j'},s_i)$ for some $j' > j$.
  In this case, by definition
  there is $o''\in\Delta^{\canmod}$ such that
  $(o_i,o'')\in\Int{\Abf_{s_{j'},F_{j'}}}$ with $\level(j') < \level(j)$,
  so by the
  induction hypothesis, $o_{i-1} \in \Int{A_{s_j'}}$.
  This, together with the axioms of $\Tmc_{\Abf}$, suffices to
  inductively  show  $o_i \in \Int{A_{s_i}}$ for all $i\in\{0,\ldots,k\}$,
  completing the proof of  $(\dagger)$.

  Next we must show that
  $\ans(q',\tuple{\Tmc \cup \Tmc',\Amc}) \subseteq  \ans(q,\tuple{\Tmc,\Amc})$.
  To this end, for every model $\I$ of $\tuple{\Tmc,\Amc}$, we define its
  extension $\Imc'$ to the vocabulary of $\Tmc'$
  by interpreting the fresh symbols $A_s$ as follows:
  \[
  \Int[\I']{A_s} = \{o \mid \text{ there is } o'\in\Delta^{\canmod} \text{ such that }
  (o,o')\in \Int{\Abf_{s,F_i}} \}
  \]
  where $\alpha_i$ is the unique automaton in $\Abf$ with $s \in S_i$.
  Clearly, $\Imc'$ is a model of $\tuple{\Tmc,\Amc}$, and it is not hard to
  see that it is also a model of $\Tmc'$.
  Indeed, since the definition of $\I'$ directly reflects
  the way the axioms simulate the semantics of the automata,
  a straightforward induction  on the level $\level(j)$ of $\alpha_j$
  establishes that $\I' \models \Tmc_{\alpha_j}$ for every index $j$.
%
%
%  Since $\I' \models \Tmc'$, $\I \models \tuple{\Tmc,\Amc}$, and $\Imc'$ coincides with
%  $\Imc$ on the signature of $\tuple{\Tmc, \Amc}$, we have
%  $\I' \models  \tuple{\Tmc \cup \Tmc',\Amc}$.
  %% and  $\I' \models q'$  by assumption.
  Thus, to prove $\ans(q',\tuple{\Tmc \cup \Tmc',\Amc}) \subseteq  \ans(q,\tuple{\Tmc,\Amc})$,
  it suffices to show that for every model $\Imc$ of $\tuple{\Tmc,\Amc}$,
  $$\ans(q',\Imc') \subseteq \ans(q,\Imc)$$
  with $\I'$ the extension of $\I$ as defined above.
  %
%  Now since $\I'$ is a model of $\tuple{\Tmc \cup \Tmc',\Amc}$ that coincides with
%  $\I$ on
%  To show that   $\ans(q',\Imc') \subseteq \ans(q,\Imc')$, %$\INT[\I']{q'} \subseteq \INT[\I']{q}$,
%  %% $\I' \models q$,
%  c
  Consider some atom $\alpha_i'(x,y)$ of $q'$ and a pair
  $(o,o') \in \INT{\alpha_i'}$. It is easy to see that the
  sequence $s_0o_0s_1\cdots o_{k-1}s_ko_k$ witnessing
  $(o,o') \in \INT[\I']{\alpha_i'}$ also witnesses
  $(o,o') \in \Int[\I']{\Abf_{s_0,F_0}}$ for the \nnfa\ $\Abf_{s_0,F_0}$ from which
  $\alpha_i'$ was obtained, since the automata only differ in the transitions
  where $\ptest{j}$ was replaced by $\ctest{A_{s_j}}$.
  For every $(o_{i-1},o_i)$  with
  $(o_{i-1},o_i) \in \Int[\I']{\ctest{A_{s_j}}}$,
  we have $o_{i-1} = o_i$ and $o_i \in \Int[\I']{A_{s_j}}$.
  By construction of $\I'$,
  $o_i \in \Int[\I']{A_{s_j}}$ implies  $(o_i,o'')\in
  \Int{\Abf_{s_j,F_j}}$ for some $o''$.
  Hence $\INT[\I']{\alpha_i'} \subseteq \Int[\I]{\Abf_{s_0,F_0}}$ for every
  atom $\alpha_i'(x,y)$ of $q'$, and so
  we obtain $\ans(q',\Imc') \subseteq \ans(q,\Imc)$,
  %$\INT[\I']{q'} \subseteq \INT[\I']{q}$
  as desired.
\end{proof}

\medskip

\noindent
\textbf{Proposition \ref{prop:eli-reduction-oneatom}}.
\emph{  For every \nq $q$ and every pair of individuals $a,b$, one can compute
  in polynomial time an $\mathcal{ELI}$ TBox $\Tmc'$, and a pair of assertions
  $A_b(b)$ and $A_s(a)$
  such that $(a,b) \in \ans(q,\tuple{\Tmc,\Amc})$ iff
  $\tuple{\Tmc \cup \Tmc',\Amc \cup \{A_b(b)\}} \models A_s(a)$,
  for every DL KB $\tuple{\Tmc,\Amc}$.}
\begin{proof}
Consider an arbitrary \nq $q(x,y) = E(x,y)$, a
pair of individuals $a,b$, and a knowledge base $\tuple{\Tmc,\Amc}$.
First we observe that, for every interpretation $\I$,
we have $(a,b) \in \Int{q}$ iff  $(a,a) \in \Int{\ptest{E \conc
    \ctest{\{b\}}}}$.
Using this observation, the claim follows almost directly from the proof of
Proposition~\ref{prop:eli-reduction}.
We take the latter query, and eliminate the symbol $\{b\}$.
That is, we add to the ABox the assertion $A_b(b)$, and take the query
$q' = \ptest{E \conc \ctest{A_b}}$.
Let $\Amc' = \Amc \cup \{A_b(b)\}$.
We know that $(a,b) \in \ans(q,\tuple{\Tmc,\Amc})$ iff
$(a,a) \in \ans(q',\tuple{\Tmc,\Amc'})$.

Let $\Abf_{s_0,F_0}$ be the \nnfa representation
of $\ptest{E \conc \ctest{A_b}}$, and let $\Tmc'$ be $\Tmc_{\Abf}$ as
defined in the proof of Proposition~\ref{prop:eli-reduction}, except that
for the ``main" automaton $\alpha_i$ containing states $\{s_0\} \cup F_0$, we use
axioms $\top \sqsubseteq A_f$ with $f \in F_0$ rather than $f \in F_i$
(since we may have $F_0 \neq F_i$).
We show that $(a,a) \in \ans(q',\tuple{\Tmc,\Amc'})$ iff
$\tuple{\Tmc',\Amc'} \models A_{s_0}(a)$.
For the only if direction,
assume
$(a,a) \in \ans(q',\tuple{\Tmc,\Amc'})$ and
take a model $\I$ of $\tuple{\Tmc',\Amc'}$.
Since $\I \models \tuple{\Tmc,\Amc'}$, we have
$(\Int{a},\Int{a}) \in \Int{\Abf_{s_0,F_0}}$, and it
 follows from the proof of Proposition~\ref{prop:eli-reduction} that
%%if $(\Int{a},\Int{a}) \in \Int{\Abf_{s_0,F_0}}$ then
$\Int{a} \in \Int{A_{s_0}}$ as desired.
For the other direction, we assume
$\tuple{\Tmc',\Amc'} \models A_{s_0}(a)$ and consider a model $\I$
of $\tuple{\Tmc,\Amc'}$.
We have shown in the proof of Proposition~\ref{prop:eli-reduction}
that $\Imc$ can be extended to a model $\I'$
of $\tuple{\Tmc',\Amc'}$ in
such a way that $\Int[\I']{a} \in \Int[\I']{A_{s_0}}$ implies
$(\Int[\I']{a},o) \in
\Int[\I']{\Abf_{s_0,F_0}}$ for some $o$. Moreover, since
$\Abf_{s_0,F_0}$ is equivalent to an expression
of the form $\ptest{E'}$,
$(\Int[\I']{a},o) \in
\Int[\I']{\Abf_{s_0,F_0}}$ implies $o = \Int[\I']{a}$
and  $(\Int[\I']{a},\Int[\I']{a}) \in
\Int[\I']{\Abf_{s_0,F_0}}$. Hence $(a,a) \in \ans(q',\tuple{\Tmc,\Amc'})$.
\end{proof}

\medskip

For the next proofs, we will utilize the following facts about canonical models:

\begin{fact}\label{canmod1}
Let $\Tmc$ be an \elhibot\ TBox, and let $C,D$ be \elhibot\ concepts.
Then $\Tmc \models D \sqsubseteq E$ iff $a \in D^{\canmod}$ where $\Amc=\{C(a)\}$.
\end{fact}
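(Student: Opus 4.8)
The plan is to prove the two directions of the equivalence directly from the canonical model property recalled above, reading the claimed statement as $\Tmc \models C \sqsubseteq D$ iff $a \in D^{\canmod}$ with $\canmod = \Imc_{\Tmc,\{C(a)\}}$. Throughout I would assume that $\tuple{\Tmc,\{C(a)\}}$ is consistent, which is already presupposed by the very existence of $\canmod$; if it is inconsistent then $C$ is unsatisfiable with respect to $\Tmc$ and $\Tmc \models C \sqsubseteq D$ holds vacuously.

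For the forward direction I would use that $\canmod$ is itself a model of $\tuple{\Tmc,\{C(a)\}}$. In particular $a \in C^{\canmod}$ and $\canmod$ is a model of $\Tmc$, so $\Tmc \models C \sqsubseteq D$ gives $C^{\canmod} \subseteq D^{\canmod}$ and hence $a \in D^{\canmod}$. For the converse, suppose $a \in D^{\canmod}$, fix an arbitrary model $\Imc$ of $\Tmc$ and some $e \in C^{\Imc}$, and let $\Imc_e$ be $\Imc$ with $a$ reinterpreted as $e$. Since $D$ is a concept, and hence contains no individual names, $D^{\Imc_e} = D^{\Imc}$, while $\Imc_e$ is a model of $\tuple{\Tmc,\{C(a)\}}$. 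By the embedding property recalled just before Lemma~\ref{lift-canmod}, there is a homomorphism $h$ from $\canmod$ into $\Imc_e$ with $h(a) = e$. Carrying $a \in D^{\canmod}$ across $h$ yields $e \in D^{\Imc_e} = D^{\Imc}$, and since $\Imc$ and $e$ were arbitrary, $\Tmc \models C \sqsubseteq D$.

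The one step requiring genuine care, which I expect to be the crux, is that membership in an arbitrary \elhibot\ concept is transported along homomorphisms, i.e.\ $o \in D^{\canmod}$ implies $h(o) \in D^{\Imc_e}$. I would establish this by induction on the structure of $D$. The cases $\top$ and $\bot$ are immediate (the latter vacuously), concept names and $\sqcap$ follow directly from the defining clauses of a homomorphism, and for $\exists r. D_0$ a witness $o'$ with $(o,o') \in r^{\canmod}$ and $o' \in D_0^{\canmod}$ is sent to $h(o')$, using that $h$ preserves $r$ — including the case $r = p^-$, which reduces to preservation of $p$ — together with the induction hypothesis. The essential observation is that \elhibot\ offers only positive, existential constructors, so each of them is preserved in the forward direction; this is precisely what makes the equivalence hold for the full concept language rather than merely for concept names.
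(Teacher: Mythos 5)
Your proposal is correct, and it is exactly the standard argument the paper relies on: the paper states Fact~\ref{canmod1} without proof (modulo the evident typo $D \sqsubseteq E$ for $C \sqsubseteq D$, which you read correctly), and your two directions — $\canmod$ is itself a model of $\tuple{\Tmc,\{C(a)\}}$ for the forward direction, and the homomorphic embedding of $\canmod$ into an arbitrary model together with preservation of the positive, existential \elhibot\ constructors under homomorphisms for the converse — are precisely the ingredients the paper presupposes via its canonical-model discussion and Lemma~\ref{lift-canmod}. Your explicit handling of the inconsistent case and the structural induction for homomorphism-preservation are the right points to make careful.
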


\begin{fact}\label{canmod2}
Let $\tuple{\Tmc, \Amc}$ be a consistent \elhibot\ KB, and suppose that $o_1$ and $o_2$
are such that $o_1 \in A^{\canmod}$ iff $o_2 \in A^{\canmod}$ for every concept name $A$.
Then the submodels $\canmod|_{o_1}$ and $\canmod|_{o_2}$ are isomorphic.
\end{fact}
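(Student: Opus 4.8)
The plan is to build an explicit isomorphism $h\colon \Delta^{\canmod|_{o_1}}\to \Delta^{\canmod|_{o_2}}$ by \emph{re-rooting}. Every element of $\canmod|_{o_1}$ has the shape $o_1\cdot w$ for a suffix $w = r_1 D_1\cdots r_k D_k$ of role/conjunction pairs, and I would set $h(o_1\cdot w)=o_2\cdot w$, leaving the suffix $w$ untouched and only replacing the prefix $o_1$ by $o_2$. Such an $h$ is visibly injective, and the map defined symmetrically (with the roles of $o_1$ and $o_2$ exchanged) is its two-sided inverse, so once $h$ is shown to be well defined, surjectivity is automatic.

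The crux is well-definedness: $o_1\cdot w\in\Delta^{\canmod}$ must imply $o_2\cdot w\in\Delta^{\canmod}$. By the definition of the domain, for a sequence $o_j\cdot r_1 D_1\cdots r_k D_k$ the only condition referring to the prefix is the one on the first step (that $r_1 D_1$ be a forced successor of the root $o_j$); every later step, from $r_i D_i$ to $r_{i+1}D_{i+1}$, requires only $\Tmc\models D_i\sqsubseteq\exists r_{i+1}.D_{i+1}$, which mentions the suffix alone. Hence the two memberships $o_1\cdot w\in\Delta^{\canmod}$ and $o_2\cdot w\in\Delta^{\canmod}$ are conjunctions of the \emph{same} suffix conditions together with a first-step condition at the respective root, so the claim follows directly once the roots are shown to have the same set of forced first-step successors $\{(r,D)\mid o_j\cdot rD\in\Delta^{\canmod}\}$.

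This is the one real lemma, and where I expect all the work to be: equal concept-name membership forces equal existential successors. For an element $o$ in the anonymous part, its forced successors are exactly the pairs $(r,D)$ with $\Tmc\models\mathsf{tail}(o)\sqsubseteq\exists r.D$, so it suffices to see that $\mathsf{tail}(o_1)$ and $\mathsf{tail}(o_2)$ are $\Tmc$-equivalent concepts. Both are conjunctions of concept names, and by hypothesis each concept name entailed by one is entailed by the other; since every conjunct of $\mathsf{tail}(o_2)$ is in particular a concept name it entails, it is entailed by $\mathsf{tail}(o_1)$, giving $\Tmc\models\mathsf{tail}(o_1)\sqsubseteq\mathsf{tail}(o_2)$, and symmetrically the reverse. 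Two $\Tmc$-equivalent concepts entail precisely the same concepts, in particular the same $\exists r.D$, which is exactly what is needed. (The instances of the statement used later, where a root is an ABox individual of a KB carrying only concept assertions such as $\{C(a)\}$, are handled identically, its forced successors again coinciding with those of its concept-name type.)

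It then remains to verify that $h$ preserves the interpretation, which is pure bookkeeping. For concept names, a proper descendant $o_1\cdot w$ with $w$ nonempty and its image $o_2 \cdot w$ share the same tail $D_k$, so both belong to $A^{\canmod}$ iff $\Tmc\models D_k\sqsubseteq A$; for the roots $o_1,o_2$ themselves, preservation is exactly the hypothesis. For role names, the only edges inside a subtree are parent/child edges, and $(o_1\cdot w,\,o_1\cdot w\cdot rD)\in p^{\canmod}$ (respectively its converse) holds iff $\Tmc\models r\sqsubseteq p$ (respectively $\Tmc\models r\sqsubseteq p^-$), a condition depending only on the suffix symbol $r$ and therefore invariant under $h$. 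Thus $h$ and its inverse are homomorphisms, so $h$ is an isomorphism of $\canmod|_{o_1}$ and $\canmod|_{o_2}$.
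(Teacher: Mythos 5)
The paper states this as a bare \emph{Fact} and gives no proof at all, so there is no official argument to compare yours against; your re-rooting construction is the natural way to discharge it, and it is essentially correct for the case the paper actually relies on, namely $o_1,o_2$ in the anonymous part (or roots of singleton ABoxes $\{C(a)\}$ with $C$ a conjunction of concept names). Your decomposition is right: the only prefix-dependent membership condition is the first step; for anonymous $o$ the forced first-step successors are exactly the pairs $(r,D)$ with $\Tmc\models\mathsf{tail}(o)\sqsubseteq\exists r.D$; and your key observation---that equal concept-name membership forces $\Tmc$-equivalence of the tails, because each tail is a conjunction of concept names each of which it trivially entails---is precisely the point that makes the well-definedness of $h$ go through. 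One caveat is worth recording. As literally stated the Fact is false when $o_1$ or $o_2$ is an ABox individual of a general ABox: with $\Tmc=\emptyset$ and $\Amc=\{p(a,b),\,B(b),\,A(a),\,A(c)\}$ the individuals $a$ and $c$ satisfy exactly the same concept names, yet $\canmod|_{a}$ contains the anonymous successor $a\cdot p\,B$ (since $\Tmc,\Amc\models(\exists p.B)(a)$) while $\canmod|_{c}$ is a single point. The reason is that for ABox individuals the forced first-step successors depend on $\Tmc,\Amc\models(\exists r.D)(a)$ and not only on the concept-name type, and moreover a self-loop $p(a,a)\in\Amc$ would put a non-parent/child edge inside the subtree, contradicting your bookkeeping step. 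Your parenthetical correctly restricts the individual case to KBs of the form $\{C(a)\}$, where $C$ is $\Tmc$-equivalent to the conjunction of the concept names it entails and the argument does go through; that restriction really ought to be part of the statement, but the omission is the paper's, not yours.
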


\noindent\textbf{Proposition \ref{jump-exptime}.}
\emph{It can be decided in \exptime\ whether a tuple belongs to $\loops$ or $\floops$. }
\begin{proof}
We give the proof only for $\loops$, but the statement for $\floops$ can be proven in the same fashion.
To complete the proof idea presented in the body of the paper, we must show that
$(C,s_1, s_2, \pathreq) % \loopreq)
\in \loops$ iff
\begin{equation}\label{jump-eq}
\Tmc' \models (C \sqcap A_{s_2}
\sqcap \bigsqcap_{s \in \pathreq} A_s) % \sqcap_{(s,s') \in \loopreq} A_{s,s'})
\sqsubseteq A_{s_1}
\end{equation}
%The proof depends on whether $S_i \cap \Gamma$ is empty.
Recall that the states $s_1$ and $s_2$ belong to $S_i$.
\medskip

For the first direction, suppose that $(C,s_1, s_2, \pathreq)  \in \loops$.
By Definition \ref{jumpdef}, there exists a partial run $(T,\ell)$ of $\Abf$
in the canonical model $\Imc$
of $\Tmc$ and $\Amc=\{C(a)\}$
that satisfies the following conditions:
\begin{itemize}
\item the root of $T$ is labelled $(a,s_1)$
\item there is a leaf node labelled $(a,s_2)$
%\item for every leaf node $x$ labelled $(a,s)$: %we have
%% \in F_k$ for some $k$, or
%$s \in \pathreq$
\item for every leaf node labelled $(o,s)\neq (a,s_2)$,
either $s\in F_j$ for some $j > i$, or $o=a$ and $s \in \pathreq$
\end{itemize}

Let $\Jmc$ be the canonical model of $\Tmc'$ and $\Amc'=\{C(a)\} \cup \{A_{s_2}(a)\}\cup \{A_s(a) \mid s \in \Gamma\}$.
Note that $\Jmc$ is also a model of $\tuple{\Tmc, \Amc}$,
and so by Fact \ref{canmod2}, %,  the properties of canonical models,
there must exist a homomorphism
$h: \Delta^\Imc \rightarrow \Delta^\Jmc$ from $\Imc$ into $\Jmc$ such that $h(a)=a$.
By Fact \ref{canmod1}, to show Equation \ref{jump-eq}, it suffices to show that $a \in A_{s_1}{^\Jmc}$.
We obtain $a \in A_{s_1}{^\Jmc}$ as a consequence of the following claim and the facts that $h(a)=a$ and the root node is labelled $(a,s_1)$. \\

\noindent\textbf{Claim 1}. For every label $(o,s)$ occurring in $T$:
$h(o) \in A_s{^\Jmc}$. \smallskip\\
\emph{Proof of claim.} The proof is by induction of the co-depth of nodes in $T$. For the base case, suppose that $v$ is a leaf node and $\ell(v)=(o,s)$. If $o=a$, then we must have $s =s_2$ or $s \in \pathreq$, and in both cases, we have $A_s(a) \in \Amc'$. Since $\Jmc$ is a model of $\Amc'$,
we must have $a \in A_s^{\Jmc}$, which yields $h(o) \in A_s{^\Jmc}$ since $h(o)=h(a)=a$.
Next suppose that $o \neq a$. Then we know from above that $s \in F_j$ for some $j >i$.
It follows that  $\Tmc'$ contains the inclusion $\top \sqsubseteq A_s$,
and so $h(o) \in A_s^{\Jmc}$ trivially holds.
% suppose that $s \in

For the induction step, consider a non-leaf node $v$ with $\ell(v)=(o,s)$ and $s \in S_k$,
and suppose that the claim has already been shown for the labels of $v$'s children.
By Definition \ref{rundef}, there are two possibilities for $v$.
The first is that $v$ has a unique child $v'$ with $\ell(v')=(o',s')$
and there is a transition $(s,\sigma,s') \in \delta_k$ such that
$\sigma \in \sigmaroles$ and $(o,o') \in \sigma^{\Imc}$.
Since $h$ is a homomorphism,
we have $(h(o),h(o')) \in \sigma^{\Jmc}$, and by applying the induction
hypothesis, we get $h(o') \in A_{s'}^{\Jmc}$.
If $\sigma = r \in \rni$, then $\Tmc'$ contains the inclusion $\exists r. A_{s'} \sqsubseteq A_s$.
Thus, since $\Jmc$ is a model of $\Tmc'$, we must have $h(o) \in A_s^{\Jmc}$.
If instead $\sigma = B?$, then we know that $o=o'$ and $o \in B^{\Imc}$,
and using the homomorphism, we get $h(o) = h(o') \in B^{\Jmc}$.
The TBox $\Tmc'$ contains the axiom $A_{s'} \sqcap B \sqsubseteq A_s$,
and so since $\Jmc$ is a model of $\Tmc'$, we must have $h(o) \in A_s^{\Jmc}$.
The second possibility is that the node $v$ has two children $v'$ and $v''$ with
$\ell(v')=(o,s')$ and $\ell(v'')=(o,s'')$ where $s''$ the initial state of $\alpha_j$
and $(s, \ptest{j}, s') \in \delta_k$. By the induction hypothesis, we must have
$h(o') \in A_{s'}^{\Jmc}$ and $h(o') \in A_{s''}^{\Jmc}$. We also know that $\Tmc'$
contains the inclusion $A_{s'} \sqcap A_{s''} \sqsubseteq A_s$.
Since $\Jmc$ is a model of $\Tmc'$, we obtain $h(o) \in A_s^{\Jmc}$, thereby
completing the proof of the claim.

\medskip

For the other direction, suppose that Equation \ref{jump-eq} holds.
Let $\Imc$ be the canonical model of $\Tmc$ and $\Amc=\{C(a)\}$, and
let $\Imc_0$ be the interpretation with the same domain as $\Imc$
which interprets the concept
and role names from $\langle \Tmc, \Amc \rangle$ exactly as in $\Imc$,
and which additionally interprets each concept name
$A_s$ with $s \in \{s_2 \} \cup \Gamma$ as $\{a\}$.
The interpretation $\Imc_0$ is a model of $\Amc'=\{C(a)\} \cup \{A_{s_2}(a)\}\cup \{A_s(a) \mid s \in \Gamma\}$.
%but it is not a model of $\Tmc'$.
To extend $\Imc_0$ to a model of $\Tmc'$,
%Because the inclusions in $\Tmc' \setminus \Tmc$ do not involve any
%existential concepts on the right-hand-side, the canonical model $\Jmc$
%of $\Tmc'$ and $\Amc'=\{C(a)\} \cup \{A_s(a) \mid s \in \Gamma\}$ can be
%obtained by \emph{chasing $\Imc_0$ with the inclusions in $\Tmc' \setminus \Tmc$}.
we \emph{chase $\Imc_0$ with the inclusions in $\Tmc' \setminus \Tmc$}.
Formally, we consider an infinite sequence of interpretations
$\Imc_0, \Imc_1, \Imc_2, \ldots$ such that $\Imc_{k+1}$ is obtained from
$\Imc_k$ by selecting an inclusion $D \sqsubseteq A \in \Tmc' \setminus \Tmc$
and an $o \in D^{\Imc_k} \setminus A^{\Imc_k}$
and letting $A^{\Imc_{k+1}} = A^{\Imc_k} \cup \{o\}$ (and interpreting all other
concept and roles names as in $\Imc_k$).
Whenever there are multiple objects $o$ satisfying the condition, we choose
an object $o$ having minimal distance from $a$.
Define $\Jmc$ as the interpretation having the same domain as $\Imc$ and the $\Imc_k$
and interpreting concept and role names as follows:
$$A^\Jmc = \bigcup_{k=0}^\infty\,\, A^{\Imc_k} \qquad p^\Jmc = \bigcup_{k=0}^\infty \,\,p^{\Imc_k}$$
%It is not hard to see that $\Jmc$ is in fact the canonical model of
%$\Tmc'$ and $\Amc'=\{C(a)\} \cup \{A_s(a) \mid s \in \Gamma\}$.
It is not hard to see that $\Jmc$ is a model of $\Tmc'$, and since it is also a model of
$\{C(a)\} \cup \{A_{s_2}(a)\}\cup \{A_s(a) \mid s \in \Gamma\}$, we can apply Equation
\ref{jump-eq} to obtain $a \in A_{s_1}^\Jmc$.
Moreover, we can find some finite $m$ such that $a \in A_{s_1}^{\Imc_m}$ and $a \not \in A_{s_1}^{\Imc_{m-1}}$.
We now use the sequence $\Imc_0, \ldots, \Imc_m$ to construct a partial run of $\Abf$ on $\Imc$
that satisfies the conditions of Definition \ref{jumpdef}.\smallskip\\

\noindent\textbf{Claim 2.}
For every $0 \leq k < m$, if $\Imc_{k+1}$ was obtained from $\Imc_k$ by applying an inclusion $D \sqsubseteq A_s$ to $o$, then there is a partial run $(T,\ell)$ of $\Abf$ on $\Imc$ such that:
\begin{itemize}
\item the root node is labelled $(o,s)$
\item every leaf node is either labelled by $(o',s')$ with $s' \in F_j$ for some $j >i$,
or is labelled $(a,s')$ with $s' \in \{s_2\} \cup \Gamma$.
\end{itemize}
\emph{Proof of claim.}
The proof is by induction on $k$. For convenience, throughout we assume that the state $s$ (which corresponds to the concept $A_s$ on the right-hand-side of the inclusion) belongs to $S_g$.  The base case is when $k=0$. There are four cases to consider depending on the shape of the inclusion that was used to obtain $\Imc_{1}$:
\begin{itemize}
\item Case 1: $\Imc_{1}$ was obtained from $\Imc_0$ by applying an inclusion $\top \sqsubseteq A_s$ to $o$,  for some $s\in F_j$ with $j> i$. Then the tree with a single node labelled $(o,s)$ defines a partial run with the required conditions.
\item Case 2: $\Imc_{1}$ was obtained from $\Imc_0$ by applying $\exists r. A_{s'} \sqsubseteq A_s$ to $o$. Then we must have $o \in (\exists r. A_{s'})^{\Imc_0}$, and hence $(o,a) \in r^{\Imc_0}$, since $a$ is the only object that can belong to $A_{s'}^{\Imc_0}$.
Moreover, since $\exists r. A_{s'} \sqsubseteq A_s$ belongs to $\Tmc$, we must have $(s,r,s') \in \delta_g$. % (where $g$ is such that $\{s,s'\} \subseteq S_g$).
The desired partial run has a root node labelled $(o,s)$ with a single child labelled $(a,s')$.
\item Case 3: $\Imc_{1}$ was obtained from $\Imc_0$ by applying $A_{s'} \sqcap B \sqsubseteq A_s$ to $o$. Then we must have $o \in (A_{s'} \sqcap B)^{\Imc_0}$, hence $o=a$, $s'\in \{s_2\} \cup \Gamma$, and $a \in B^{\Imc_0} = B^{\Imc}$. It follows that $(a,a) \in B?^\Imc$, and since $A_{s'} \sqcap B \sqsubseteq A_s$ belongs to $\Tmc'$, we must also have $(s,B?,s') \in \delta_g$. %(with $g$ the unique state set with $\{s,s'\} \subseteq S_g$).
We can thus use the tree with root node labelled $(a,s)$ and child labelled $(a,s')$.
\item Case 4: $\Imc_{1}$ was obtained from $\Imc_0$ by applying $A_{s'} \sqcap A_{s''} \sqsubseteq A_s$ to $o$. Here again we must have $o=a$, and both $s'$ and $s''$ must belong to
$\{s_2\} \cup \Gamma$. The presence of $A_{s'} \sqcap A_{s''} \sqsubseteq A_s$ in $\Tmc'$ yields $(s,\ptest{j},s') \in \delta_g$.
Thus, we can use the tree with root labelled $(a,s)$ and children labelled $(a,s')$ and $(a, s'')$.
\end{itemize}
For the induction step, suppose that the claim holds whenever $0 \leq k < p$. There are again four possibilities to consider for $\Imc_{p+1}$:
\begin{itemize}
\item Case 1: $\Imc_{p+1}$ was obtained from $\Imc_p$ by applying an inclusion $\top \sqsubseteq A_s$ to $o$,  for some $s\in F_j$ with $j> i$. As above, we can use the tree with a single node labelled $(o,s)$.
\item Case 2: $\Imc_{p+1}$ was obtained from $\Imc_p$ by applying $\exists r. A_{s'} \sqsubseteq A_s$ to $o$. Then we must have $o \in (\exists r. A_{s'})^{\Imc_p}$, and so there is $o'$ with
$o' \in A_{s'}^{\Imc_p}$. Since $\exists r. A_{s'} \sqsubseteq A_s$ belongs to $\Tmc$, we have $(s,r,s') \in \delta_g$.  If $o'=a$ and $s' \in \{s_2\} \cup \Gamma$, then we can use the same argument as in the base case. Else, there must be some earlier stage at which $o'$ was added to $A_{s'}$. We can thus apply the induction hypothesis to find a partial run $(T,\ell)$ that has root labelled $(o', s')$ and is such that every leaf node is either labelled by $(o'',s'')$ with $s'' \in F_j$ for some $j >i$,
or is labelled $(a,s'')$ with $s'' \in \{s_2\} \cup \Gamma$.
It then suffices to add a new root node labelled $(o,s)$ as a parent of the root of $T$ to obtain a partial run satisfying the desired conditions.
\item Case 3: $\Imc_{p+1}$ was obtained from $\Imc_p$ by applying $A_{s'} \sqcap B \sqsubseteq A_s$ to $o$.
Then we must have $o \in A_{s'}^{\Imc_p}$.
If $o=a$ and $s' \in \{s_2\} \cup \Gamma$, then we can proceed as in the base case.
Otherwise, we can apply the induction hypothesis to find a partial run
$(T,\ell)$ which has root labelled $(o, s')$ and is such that every leaf node
is either labelled by $(o',s'')$ with $s'' \in F_j$ for some $j >i$,
or is labelled $(a,s'')$ with $s'' \in \{s_2\} \cup \Gamma$.
We also know that $o \in \Bmc^{\Imc_p} = \Bmc^{\Imc}$,
and that $(s,B?,s') \in \delta_g$ (because of the presence of $A_{s'} \sqcap B \sqsubseteq A_s$ in $\Tmc'$).
It follows that we can construct the desired partial run by adding a
new root node labelled $(o,s)$ as a parent of the root node of $(T, \ell)$.
\item Case 4: $\Imc_{p+1}$ was obtained from $\Imc_p$ by applying $A_{s'} \sqcap A_{s''} \sqsubseteq A_s$ to $o$. Then we must have $(s,\ptest{j},s') \in \delta_g$, and $s''$ must be the initial state of $\alpha_j$.
We also know that $o \in (A_{s'} \sqcap A_{s''})^{\Imc_p}$, and so either we can proceed as in the base case, or
by applying the induction hypothesis, we can find
partial runs $(T_1, \ell_1)$ and $(T_2,\ell_2)$ which have roots labelled $(o,s')$ and $(o,s'')$ respectively, and which satisfy the other requirements. Then we can obtain a partial run with the desired properties by creating a new root node labelled $(o,s)$ whose children are the root nodes of $(T_1, \ell_1)$ and $(T_2,\ell_2)$.
\end{itemize}
This completes the proof of Claim 2. \\

To finish the argument, by taking $k=m-1$, the above claim yields
a partial run $(T,\ell)$ of $\Abf$ on $\Imc$ such that:
\begin{itemize}
\item the root node is labelled $(a,s_1)$
\item every leaf node is either labelled by $(o',s')$ with $s' \in F_j$ for some $j >i$,
or is labelled $(a,s')$ with $s' \in \{s_2\} \cup \Gamma$.
\end{itemize}
It is easy to see that if a non-root node is labelled $(o',s')$ and $s' \in S_i$ (recall that $s_1,s_2 \in S_i$),
then its parent node must be labelled $(o,s)$ for some $s \in S_i$. Since the states in $\Gamma$
do not belong to $S_i$, it must be the case that there is some leaf node labelled $(a,s_2)$.
We have thus found a partial run satisfying all of the conditions of
Definition \ref{jumpdef}, and so we can conclude
that $(C,s_1, s_2, \Gamma) \in \loops$.
\end{proof}

%%%%%%%%%%%%%%%%%%%%%%%%%%%%%%%%%%%%%%%%%%%%%%%%%%%
%% UPDATE PROOF FROM HERE ON!

\noindent{\bf Proposition \ref{rewrite-correctness}}.
\emph{ $\Tmc,\Amc \models q$ if and only if there a match $\pi$ for some query $q' \in \rewrite(q,\Tmc)$
in $\Imc_{\Amc,\Tmc}$ such that $\pi(t) \in \ainds(\Amc)$ for every $t \in \terms(q')$. }\\

We split this proof into the two lemmas, a first showing correctness of the procedure $\rewrite$, and a second showing its completeness. In the proofs of these lemmas, it will prove useful to refer to queries that are produced by a single iteration of $\rewrite$. We thus introduce the set $\onestep(q,\Tmc)$ which contains precisely those queries $q'$ for which there is an execution of $\rewrite(q,\Tmc)$ such that $q'$ is output the first time that the procedure returns to Step 1.

In the proof below, we write  $o \in \Int{\ptest{\Abf_{s,F}}}$ to mean that
$(o,o') \in \Int{{\Abf_{s,F}}}$ for some $o' \in \dom$.
We also use a notion analogous to an
$(o_1,s_1,o_2,s_2)$-run, but that is more convenient for automata of the form
$\ptest{\Abf_{s_0,F_0}}$. Let $(T,\ell)$ be a run for $\Abf$ in $\I$.
We call $(T,\ell)$ an \emph{$(o,s, F)$-run} if the root has label $(o,s)$ and some leaf has
label $(o',s_f)$ for some $o' \in \dom$ and some $s_f \in S_i \cap F$, with
$\alpha_i$ the automaton containing the state $s$. Naturally, we have
that
$o\in \Int{\ptest{\Abf_{s_0,F_0}}}$ iff there is a full $(o,s_0,F_0)$-run for
$\Abf$ in $\I$.

% for a nice version, need to unify
% use of  $\Int{\Abf}$ instead of successor, and check use
% of $\Int{\ptest{\Abf_{s,F}}}$ and witnessing sequences

\begin{lemma}\label{first-step-correctness}
If $\Tmc,\Amc \models q'$ for some $q' \in \rewrite(q,\Tmc)$,
then $\Tmc,\Amc \models q$.
\end{lemma}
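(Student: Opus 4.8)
The plan is to establish soundness for a single iteration of $\rewrite$ and then propagate it to the whole procedure. Recall that $q'\in\rewrite(q,\Tmc)$ is produced by a finite chain $q=q_0,q_1,\dots,q_m=q'$ with $q_{k+1}\in\onestep(q_k,\Tmc)$ (the choice to output $q$ at Step~1 giving the base case $m=0$). By Lemma~\ref{lift-canmod}, $\Tmc,\Amc\models p$ is equivalent to $\canmod\models p$ for any Boolean \cnq\ $p$, so it suffices to prove the single-step claim: if $q_1\in\onestep(q,\Tmc)$ and $\canmod\models q_1$, then $\canmod\models q$. The lemma then follows by a straightforward induction on $m$.

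To prove the single-step claim I would fix a match $\pi'$ for $q_1$ in $\canmod$ and recover from it a match $\pi$ for $q$. Let $y,\leaf,C,D,r,r_1,r_2$ denote the objects chosen in the iteration producing $q_1$. The atoms $A(y)$ with $A\in D$ added in Step~\ref{item:addvarStep} force $d:=\pi'(y)\in D^{\canmod}$, and since $\Tmc\models D\sqsubseteq\exists r.C$ (Step~\ref{item:chooseStep}) the canonical model contains the anonymous successor $o:=d\,r\,C$ with $(d,o)\in r^{\canmod}$ and $\mathsf{tail}(o)=C$; from $\Tmc\models r\sqsubseteq r_1$ and $\Tmc\models r\sqsubseteq r_2$ we also get $(o,d)\in(r_1^-)^{\canmod}$ and $(d,o)\in r_2^{\canmod}$. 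I define $\pi$ by $\pi(y')=o$ for all $y'\in\leaf$ (all variables of $\leaf$ may be sent to the same element since Step~\ref{item:chooseVars} identifies them) and $\pi=\pi'$ elsewhere. Every concept atom $B(y)$ deleted in Step~\ref{item:chooseType} satisfies $\Tmc\models C\sqsubseteq B$, so $o\in B^{\canmod}$ because $\mathsf{tail}(o)=C$; hence all concept atoms of $q$ hold under $\pi$.

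It remains to verify the role atoms of $q$, and here the plan is to reassemble, for each original atom $\Abf_{s_0,F}(t,t')$ (or $\ptest{\Abf_{s_0,F}}(t)$), a \emph{full} run witnessing membership in $\Abf_{s_0,F}^{\canmod}$ via Fact~\ref{run-fact}, by concatenating the sub-runs attached to the pieces created in Step~\ref{item:statesStep} at the splitting points (all of which $\pi$ sends to $o$). Each such piece is supplied in one of two ways. A loop piece $\Abf_{s_j,s_{j+1}}(y,y)$ handled through $\loops$ in Step~\ref{item:conceptstep} comes with a tuple $(C,s_j,s_{j+1},\pathreq)\in\loops$; since $\mathsf{tail}(o)=C$, the subtree $\canmod|_o$ is isomorphic to the anonymous part of the canonical model of $\langle\Tmc,\{C(a)\}\rangle$ used in Definition~\ref{jumpdef} (cf.\ Fact~\ref{canmod2}), so the partial run certifying the tuple transports into $\canmod|_o$ rooted at $o$. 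The $\floops$ case for test pieces $\ptest{\Abf_{s_j,s_{j+1}}}(y)$ is analogous. Every remaining piece is matched one level up, at $d$, and is glued to $o$ by a single move across the edge between $o$ and $d$: an $r_1^-$-step $(o,d)$ when $y$ is the source and an $r_2$-step $(d,o)$ when $y$ is the target, justified by the transitions guessed in Steps~\ref{item:chooseStep}--\ref{item:makeStep}.

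The hard part will be to turn the transported partial runs into full runs, i.e.\ to discharge the pending obligations recorded in $\pathreq$. Such a partial run leaves open, at its root $o$, exactly the states $u\in\pathreq$, each belonging to some lower automaton $\alpha_k$, and completing it requires a full $(o,u,F_k)$-run. These are exactly the obligations encoded by the atoms $\ptest{\Abf_{u,F_k}}(y)$ introduced in Step~\ref{item:conceptstep}; crucially, Step~\ref{item:makeStep} rewrites each of them to $\ptest{\Abf_{v,F_k}}(y)$ for some $v$ with $(u,r_1^-,v)\in\delta_k$, and the match $\pi'$ satisfies the rewritten atom at $d$, yielding a full $(d,v,F_k)$-run. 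Prepending the single transition $(u,r_1^-,v)$, which is legal because $(o,d)\in(r_1^-)^{\canmod}$, produces the required full $(o,u,F_k)$-run starting at $o$. The bulk of the work is then the careful bookkeeping showing that these prepended steps, the subtree runs, and the up/down edge-moves splice together into genuine full runs for each atom of $q$, with matching states at every junction; once this is done, $\pi$ is a match for $q$ in $\canmod$, so $\canmod\models q$ and therefore $\Tmc,\Amc\models q$.
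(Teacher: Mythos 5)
Your proposal is correct and follows essentially the same route as the paper's proof: reduce to a single application of $\onestep$, send the $\leaf$ variables to the anonymous $r$-successor $o$ with $\mathsf{tail}(o)=C$, verify concept atoms via $\Tmc\models C\sqsubseteq B$, and verify role atoms by recomposing the Step~\ref{item:statesStep} pieces, transporting the $\loops$/$\floops$ partial runs into $\canmod|_o$ via Fact~\ref{canmod2}, and discharging the $\pathreq$ obligations through the $\ptest{\Abf_{u,F_k}}(y)$ atoms combined with the single $r_1^-$/$r_2$ edge-moves across $(o,d)$. The only difference is notational (your $\pi$ and $\pi'$ are swapped relative to the paper), and the ``bookkeeping'' you defer is exactly what the paper's Claims~1 and~2 carry out.
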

\begin{proof}
It is sufficient to show that if $q' \in \onestep(q,\Tmc)$ and $\Tmc, \Amc
\models q'$, then $\Tmc, \Amc \models q$. Fix a \cnq\ $q$ and a \elhi TBox $\Tmc$.
Let $q' \in \onestep(q,\Tmc)$ be such that $\Tmc, \Amc \models q'$, and let
$\pi$ be a match for $q'$ in $\Imc_{\Tmc,\Amc}$.

Consider the execution of $\rewrite(q,\Tmc)$ which leads to the query $q'$ being output the first time that the procedure returns to~Step 1. Let $\leaf$ be the non-empty subset of $\vars(q)$ which was selected in~Step 2,
let $C$ be the conjunction of concept names selected in Step~3,
and let $D$  and $r,r_{1},r_{2}$ be the conjunction of concept names
 and the roles selected in Step 6.
Because of Step 8, we know that $q'$ contains an atom $A(y)$
for each $A \in D$,
hence $\pi(y) \in D^{\Imc_{\Tmc,\Amc}}$. Since
$\Tmc \models D \sqsubseteq \exists r. C$,
there must exist an $r$-successor $e$ of $\pi(y)$ in $\Imc_{\Tmc, \Amc}$ with $\tail(e)=C$.
We define a mapping $\pi':\terms(q) \rightarrow \Delta^{\Imc_{\Tmc,\Amc}}$ by setting $\pi'(t)=e$ for every $t \in \leaf$ and setting $\pi'(t)=\pi(t)$ for every $t \in \terms(q') \setminus \{y\}$. This mapping is well-defined since every variable in $q$ either belongs to $\leaf$ or appears in $q'$.

We aim to show that $\pi'$ is a match for $q$ in $\Imc_{\Tmc,\Amc}$. To this end, consider some concept atom $B(t) \in q$.
First suppose that $t \in \leaf$. Then we know that the concept $C$ selected in Step 3 is such that $\Tmc \models C \sqsubseteq B$. We then use the fact that since $t \in \leaf$, we have $\pi'(t)=e \in C^{\canmod}$.
If $t \not \in \leaf$,
then $B(t) \in q'$. As $\pi$ is a match for $q'$, we have $\pi(t) \in B^{\canmod}$.
Using $\pi'(t)=\pi(t)$, we get $\pi'(t) \in B^{\canmod}$.

Now consider some atom $\at \in q$ of the form $\Abf_{s,F}(t,t')$ or
$\ptest{\Abf_{s,F}}(t)$.
For $\Abf_{s,F}(t,t')$,
if both $t \not \in \leaf$ and $t' \not \in \leaf$, then it can be verified that $\at \in q'$.
As $\pi$ is a match for $q'$ in $\Imc_{\Tmc,\Amc}$, it
must be the case that $(\pi(t), \pi(t')) \in  \Abf_{s,F}^{\canmod}$.
Since $\pi'(t)=\pi(t)$ and $\pi'(t)=\pi(t)$, the same holds for
$\pi'$.
Similarly for $\ptest{\Abf_{s,F}}(t)$:
if $t \not \in \leaf$, then $\at \in q'$, and
as $\pi$ is a match for $q'$ in $\Imc_{\Tmc,\Amc}$, it
must be the case that $\pi(t) \in \ptest{\Abf_{s,F}}^{\canmod}$.
%%Since $\pi'(t)=\pi(t)$, the same holds for $\pi'$.

Next we consider the
more interesting case in which some term from
$\at$ is in $\leaf$.
That is, either $\at = \Abf_{s,F}(t,t')$
with $\{t,t'\} \cap \leaf \neq \emptyset$,
or   $\at = \ptest{\Abf_{s,F}}(t)$
with $t \in \leaf$.
%If $t,t' \in \leaf$, and $\Abf_{s,F}(y,y)$ is removed in Step 3,
%then this means that $\epsilon \in L(\alpha)$, and hence $\pi'$ trivially satisfies $\Abf_{s,F}(t,t')$ (see above). So suppose that this is not the case, in which case we continue on to Step 4, and then
Then in Step~4,
we have a query containing $\Abf_{s,F}(\sigma(t),\sigma(t'))$ or
$\ptest{\Abf_{s,F}}(\sigma(t))$,
where either $\sigma(t)$ or $\sigma(t')$ is $y$,
with $\sigma$ defined as follows: for $t'' \in \{t,t'\}$,
$\sigma(t'')=t''$ for $t'' \not \in \leaf$ and $\sigma(t'')=y$ for $t'' \in
\leaf$. It follows that in
Step~4,  we will
select a sequence $s_1, \ldots s_{n-1}$ of distinct states and a final state
$s_n$
from the
$\alpha_i \in \Abf$ containing  $s$ and $F$,
such that $s_n \in F$.
If $\at = \Abf_{s,F}(\sigma(t),\sigma(t'))$, we replace it by the atoms $\Abf_{s,s_1}(\sigma(t),y),$
$ \Abf_{s_1,s_2}(y,y),$ $ \ldots,$ $\Abf_{s_{n-2},s_{n-1}}(y,y),$
 $\Abf_{s_{n-1},s_n}(y,\sigma(t'))$.
If $\at = \ptest{\Abf_{s,F}}(y)$, we replace it by the atoms $\Abf_{s,s_1}(\sigma(t),y),$
$ \Abf_{s_1,s_2}(y,y),$ $ \ldots,$ $\Abf_{s_{n-2},s_{n-1}}(y,y),$
$\ptest{\Abf_{s_{n-1},s_n}}(y)$.
Let us denote the former set of atoms by $Q_{\Abf}$, and the latter by
$Q_{\ptest{\Abf}}$.   We now establish the following claim:\\

 \noindent\textbf{Claim 1.}
\begin{itemize}\item If $\pi'$ is a match for
  $Q_{\Abf}$ in $\Imc_{\Tmc,\Amc}$, then $\pi'$ is a match for
  $\Abf_{s,F}(t,t')$
 %$\Abf_{s,F}(\sigma(t),\sigma(t'))$
  in $\Imc_{\Tmc,\Amc}$.
\item If $\pi'$ is a match for
 $Q_{\ptest{\Abf}}$ in $\Imc_{\Tmc,\Amc}$, then $\pi'$ is a match for
 $\ptest{\Abf_{s,F}}(t)$ in $\Imc_{\Tmc,\Amc}$.
\end{itemize}

\noindent\emph{Proof of claim.}
For the first item, suppose that $\pi'$ is a match for the atoms in $Q_{\Abf}$ in $\Imc_{\Tmc,\Amc}$. Then this means that
$(\pi'(\sigma(t)),\pi'(y)) \in \INT[\canmod]{\Abf_{s,s_1}}$,
$(\pi'(y),\pi'(y)) \in \INT[\canmod]{\Abf_{s_{i},s_{i+1}}}$
for every $1 \leq i < n-1$,
and
$(\pi'(y),\pi'(\sigma(t))) \in \INT[\canmod]{\Abf_{s_{n-1},s_n}}$.
It suffices to compose the sequences witnessing each of these membership statements
to show that % $(o_0,o_n) \in \INT[\canmod]{\Abf_{s,F}}$.
%We then remark that whenever we have
%a sequence of objects $o_{0}\ldots o_{n}$
%such that $(o_0,o_1) \in \INT[\canmod]{\Abf_{s,s_1}}$ and
% $(o_i,o_{i+1}) \in \INT[\canmod]{\Abf_{s_{i},s_{i+1}}}$ for every $1 \leq i
% < n$, then we can compose the sequences witnessing each of the above successor
% relationships to show that  $(o_0,o_n) \in \INT[\canmod]{\Abf_{s,F}}$.
%Hence
$(\pi'(\sigma(t)),\pi'(\sigma(t'))) \in \INT[\canmod]{\Abf_{s,s_n}}$,
hence
  $(\pi'(\sigma(t)),\pi'(\sigma(t'))) \in \INT[\canmod]{\Abf_{s,F}}$.
%$\pi'(\sigma(t'))$ is an $\Abf_{s,F}$-successor of $\pi'(\sigma(t))$.
%%Then to complete the proof, we simply note that
Because of the way
we defined $\pi'$ and $\sigma$,
we have $\pi'(\sigma(t))=\pi'(t)$
and $\pi'(\sigma(t'))=\pi'(t')$. We thus obtain
$(\pi'(t),\pi'(t')) \in \INT[\canmod]{\Abf_{s,F}}$,
i.e.\  $\pi'$ is a match for
 $\Abf_{s,F}(t,t')$ in $\Imc_{\Tmc,\Amc}$.

For $Q_{\ptest{\Abf}}$, the proof is similar.
If $\pi'$ is a match for the atoms in $Q_{\ptest{\Abf}}$
in $\Imc_{\Tmc,\Amc}$,
then $(\pi'(\sigma(t)),\pi'(y)) \in \INT[\canmod]{\Abf_{s,s_1}}$,
$(\pi'(y),\pi'(y)) \in \INT[\canmod]{\Abf_{s_{i},s_{i+1}}}$
for every $1 \leq i < n-1$,
and $(\pi'(y),o) \in \INT[\canmod]{\Abf_{s_{n-1},s_n}}$ for some object $o$.
By putting together all the sequences witnessing each atom, we obtain
$(\pi'(\sigma(t)),o) \in \ptest{\Abf_{s,F}}^{\canmod}$.
Then since $\pi'(\sigma(t))= \pi'(y)=e =\pi'(t)$,
we obtain $\pi'(t)\in \ptest{\Abf_{s,F}}^{\canmod}$.
(\emph{end proof of Claim 1})

\smallskip

Because of Claim 1, to complete the proof that $\pi'$ is a match for $q$ in $\Imc_{\Tmc,\Amc}$, it is sufficient to show the following:\\

\noindent\textbf{Claim 2.} The following hold for all atoms in  $Q_{\Abf}$ or
$Q_{\ptest{\Abf}}$:
\begin{itemize}
\item
$(\pi'(t_1),\pi'(t_2)) \in \Int[\canmod]{\Abf_{s,s'}}$ for atoms of the form
$\Abf_{s,s'}(t_1,t_2)$, and
\item
$\pi'(t_1) \in
\Int[\canmod]{\ptest{\Abf_{s,s'}}}$ for atoms of the form $\ptest{\Abf_{s,s'}}(t_1)$.
\end{itemize}

\noindent\emph{Proof of claim.}
First consider an atom  $\Abf_{s,s'}(t_1,t_2)$ that was not removed in Step~5.
%  either $u \neq y$, $u' \neq y$, or
% $C \not \in \looop[s,s']$. Then we will not remove
% so it will still be present in Step~6.
There are three cases depending on which of $t_1$ and $t_2$ equals $y$. We treat each case separately:\\[.15cm]
\noindent \textbf{Case 1}: $t_1=y$ and $t_2\neq y$. It follows that $t_2=\sigma(t_2)$ and so $\pi'(t_2)=\pi(t_2)$.
In Step~7, we will replace  $\Abf_{s,s'}(t_1,t_2)$ with $\Abf_{s'',s'}(t_1,t_2)$ where
$s'' \in S_i$ is such that $(s,r_{1}^{-},s'') \,{\in}\,\delta_i$.
The atom $\Abf_{s'',s'}(t_1,t_2)$ belongs to $q'$,
so we know that it is satisfied by $\pi$.
More precisely, we know that
$(\pi(t_1),\pi(t_2)) \in \Abf_{s'',s'}^{\canmod}$.
%$\pi(t_2)$ is an $\Abf_{v,s'}$-successor of $\pi(t_1)=\pi(y)$ in $\Imc_{\Tmc,\Amc}$.
Since $e$ is an $r$-successor of $\pi(y)=\pi(t_1)$
in $\Imc_{\Tmc,\Amc}$ and $\Tmc \models r \sqsubseteq r_{1}$,
it follows that $(e,\pi(t_1)) \in \Abf_{s,s''}^{\canmod}$.
%$\pi(y)$ is an $\Abf_{s,v}$-successor of $e$.
%% (via the word $r_{1}^{-} \in L(\Abf_{s,s''})$).
Then putting the above together, and using the fact that
$\pi'(t_1)=\pi'(y)=e$, we obtain $(\pi'(t_1),\pi'(t_2)) \in \Int[\canmod]{\Abf_{s,s'}}$. \\[.15cm]
\noindent \textbf{Case 2}: $t_1 \neq y$ and $t_2=y$. It follows that $t_1=\sigma(t_1)$ and so $\pi'(t_1)=\pi(t_1)$.
In Step~7, we will replace  $\Abf_{s,s'}(t_1,t_2)$ with an atom $\Abf_{s,s''}(t_1,t_2)$ where $s'' \in S_i$
is such that  $(s'',r_{2},s') \in \delta$.
The atom $\Abf_{s,s''}(t_1,t_2)$ appears in $q'$, so it must be satisfied by $\pi$.
We thus have $(\pi(t_1), \pi(t_2)) \in \Abf_{s,s''}^{\canmod}$.
%That means that $\pi(t_2)=\pi(y)$ is an $\Abf_{s,v}$-successor of $\pi(t_1)$ in $\Imc_{\Tmc,\Amc}$.
We also know that $e$ is an $r$-successor of $\pi(t_2)=\pi(y)$ in $\Imc_{\Tmc,\Amc}$ and that $\Tmc \models r \sqsubseteq r_{2}$.
From this, we can infer that $(\pi(t_2),e) \in \Abf_{s'',s'}^{\canmod}$
% is an $\Abf_{v,s'}$-successor of $\pi(t_2)$.
By combining the previous assertions and using the fact that $\pi'(t_2)=e$, we can conclude that $(\pi'(t_1),\pi'(t_2)) \in \Int[\canmod]{\Abf_{s,s'}}$.
% $\pi'(t_1)=\pi'(y)=e$ is
%an $\Abf_{s,s'}$-successor of $\pi'(t_1)$.
\\[.15cm]
\noindent\textbf{Case 3}: $t_1=t_2=y$. In Step~7, we will replace  $\Abf_{s,s'}(t_1,t_2)$ with an atom $\Abf_{v,v'}(t_1,t_2)$
where $(s,r_{1}^{-},v) \,{\in}\,\delta$ and $(v,r_{2},s') \in \delta$.
By combining the arguments used in Cases 1 and 2, we find that
 $(\pi'(t_1),\pi(y) \in \Abf_{s,v}^{\canmod}$,
$(\pi(y),\pi(y)) \in \Abf_{v,v'}^{\canmod}$,
and $(\pi(y),\pi'(t_2)) \in  \Abf_{v',s'}^{\canmod}$, which together yield
$(\pi'(t_1),\pi'(t_2)) \in \Int[\canmod]{\Abf_{s,s'}}$.\\[.15cm]
\indent Next we consider the case of an atom  of the form
$\ptest{\Abf_{s,s'}}(y)$ that is not replaced in Step~5.
Such an atom is replaced in Step~7 by
$\ptest{\Abf_{s'',s'}}(y)$, where
$s'' \in S_i$ is such that $(s,r_{1}^{-},s'') \,{\in}\,\delta$.
The atom $\ptest{\Abf_{s'',s'}}(y)$ belongs to $q'$,
so we know that it is satisfied by $\pi$, and hence
$\pi(y) \in \ptest{\Abf_{s'',s'}}^{\canmod}$.
%$\Abf_{s'',s'}$-successor $o$ of $\pi(y)$ in $\Imc_{\Tmc,\Amc}$.
Since $e$ is an $r$-successor of $\pi(y)$
in $\Imc_{\Tmc,\Amc}$ and $\Tmc \models r \sqsubseteq r_{1}$
it follows that $(e, \pi(y) \in \Abf_{s,s''}^{\canmod}$. %-successor of $e$.
%% (via the word $r_{1}^{-} \in L(\Abf_{s,s''})$).
As $\pi'(y)=e$, we obtain $\pi'(y) \in \Abf_{s,s'}^{\canmod}$.\\[.15cm]
\indent We next treat the case of an atom  $\Abf_{s,s'}(t_1,t_2)$
with $t_1=t_2=y$ that is dropped in Step~5 when
 we chose some
$(C,s,s',\pathreq) \in \loops$.
By definition, we know that there is
a partial run $(T,\ell)$ of $\Abf$
in the canonical model of $\tuple{\Tmc, \{C(a)\}}$
such that:
\begin{itemize}
\item  the root of $T$ is labelled $(a,s)$
\item there is a leaf node labelled $(a,s')$
\item for every leaf node $v$ with $\ell(v)=(o,s'')\neq (a,s')$,
either $s'$ is a final state, or
 $o=a$ and $u \in \pathreq$.
\end{itemize}
We  know from above that $\tail(\pi'(y))=C$ hence $\pi'(y) \in
\Int[\canmod]{C}$.
By Fact \ref{canmod2}, the canonical model of $\tuple{\Tmc, \{C(a)\}}$
is isomorphic to
 $\canmod|_{\pi'(y)}$. It follows that there is a partial run $(T',\ell')$
of $\Abf$ in  $\canmod|_{\pi'(y)}$ satisfying the same conditions,
except with $a$ replaced by $\pi'(y)$.
In Step 5, we added to $q'$ every atom
$\ptest{\Abf_{u,F_k}}(y)$ such that
 $u \in \pathreq \cap S_k$. We can thus use exactly
 the same argument as in the previous case to show that
 for every  $u \in \pathreq \cap S_k$,
 $\pi'(y) \in \ptest{\Abf_{u,F_k}}^{\canmod}$, and hence
% It follows that for every $u \in \pathreq \cap S_k$,
 there is a full $(\pi'(y),u,F_k)$-run $(T_u,\ell_u)$ of $\Abf$ on $\canmod$.
%We know that for each leaf $v$ with $\ell'(v) = (o,u)$ where $o = \pi(y)$
%and $u \in \pathreq$, and there is a run $(T_u,\ell_u)$ as above whose root is
%also labeled  $(\pi'(y),u)$.
By attaching to $(T',\ell')$ the tree $(T_u,\ell_u)$ at each non-final leaf
$v$ with label $(\pi'(y),u)$, we obtain a full
$(\pi'(y),s,\pi'(y),s')$-run for $\Abf$ in $\canmod$.
This shows that
$(\pi'(y),\pi'(y)) \in \INT[\canmod]{\Abf_{s,s'}}$
which yields the desired result given that $t_1=t_2=y$.\\[.15cm]
\indent The proof is analogous for an atom $\ptest{\Abf_{s,s'}}(y)$ in
$Q'_{\Abf}$ that is dropped in Step~5 after choosing some $(C,s,\{s'\},\pathreq) \in
\floops$.
 Again, we know that there is
a partial run of $\Abf$
in the canonical model of $\tuple{\Tmc, \{C(a)\}}$ as in
Definition~\ref{jumpdef},
and as $\pi'(y) \in
\Int[\canmod]{C}$, we can find a corresponding partial run $(T',\ell')$
%but with $a$ replaced by $\pi'(y)$
in  $\canmod|_{\pi'(y)}$.
We know that the root of $T'$ is labelled $(\pi'(y),s)$,
there is a leaf labelled  $(o,s')$,
and for every leaf node $v$
 with $\ell(v)=(o',u)\neq (o,s')$,
and  $u$ not a final state, we have
 $o=\pi'(y)$ and $u \in \pathreq$.
Since the atoms  $\ptest{\Abf_{u,F_k}}(y)$ are added in Step 5,
we can use the same arguments as above to find
for each such $u \in \pathreq \cap S_k$,
a full $(\pi'(y),u,F_k)$-run $(T_u,\ell_u)$ for $\Abf$ in \canmod.
By attaching to $(T',\ell')$ the tree $(T_u,\ell_u)$ at each non-final leaf
$v$ with label $(\pi'(y), u)$, we obtain a full
$(\pi'(y),s, \{s'\})$-run for $\Abf$ in $\canmod$,
which shows that $\pi'(y) \in \ptest{\Abf_{s,s'}}^{\canmod}$.

This concludes the proof of Claim~2, and hence of Lemma~\ref{first-step-correctness}.
\end{proof}

%Idea for next lemma: take a match for a query which involves the anon part, and pick some leaf variable. Then there is a match for a one-step rewriting of the query which is exactly the same except that it moves the leaf variable up to its parent. By repeatedly applying this lemma, we eventually get a match for some query in $\rewrite(q,\Tmc)$ which only involves ABox individuals.

\begin{lemma}
Suppose that $\Tmc,\Amc \models q$ and $\pi$ is a match for $q$ in $\Imc_{\Tmc,\Amc}$ such that $\pi(y)= p \cdot r_0\, C$ and there is no $z \in \vars(q)$ such that $\pi(y)$ is a proper prefix of $\pi(z)$. Then there is a match $\pi'$ for some $q' \in \onestep(q,\Tmc)$ such that:
\begin{itemize}
\item $\pi'(t)=\pi(t)$ for every $t \in \terms(q)$ is such that $\pi(t) \neq
  \pi(y)$, and
\item $\pi'(t)= p$ for every $t \in \terms(q)$ with $\pi(t)=\pi(y)$.
% \item if $u \not \in \terms(q)$, then $\pi'(u)$ is either a proper prefix of $p$ or in $\ainds(\Amc)$
\end{itemize}
\end{lemma}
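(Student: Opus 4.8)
The plan is to prove this completeness step by replaying the structure of the given match $\pi$ inside a single iteration of $\rewrite$, and then defining $\pi'$ to agree with $\pi$ everywhere except on the variables sent to the deepest element $\pi(y) = p \cdot r_0\, C$, which get pulled back to the parent $p$. Concretely, in Step~\ref{item:chooseVars} I take $\leaf = \{t \in \vars(q) \mid \pi(t) = \pi(y)\}$, which is non-empty and legitimate because maximality of $\pi(y)$ guarantees no other variable sits below it; in Step~\ref{item:chooseType} I take $C = \tail(\pi(y))$, which satisfies $\Tmc \models C \sqsubseteq B$ for every atom $B(y)$ of $q$ precisely because $\pi$ is a match and $\pi(y)$ is an anonymous element with tail $C$; and in Step~\ref{item:chooseStep} I take $r = r_0$ together with $D$ the set of concept names $A$ of $\Tmc$ with $p \in A^{\canmod}$, so that $\Tmc \models D \sqsubseteq \exists r_0.\,C$ holds by the canonical-model construction. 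The map $\pi'$ is then forced: $\pi'(t) = p$ for $t$ in the renamed leaf class, and $\pi'(t) = \pi(t)$ otherwise. Atoms of $q$ that touch no leaf variable are left unchanged by the iteration and stay satisfied since $\pi'$ agrees with $\pi$ on their terms, so all the work concerns atoms incident to $y$.

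The core of the argument is to choose the state sequences of Step~\ref{item:statesStep} and the jump/final guesses of Step~\ref{item:conceptstep} so that they mirror the witnessing runs of $\pi$. For each atom $\Abf_{s_0,F}(t,t')$ or $\ptest{\Abf_{s_0,F}}(t)$ incident to $y$ I fix a \emph{minimal} full run witnessing it (Fact~\ref{run-fact}); minimality ensures that its main branch visits $\pi(y)$ in pairwise distinct states, which is exactly what Step~\ref{item:statesStep} demands. I then cut the main branch at its successive visits to $\pi(y)$, yielding the decomposition prescribed by Step~\ref{item:statesStep}: the segments strictly between consecutive visits become loop atoms $\Abf_{s_j,s_{j+1}}(y,y)$, and the initial/final pieces become the boundary atoms. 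Each inter-visit segment leaves $\pi(y)$ either \emph{downward} (into $\canmod|_{\pi(y)}$) or \emph{upward} (through $p$), and no segment can do both, since returning to $\pi(y)$ must re-cross the single edge. Downward loops, together with all nested sub-runs that remain inside $\canmod|_{\pi(y)}$, are recognised as tuples of $\loops$ (a downward final test as a tuple of $\floops$): here I use that $\canmod|_{\pi(y)}$ is isomorphic to the canonical model of $\tuple{\Tmc,\{C(a)\}}$ (Fact~\ref{canmod2}), and I let $\pathreq$ collect exactly those nested-test states whose continuation is deferred back up to $\pi(y)$, so that Step~\ref{item:conceptstep} introduces the matching atoms $\ptest{\Abf_{u,F_k}}(y)$. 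Upward loops and boundary atoms heading above $\pi(y)$ are left untouched by Step~\ref{item:conceptstep}.

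Finally I would carry out Steps~\ref{item:chooseStep}--\ref{item:addvarStep}. With $r=r_0$, I choose $r_1,r_2$ to be the (super-)roles of $r_0$ that the fixed runs use when crossing the edge between $\pi(y)$ and $p$: an upward crossing from a state $u$ reads $r_1^-$ via a transition $(u,r_1^-,v)\in\delta_i$, available because $(\pi(y),p)\in (r_1^-)^{\canmod}$, and a downward crossing reads $r_2$ symmetrically. Step~\ref{item:makeStep} then relabels every atom incident to $y$ by advancing it across this one edge, and Step~\ref{item:addvarStep} adds $A(y)$ for $A\in D$, which $\pi'$ satisfies since $\pi'(y)=p\in A^{\canmod}$. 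The verification that $\pi'$ matches the resulting $q'$ reduces, atom by atom, to reassembling the fixed run minus the single edge-crossing(s) removed by the relabelling (the first up-step for outgoing atoms, the last down-step for incoming atoms, both for loops): the moved-up atom is then witnessed by the sub-run based at $p$, the $\loops$/$\floops$ guesses are witnessed inside $\canmod|_{\pi(y)}$, and the deferred tests $\ptest{\Abf_{u,F_k}}(y)$, after relabelling, are witnessed by the upward continuations of the corresponding nested sub-runs.

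The main obstacle is the bookkeeping of the nested, tree-shaped runs under this decomposition: for every nested test launched at or below $\pi(y)$ I must cleanly separate the part living inside $\canmod|_{\pi(y)}$ (absorbed into $\loops$/$\floops$ through $\pathreq$) from the part that escapes upward (deferred to an added test atom and later moved to $p$), and prove that these pieces recombine into genuine runs for $q'$. A secondary difficulty is that Step~\ref{item:chooseStep} fixes \emph{single} roles $r_1,r_2$ for the whole iteration, so the edge-crossings of all the different runs must be coordinated to use common super-roles of $r_0$; for \dlc this is immediate, as $r_0$ is the only role carrying the edge, while in the presence of role inclusions it requires choosing the runs (and in particular the last-visit waypoint states) so that a uniform choice of $r_1,r_2$ is possible.
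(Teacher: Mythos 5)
Your proposal follows essentially the same route as the paper's own proof: the same choices of $\leaf$, $C=\tail(\pi(y))$, and $D$ with $r=r_0$, the same decomposition of minimal witnessing runs at their successive visits to $\pi(y)$, the same classification of loop segments as downward (absorbed into $\loops$/$\floops$ via $\pathreq$, using the isomorphism of $\canmod|_{\pi(y)}$ with the canonical model of $\tuple{\Tmc,\{C(a)\}}$) or upward (left for Steps 6--7), and the same final atom-by-atom verification that $\pi'$ is a match for $q'$. The coordination of a single $r_1,r_2$ across all atoms that you flag as a secondary difficulty is in fact passed over silently in the paper's proof, so raising it is a point of extra care rather than a deviation.
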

\begin{proof}
Let $\pi$ be a match for a \cnq\ $q$ in $\Imc_{\Tmc,\Amc}$ and $t$ be such that
$\pi(y)= p \cdot r_0\, C$ and there is no $z \in \vars(q)$ with $\pi(y)$ a proper prefix of $\pi(z)$.
We show how to obtain a query $q'\in \onestep(q,\Tmc)$ and match $\pi'$ with the required properties.
In Step 1 of $\rewrite$, we choose to continue on to Step 2,
where we set $\leaf = \{t \in \vars(q) \mid \pi(t)=\pi(y)\}$.
 We define a function $\sigma$ as follows:
$\sigma(t)=y$ if $t \in \leaf$, and
 $\sigma(t)=t$ otherwise.
%% if $\pi(t)\neq \pi(y)$, else $\sigma(t)=y$.
At the end of Step~2, we have the query $q_2$ that contains the atoms
$\Abf_{s,F}(\sigma(t),\sigma(t'))$ for each  $\Abf_{s,F}(t,t') \in q$,
$B(\sigma(t))$  for each $B(t) \in q$,
and $\ptest{\Abf_{s,F}}(\sigma(t))$  for each $\ptest{\Abf_{s,F}}(t) \in q$.
In Step~3, we choose the concept $C$.
Consider some atom $B(y) \in q_2$.
We know that there must be some atom
$B(t)\in q$ with $t \in \leaf$.
Since $\pi$ is a match for $q$,  we
must have that $\pi(t) = \pi(y) \in B^{\Imc_{\Tmc,\Amc}}$.
Since $\pi(y)= p \cdot r_0\, C$, it follows from the definition of
canonical models that $\Tmc \models C \sqsubseteq B$,
as required by Step~3.

Next we show how to select a decomposition of atoms in Step~4.
First consider an atom $\Abf_{s_0,F}(t_1,t_2)$
%%(resp.\,$\ptest{\Abf_{s,F}}(t_1)$)
which is present in the query at the start of Step~4,
such that $y \in \{t_1,t_2\}$, %% (resp. $y = t_1$),
and let $\alpha_\ell$ be the automaton in $\Abf$
that contains $s_0$ and $F$.
Then we know from above that
there is an atom $\Abf_{s_0,F}(u,u')$ %%(resp.\,$\ptest{\Abf_{s,F}}(u)$)
in the original query $q$
such that $t_1=\sigma(u)$ and $t_2=\sigma(u')$. %% (resp.\,$y=\sigma(u)$).
Since $\pi$ is a match for $q$ with $\pi(t_1)=\pi(u)$ and $\pi(t_2)=\pi(u')$,
%% (resp.\,$\pi(t_1)=\pi(u)$),
we know that $(\pi(t_1),\pi(t_2)) \in \Abf_{s,F}^{\canmod}$.
This must be witnessed by some
sequence $w = u_0o_0u_1\cdots o_{g-1}u_go_g$ of states of $\alpha_\ell$ and
domain objects as in Definition~\ref{def:nnfa-semantics}, which has
$o_0=\pi(t_1)$, $u_0 = s_0$, $o_g=\pi(t_2)$ and $u_g\in F$.
We can assume without loss of generality that $g$ is minimal, i.e.
we cannot find a sequence $s'_0o'_0\cdots s'_{g'}o'_{g'}$
witnessing $(\pi(t_1),\pi(t_2) )\in \Abf_{s_0,F}^{\canmod}$ with $g' <
g$. This ensures that in $w$ there do not exist distinct $i,j$
with $u_i  = u_j$, $o_i = o_j$.
Now let $j_0=0$, $j_n=g$, and
$j_{1}< \ldots < j_{n-1}$ be all of the indices $1 \leq i < g$ such that
$o_{i}=\pi(y)$.
Then define the sequence of states $s_1,\ldots,s_n$ by taking
$s_i = u_{j_i}$ for $1 \leq i < n$, and $s_n = u_g$.
Note that the states $s_1,\ldots,s_{n-1}$ are all pairwise distinct.
% , and since
% none of them is final and $s_n$ is,  $s_1,\ldots,s_n$ are also pairwise distinct.
%
We can thus
choose this sequence of states in Step~4, and replace the atom
$\Abf_{s_0,F}(t_1,t_2)=\Abf_{s_0,F}(\sigma(u),\sigma(u'))$ with
the atoms:
$\Abf_{s_0,s_1}(\sigma(u),y)$, $\Abf_{s_1,s_2}(y,y),\ldots,$ $ \Abf_{s_{n-2},s_{n-1}}(y,y)$, $\Abf_{s_{n-1},s_n}(y,\sigma(u'))$.
Note that, by the way we selected the states $s_i$, we know that $(\pi(t_1),\pi(y))
\in \Abf_{s,s_1}^{\canmod}$,
$(\pi(y),\pi(y))
\in \Abf_{s_i,s_{i+1}}^{\canmod}$ for $1 \leq i < n-1$, and
$(\pi(y),\pi(t_2))
\in \Abf_{s_{n-1},s_{n}}^{\canmod}$.
That is, $\pi$ is a match for all of
these atoms.
Moreover, for each of the atoms $\Abf_{s_i,s_{i+1}}(y,y)$, % with $t=t'=y$,
the sequence
 witnessing $(\pi(y),\pi(y))
\in \Abf_{s_i,s_{i+1}}^{\canmod}$ is either entirely inside or entirely
outside $\canmod|_{\pi(y)}$.
%In the former case,

For an atom $\ptest{\Abf_{s_0,F}}(y)$, the decomposition
proceeds analogously.
Let $\alpha_\ell$ be the automaton in $\Abf$
that contains $s_0$ and $F$.
Then we know from above that
there is an atom
$\ptest{\Abf_{s_0,F}}(u)$
in the original query $q$
such that $\sigma(u)=y$.
Since $\pi$ is a match for $q$ with $\pi(y)=\pi(u)$,
we know that $\pi(y) \in \ptest{\Abf_{s_0,F}}^{\canmod}$,
and we can take
a sequence $w = u_0o_0u_1\cdots o_{g-1}u_go_g$ of states of $\alpha_\ell$ and
domain objects that witnesses $\pi(y) \in \ptest{\Abf_{s_0,F}}^{\canmod}$.
Note that we must have $u_0=s_0$, $o_0=\pi(y)$, and $u_g \in F$.
We can suppose without loss of generality that $g$ is minimal, and so there
are no indices $i <j$ such that $u_i=u_j$ and $o_i=o_j$.
We let $j_{1}< \ldots < j_{n-1}$ be all of the indices $1 \leq i < g$ such that
$o_{i}=\pi(y)$, and for Step~4 use the sequence of %% pairwise distinct
states $s_1,\ldots,s_{n}$ with
$s_i = u_{j_i}$ for $1 \leq i < n$, and $s_n = u_g$.
Similarly to above, we know that, by construction,
$(\pi(y),\pi(y)) \in \Abf_{s_0,s_1}^{\canmod}$,
$(\pi(y),\pi(y))
\in \Abf_{s_i,s_{i+1}}^{\canmod}$ for $1 \leq i < n-1$, and
$\pi(y) \in \ptest{\Abf_{s_{n-1},s_{n}}}^{\canmod}$.
We also know that, for each of these new atoms $\Abf_{s_i,s_{i+1}}(y,y)$ or
$\ptest{\Abf_{s_i,s_{i+1}}}(y)$,
the sequence %$s_0o_0s_1\cdots o_{k-1}s_ko_k$
as in Definition~\ref{def:nnfa-semantics} that witnesses $(\pi(y),\pi(y))
\in \Abf_{s_i,s_{i+1}}^{\canmod}$ or $\pi(y)
\in \ptest{\Abf_{s_i,s_{i+1}}}^{\canmod}$ is either
 entirely inside or entirely
outside $\canmod|_{\pi(y)}$.
%That is, either $o_i \in \dom[\canmod|_{\pi(y)}]$
%holds for every $o_i$, or $o_i \not\in \dom[\canmod|_{\pi(y)}]$
%holds for every $o_i$.

In Step~5, we will remove all atoms
$\Abf_{s_i,s_{i+1}}(y,y)$ or $\ptest{\Abf_{s_i,s_{i+1}}}(y)$
% $\Abf_{u,u'}(t,t')$ or $\ptest{\Abf_{u,u'}}(t)$
for which the witnessing sequence lies entirely inside $\canmod|_{\pi(y)}$,
possibly introducing some new atoms of the form $\ptest{\Abf_{u,F_k}}(y)$.
First consider an atom $\at = \Abf_{s_i,s_{i+1}}(y,y)$ %$\at = \Abf_{u,u'}(t,t')$.
such that the witnessing sequence $u_{j_i}o_{j_i} \ldots u_{j_{i+1}} o_{j_{i+1}}$
only involves objects from $\canmod|_{\pi(y)}$ (recall that $s_i=u_{j_i}$,
$s_{i+1}= u_{j_{i+1}}$, and $o_{j_i}=o_{j_{i+1}}=\pi(y)$).
Then there must exist
%  $(\pi(y),\pi(y)) \in \Abf_{s_i,s_{i+1}}^{\canmod}$
%is witnessed by %a sequence $s_0o_0s_1\cdots o_{k-1}s_ko_k$ as in
%Definition~\ref{def:nnfa-semantics}, where
%$o_i \in \dom[\canmod|_{\pi(y)}]$ for every $0 \leq i \leq k$.
%Since neither $\pi(t)$ nor $\pi(t')$ can be below $\pi(y)$ in $\canmod$,
%we must have $\pi(t) = \pi(t') = \pi(y)$.
%As $(\pi(y),\pi(y)) \in \INT[\canmod]{\Abf_{u,u'}}$,
%Then there is
a full
$(\pi(y),s_i,\pi(y),s_{i+1})$-run $(T,\ell)$ for $\Abf$ in $\canmod$
whose ``main path" has the sequence of labels $(\pi(y), s_i)=(o_{j_i}, u_{j_i}),
(o_{j_i +1}, u_{j_i+1}), \ldots, (o_{j_{i+1}}, u_{j_{i+1}}) =(\pi(y), s_{i+1})$.
%such that for every node label $(o,u)$ occurring between the root and the unique
%leaf node labelled $(\pi(y),s_{i+1})$ is such that $o$ belongs to $\dom[\canmod|_{\pi(y)}]$.
Take this $(T,\ell)$, and in every branch, look for the first occurrence of a node
$v_p$ labeled $(o',u')$ with $o' \not\in \dom[\canmod|_{\pi(y)}]$. Let $v_p'$ be
the parent of $v_p$.
Note that $\ell(v_p')$ must be of the form $(\pi(y),u)$.
% for some $u'' \in S_i$
%for some $\alpha_i$.
Now, let $(T',\ell)$ be the result of removing all subtrees rooted at some $v_p$.
Observe that $(T',\ell)$ is a partial $(\pi(y),s_i,\pi(y),s_{i+1})$-run for $\Abf$ in $\canmod|_{\pi(y)}$.
%Its root is labeled $(\pi(y),u)$, and it has a leaf labeled $(\pi(y),u')$.
Let $\pathreq$ be the set of states $u$ such that there is a leaf labeled
$(\pi(y),u)$ and $u$ is not a final state.
 Since $\pi(y) = p \cdot r_0 C$, by Fact \ref{canmod2}, %the construction of  canonical models,
 $\canmod|_{\pi(y)}$ is isomorphic to the canonical model of
 $\tuple{\Tmc,C(a)}$, hence there is a partial run for $\Abf$ in the latter interpretation
 that has the same features.
Therefore we know that $(C,s_i,s_{i+1},\pathreq) \in \loops$ and in Step 5,
we replace $\Abf_{s_i,s_{i+1}}(y,y)$ by $\{
\ptest{\Abf_{u,F_k}}(y)\,{\mid}\,u\,{\in}\,\pathreq\,{\cap}\,S_k  \}$.
Moreover, for each $u \in \pathreq$, the subtree of $T$ rooted at the node
$v_p'$ with $\ell(v_p') = (\pi(y),u)$ (which became a leaf of $T'$), together
with the labeling $\ell$, is a full
$(\pi(y),u, F_k)$-run for  $\Abf$ in \canmod, hence
$\pi(y) \in \ptest{\Abf_{u,F_k}}^{\canmod}$.
%
%As the child $v_p$ of $v_p'$
%has a label  $(o',s')$ with $o' \not\in \dom[\canmod|_{\pi(y)}]$,
%we know that
%in the sequence   $s_0o_0s_1\cdots o_{k-1}s_ko_k$ witnessing
%$\ptest{\Abf_{s,F_i}}$ has $o_0 = \pi(y)$, $o_1 = p$, $s_0 = s$, and $s_1 =
%s'$.

The proof is similar when we have an  atom $\ptest{\Abf_{s_i,s_{i+1}}}(t)$
such that the witnessing sequence $u_{j_i}o_{j_i} \ldots u_{j_{i+1}} o_{j_{i+1}}$
for $\pi(y) \in \ptest{\Abf_{s_i,s_{i+1}}}^{\canmod}$
only involves objects from $\canmod|_{\pi(y)}$.
Then there must exist a
full $(\pi(y),s_i,\{s_{i+1}\})$-run $(T,\ell)$ for  $\Abf$ in $\canmod$
whose sequence of labels along the main path is given by
$(\pi(y), s_i)=(o_{j_i}, u_{j_i}),
(o_{j_i +1}, u_{j_i+1}), \ldots, (o_{j_{i+1}}, u_{j_{i+1}}) =(o_{j_{i+1}}, s_{i+1})$.
%
%
%such that for every node label $(o,u)$ occurring between the root and the unique
%leaf node labelled $(o',s_{i+1})$ (for some $o'$) is such that
%$o$ belongs to $\dom[\canmod|_{\pi(y)}]$.
%
%
%is witnessed by a sequence $s_0o_0s_1\cdots o_{k-1}s_ko_k$ with   $o_i \in
%\dom[\canmod|_{\pi(y)}]$ for every $0 \leq i \leq k$,
%the proof is analogous.
From the full run $(T,\ell)$ in $\canmod$, we obtain a partial run that allows us to
conclude the existence of some $(C,s_i,\{s_{i+1}\},\pathreq) \in \floops$, and
we can thus replace $\at$ by
by atoms $\{
\ptest{\Abf_{u,F_k}}(y)\,{\mid}\,u\,{\in}\,\pathreq\,{\cap}\,S_k\}$.
For all of these atoms,  we have
$\pi(y) \in \ptest{\Abf_{u,F_k}}^{\canmod}$.
%and this is witnessed by a
% sequence   $s_0o_0s_1\cdots o_{k-1}s_ko_k$ with
%$\ptest{\Abf_{s,F_i}}$ has $o_0 = \pi(y)$, $o_1 = p$, $s_0 = s$, and $s_1 =
%s'$.

The final choices to be made occur in Step~6, where we must
choose a conjunction of concept names $D$, roles $r, r_{1}, r_{2} \in \rni$, and states
such that conditions (a)-(d) are satisfied.
We set $r=r_0$ (recall that $\pi(y)= p \cdot r_0\, C$) and let $D$ be the conjunction
of all concept names $A$ such that $p \in A^{\canmod}$.
The definition of canonical models,
together with our normal form for \elhibot TBoxes,
yields
%guarantees that
%by letting $D$ be the conjunction of
%there is some conjunction of concept names
%$D$ such that
%$p \in D^{\canmod}$ and
$\Tmc \models D \sqsubseteq \exists r.C$.
%We choose this $D$.
It remains to show that we can find $r_{1}, r_{2}$ and choices of states
such that conditions (a) -- (d) %%(b), (c), and (d)
are verified.
Observe that all atoms of the form
 $\Abf_{u,U}(y,t)$, $\Abf_{u,U}(t,y)$ and $\ptest{\Abf_{u,U}}(y)$
 in $q$ were obtained by a decomposition in Step~4 (this may include original
 atoms that were decomposed into only one atom), or were added in Step~5
 as a result of some chosen $\pathreq$.
%% for $\alpha_i$ the automaton containg $u$ and $U$,
%%In the former case,
If we have an atom $\Abf_{u,U}(t_1,t_2)$ with
$y\in\{t_1,t_2\}$, it must have been added in Step~4.
We have seen that $(\pi(t_1),\pi(t_2)) \in \Abf_{u,U}^{\canmod}$, and
since the atom was not removed in Step~5, the witnessing sequence,
which we will denote $u_0'o_0'\cdots o_{l-1}'u_l'o_l'$,
%$u_0o_0s_1\cdots o_{k-1}u_ko_k$ witnessing this
lies outside the tree
$\canmod|_{\pi(y)}$: it may start or end
at $\pi(y)$, but it contains no other domain element from $\canmod|_{\pi(y)}$.
We also know that $l \geq 1$ (else again we would have removed the atom in Step 5),
and so the sequence must pass the parent $p$ of
$\pi(y)$. More specifically, if $\sigma(t_1)=y$ and $u \in S_i$,
then $o_0'=\pi(y)$, $u_0' = u$,
$o_1'=p$, and and $u_1' =
v$ for some $v \in S_i$ such that $(u,r_1^-,v) \in \delta_i$
%% for $\alpha_i$ the automaton containg $u$ and $U$,
and $\Tmc \models r_0 \ISA r_1$.
Note that this $v$ is such that $(p,\pi(t_2)) \in \Int[\canmod]{{\Abf_{v,U}}}$.
If $\sigma(t_2)=y$ and $u \in S_i$, then  $o_{l-1}'=p$, $o_l'=\pi(y)$, $u_{l-1}' = v'$,
and $u_l
= v''$ for some $v',v'' \in S_i$ such that $(v',r_2,v'') \in \delta_i$, $v'' \in U$, and $\Tmc
\models r_0 \ISA r_2$.
In this case, we have $(\pi(t_1),p) \in
{\Abf_{u,v'}}^{\canmod}$,
and $(p,\pi(t_2)) \in
{\Abf_{v',v''}}^{\canmod}$.
If $\sigma(t_1)=\sigma(t_2)=y$, then we pass by $p$ both at the beginning and end
of the witnessing sequence, and by choosing states $v,v'$ as above, we obtain
$(p,p) \in
{\Abf_{v,v'}}^{\canmod}$.
Finally, if we have an atom of the form $\ptest{\Abf_{u,U}}(y)$, then
we have seen previously that we must have
%%, as we have seen,
$\pi(y) \in\ptest{\Abf_{u,U}}^{\canmod}$. Moreover, since the
atom was either added in Step 4, but not dropped in Step 5, or
was added in Step 5, it must be the case that
the witnessing sequence $u_0'o_0'\cdots o_{l-1}'u_l'o_l'$
for  $\pi(y) \in\ptest{\Abf_{u,U}}^{\canmod}$ is such that
$o_0' = \pi(y)$, $u_0' = u$, $o_1' = p$,  and $u_1' =
v$ for some $v \in S_i$ such that $(u,r_1^-,v) \in \delta_i$ and
and $\Tmc \models r_0 \ISA r_1$.
Note that this $v$ is such that $p \in \ptest{\Abf_{v,U}}^{\canmod}$.
By selecting $r = r_0$, $r_1$, $r_2$ and  $v$, $v'$ in this way,
conditions (a) -- (d) %%(b), (c), and (d)
are all verified.

Now let $q'$ be the query we obtain at the end of Step~8 when all non-deterministic choices
are made in the manner described above.
Note that  $\terms(q') \subseteq \terms(q)$.
We aim to find a match $\pi'$
for $q'$ which satisfies the conditions of the lemma.
Let  $\pi'$ be the mapping defined as follows:
\begin{itemize}
\item $\pi'(u)=\pi(u)$ for every $u \in \terms(q)$ with $\pi(u) \neq \pi(y)$
\item $\pi'(u)= p$ for every $u \in \terms(q)$ with $\pi(u)=\pi(y)$
\end{itemize}
Clearly,  $\pi'$ satisfies the conditions of the lemma. We only need to show
that $\pi'$ is a match.

To show that $\pi'$ is a match, first take some concept atom $B(u) \in q'$.
There are two possibilities. Either $B(u)$ appears in $q$
and $u \not \in \leaf$, or $B(u)$ was introduced in Step~7.
In the former case, we know that $\pi$ satisfies $B(u)$,
and since  $\pi'(u)=\pi(u)$ (since $u \not \in \leaf$), the same is true of $\pi'$.
In the latter case, we must have $u=y$ and $B \in  D$.
As $\pi'(u)=p$ and $D$ was chosen so that $p \in D^{\canmod}$,
 $\pi'$ satisfies $B(u)$.

Now consider some atom $\Abf_{s,F}(t,t') \in q'$ with  $y \not \in \{t,t'\}$.
Then $\Abf_{s,F}(t,t') \in q$.
As $\pi$ is a match for $q$ in $\Imc_{\Tmc,\Amc}$,
it must be the case that $(\pi(t),\pi(t')) \in \Abf_{s,F}^{\canmod}$.
Since $\pi'(t)=\pi(t)$ and $\pi'(t')=\pi(t')$, the same holds for $\pi'$,
and so the atom $\Abf_{s,F}(t,t')$ is satisfied by $\pi'$.
Similarly, for an atom $\ptest{\Abf_{s,F}}(t) \in q'$ with $t \neq y$,
we know that $\Abf_{s,F}(t,t') \in q$.
As $\pi$ is a match for $q$ in $\Imc_{\Tmc,\Amc}$,
it must be the case that $\pi(t) \in \ptest{\Abf_{s,F}}^{\canmod}$.
Since $\pi'(t)=\pi(t)$, the same holds for $\pi'$,
and so the atom $\ptest{\Abf_{s,F}}(t)$ is satisfied by $\pi'$.

Next consider some atom $\at \in q'$ of the form $\Abf_{u',U'}(t_1,t_2)$
or $\ptest{\Abf_{u',U'}}(t_1)$ with $y \in \{t_1,t_2\}$.
A straightforward examination of the procedure $\rewrite$
shows that there is an atom $\Abf_{s,F}(t',t'')$ or $\ptest{\Abf_{s,F}}(t')$
in $q$
which is replaced in Step~4 by the atoms
$\Abf_{s_{0},s_1}(\sigma(t'),y),$ $\Abf_{s_1,s_2}(y,y),$ $\ldots,
\Abf_{s_{n-2},s_{n-1}}(y,y),$ and either $ \Abf_{s_{n-1},s_n}(y,\sigma(t''))$
or $ \ptest{\Abf_{s_{n-1},s_n}}(y)$,
and one of the latter atoms is then replaced by the atom $\at$ in Step~7.
% $\Abf_{u'.U'}(t_1,t_2)$ replaces one of the latter atoms.
We distinguish four cases:\\[.15cm]
\noindent\textbf{Case 1}: $\at$ % = \Abf_{u',U'}(t_1,t_2)$
replaces $\Abf_{u,U}(y,t)$ with
$t \neq y$. Then $\at$ %$\Abf_{u',U'}(t_1,t_2)$
must have the form $\Abf_{v,U}(y,t)$,
where $v$ is the state which was chosen to ensure condition (b) in Step 6.
We recall that $v$ is such that
$(p,\pi(t)) \in \Abf_{v,U}^{\canmod}$.
Since $t \neq y$, we know that $t \not \in \leaf$, and so $\pi(t)=\pi'(t)$.
It follows that $(\pi'(y), \pi'(t))= (p,\pi'(t)) \in \Abf_{v,U}^{\canmod}$
and the atom $\at$ %$\Abf_{v,U}(y,t)$
is satisfied by $\pi'$. \\[.15cm]
\noindent\textbf{Case 2}: $\at$ %$\Abf_{u',U'}(t_1,t_2)$
replaces $\Abf_{u,U}(t,y)$ with $t \neq y$.
Then $\at$ %$\Abf_{u',U'}(t_1,t_2)$
 must have the form $\Abf_{u,v}(t,y)$,
where $v$ is the state which was used in condition 6(c).
We showed earlier when examining condition 6(c) that $(\pi(t),p) \in
\Abf_{u,v}^{\canmod}$. Using the fact that $\pi'(t)=\pi(t)$ and
$\pi'(y)=p$, we can infer that
$(\pi'(t),\pi'(y)) \in \Abf_{u,v}^{\canmod}$, so $\pi'$
satisfies the atom $\at$. %$\Abf_{u',U'}(t_1,t_2)$.
\\[.15cm]
\noindent\textbf{Case 3}: $\at$ %$\Abf_{u',U'}(t_1,t_2)$
 replaces $\Abf_{u,U}(y,y)$.
Then $\at$ %$\Abf_{u',U'}(t_1,t_2)$
 must have the form $\Abf_{v,v'}(y,y)$,
where $v$ is the state from 6(b) and $v'$ is the state from 6(c).
We have seen above that  $(p,p) \in
{\Abf_{v,v'}}^{\canmod}$. Since $\pi'(y) = p$,
$\pi'$ satisfies $\at$. % $\Abf_{v,v'}(y,y)$.
\\[.15cm]
\noindent\textbf{Case 4}: $\at$ %$\ptest{\Abf_{u',U'}}(t_1)$
replaces $\ptest{\Abf_{u,U}}(y)$.
Then $\at$ %$\ptest{\Abf_{u',U'}}(t_1)$
 must have the form $\ptest{\Abf_{v,U}}(y)$,
where $v$ is the state which was chosen to ensure condition (d) in Step 6.
We recall that $v$ is such that
$p \in \ptest{\Abf_{v,U}}^{\canmod}$.
As $\pi'(y)=p$, $\pi'$ satisfies $\at$. % $\ptest{\Abf_{v,U}}(y)$.
\\[.2cm]
\indent
As we have shown that every atom in $q'$ is satisfied by the mapping $\pi'$,
it follows that $\pi'$ is a match for $q'$ in $\canmod$, which completes the proof.
\end{proof}

\medskip

\def\maxlev{\mu}

\noindent\textbf{Proposition \ref{evalatom-props}}.
\emph{$\evalatom$ is a sound and complete procedure for \nq\ evaluation over satisfiable \elhibot\ KBs.
It can be implemented so as to run in non-deterministic logarithmic space (resp. polynomial time) in the size of the ABox % w.r.t.\ data complexity
%in \nlspace\ (resp. \ptime) w.r.t.\ data complexity
for \dlh\ (resp. \elhibot) KBs.}
%, and in \ptime\ w.r.t.\ data complexity in \elhibot\ KBs.
\begin{proof}
Consider an \nnfa $(\Abf,s_0,F_0)$ and
a satisfiable \elhibot\
KB $\Kmc= \langle \Tmc, \Amc \rangle$ in normal form.
% and
%$(a,b) \in \ainds(\Amc) \times \ainds(\Amc)$.
%
%We define the \emph{level} $\level(j)$ of %an automaton
%$\alpha_j\in \Abf$ as follows:
%$\level(j)=0$ if the transitions of $\alpha_j$ do not involve any symbol
%of the form $\ptest{k}$, % use only symbols from $\sigmaroles$,
%and otherwise $\level(j)= m+1$ where $m$ is the maximum value of
%$\level(k)$ among all % $k$ for which
%$\ptest{k}$ appearing in a transition of $\alpha_j$.
%In what follows, we use $\maxlev$ to denote
%the maximal level of any automaton $\alpha_j \in \Abf$.
We prove soundness and completeness of $\evalatom$
by induction on the level of the automaton in the input \nnfa.
By Lemma \ref{lift-canmod} and Fact \ref{run-fact}, it suffices to show the following: \\ % containing $s_0$.
%Specifically, we will establish the following claim:  \\

\noindent\textbf{Claim 1.} (Soundness and Completeness) %If $s_0 \in S_j$ and $\level(j)=i$, then:
\begin{enumerate}
\item There is a full $(a,s_0,b, s_f)$-run of $\Abf$ in $\canmod$ for some  $s_f \in F_0$
%$(a,b) \in (\Abf,s_0,F_0)^{\canmod}$
 if and only if there is an execution of \evalatom\ which returns \yes\
when given $(\Abf,s_0,F_0)$, $\tuple{\Tmc,\Amc}$, and $(a,b)$ as input, where $b \in \ainds(\Amc)$.
\item There is a full $(a,s_0,F_0)$%,o,s_f)$
-run of $\Abf$ in $\canmod$ %for some $o \in \Delta^{\canmod}$ and $s_f \in F_0$
%there exists $o$ with $(a, o) \in (\Abf, s_{0}, F_0)^{\canmod}$
if and only if
there is an execution of \evalatom\ which returns \yes\
when given $(\Abf,s_0,F_0)$, $\tuple{\Tmc,\Amc}$, and $(a,\anon)$ as input.
\end{enumerate}

\noindent\textbf{Base Case of Claim 1.} Suppose that $s_0$ belongs to an automaton
$\alpha_j \in \Abf$
with $\level(j)=0$.

To show the first direction of statement 1,
suppose that there is an execution of \evalatom\ which returns \yes\
when given $(\Abf,s_0,F_0)$, $\tuple{\Tmc,\Amc}$, and $(a,b)$ as input,
As $b \in \ainds(\Amc)$, we know that in Step 3(b), case (iii) does not apply,
and because of the way $\loops$ is defined, we know that if
$(C,s, s',\pathreq)\in \loops$, then the set $\pathreq$ must be empty.
It follows that when \evalatom\ is called on the \nnfa\ $(\Abf,s_0,F_0)$,
there will be no recursive calls made to \evalatom.
Let $(c_0,s_0) (c_1, s_1) \ldots (c_m,s_m)$ be the sequence of
elements in $\current$ during the execution of \evalatom.
Note that $c_0=a$ and since \evalatom\ returned \yes, we also have $c_m=b$ and $s_m \in F_0$.
It thus suffices to show the following: \smallskip \\
\noindent\textbf{Claim 2}. For every $0 \leq i \leq m$,
there exists a full $(c_i,s_i,c_m,s_m)$-run of $\Abf$ on $\canmod$.
\\[.15cm]
\emph{Proof of claim}. The proof is by induction on $i$.
The base case is when $i=m$, in which case we can use the
partial run having a single
node labelled $(c_m,s_m)$.
Now suppose that the claim holds for all $k < i \leq m$,
and consider the case when $i=k$. We know that
the pair $(c_{k+1}, s_{k+1})$ must satisfy either condition (i) or (ii) of Step 3(b).
If (i) holds, then there exists a transition
$(s_{k}, \sigma, s_{k+1}) \in \delta_j$ with $\sigma \in \sigmaroles$
such that $(c_{k},c_{k+1}) \in \sigma^{\canmod}$.
From the induction hypothesis, we can find a full $(c_{k+1}, s_{k+1}, c_m,s_m)$-run
$(T, \ell)$ of $\Abf$ on $\canmod$. Let $(T', \ell')$ be the labelled tree obtained
by creating a new node labelled $(c_k,s_k)$ and making it the parent of the root of
$(T,\ell)$. It is easily verified that $(T',\ell')$ is defines a full
$(c_{k}, s_{k}, c_m,s_m)$-run of $\Abf$ on $\canmod$.
Next consider the case in which condition (ii) holds for the pair $(c_{k+1}, s_{k+1})$.
Then $c_k=c_{k+1}$ and there exists a tuple $(C,s_k, s_{k+1},\emptyset)$
in $\loops$ such that $c_k \in C^{\canmod}$.
By Definition \ref{jumpdef}, the presence of $(C,s_k, s_{k+1},\emptyset)$ in $\loops$
means that there is a partial run $(T_C,\ell_C)$ of $\Abf$
in the canonical model %$\Imc$ for
of $\tuple{\Tmc, \{C(e)\}}$
that satisfies the following conditions:
\begin{itemize}
\item the root of $T_C$ is labelled $(e,s_k)$
\item there is a unique leaf node $v$ %of $T$
with $\ell_C(v)=(e,s_{k+1})$%labelled $(a,s_2)$
%\item for every leaf node $x$ labelled $(a,s)$: %we have
%% \in F_k$ for some $k$, or
%$s \in \pathreq$
%\item for every leaf node $v$ %of $T$
% with $\ell_C(v)=(o,s)\neq (e,s_{k+1})$, %that is labelled $(o,s)\neq (a,s_2)$,
% $s\in F_{j'}$ for some $j' > j$
\end{itemize}
Since $c_k \in C^{\canmod}$, it follows from Fact \ref{canmod2}
that there is a homomorphism $h$ from
the canonical model of $\tuple{\Tmc, \{C(e)\}}$ to $\canmod$ with $h(e)=c_k$.
Let $(T_C', \ell_C')$ be obtained by replacing every label $(o,s)$ in $T_C$
by $(h(o),s)$. Using the fact that $h$ is a homomorphism, one can show that
$(T_C', \ell_C')$ defines a partial run of $\Abf$
in $\canmod$ that satisfies the following conditions:
\begin{itemize}
\item the root of $T_C'$ is labelled $(c_k,s_k)$
\item there is a unique leaf node $v$ %of $T$
with $\ell_C'(v)=(c_k,s_{k+1})$%labelled $(a,s_2)$
%\item for every leaf node $x$ labelled $(a,s)$: %we have
%% \in F_k$ for some $k$, or
%%$s \in \pathreq$
%\item for every leaf node $v$ %of $T$
% with $\ell_C'(v)=(o,s)\neq (c_k,s_{k+1})$, %that is labelled $(o,s)\neq (a,s_2)$,
% $s\in F_{j'}$ for some $j' > j$
\end{itemize}
Now let $(T,\ell)$ be the labelled tree obtained from $(T_C', \ell_C')$ by
replacing the unique leaf node labelled $(c_k,s_{k+1})$ by the
tree $(T_{k+1}, \ell_{k+1})$.
It follows from the properties of the component runs that $(T, \ell)$ is
a full $(c_{k},s_{k},c_m,s_m)$-run  of $\Abf$ on $\canmod$.
(\emph{end proof of Claim 2})

%\item $v$ has a unique child $v'$ with $\ell(v')=(o',s')$, and there exists
%a transition
%$(s,\sigma,s') \in \delta_i$ such that $\sigma \in \sigmaroles$ and
%$(o,o') \in \sigma^{\Imc}$

The first direction of statement 2 is proved similarly.
We let $(c_0,s_0) (c_1, s_1) \ldots (c_m,s_m)$ be the sequence of
elements in $\current$ during the execution of \evalatom.
If $c_n \in \ainds(\Amc)$, then we can use the same argument as above
to construct a full $(a,s_0,b,s_m)$-run of $\Abf$ in $\canmod$.
Consider next the case in which $c_m = \anon$. Then for $(c_m,s_m)$
to be selected, the conditions in 3(b)(iii) must have satisfied,
and so there exists some tuple $(C,s_{m-1}, F_0, \emptyset) \in \floops$
such that $c_{m-1} \in C^{\canmod}$.
It follows that there is a partial run $(T,\ell)$ for $\Abf$
on the canonical model $\Jmc$ of $\Tmc$ and $\{C(d)\}$ which consists of a single path
and whose sequence $\omega$ of labels from root to leaf
begins with $(c_{m-1},s_{m-1})$ and ends by $(o,s_f)$, for some $o \in \Delta^\Jmc$ and $s_f \in F_0$.
As $c_{m-1} \in C^{\canmod}$, it follows from Fact \ref{canmod2}
that there is a homomorphism $h$ from $\Jmc$ to $\canmod|_{c_{m-1}}$
with $h(d)=c_{m-1}$.
Let $\omega'$ be the sequence obtained by replacing each label $(o',s')$ in $\omega$
by $(h(o'), s')$. We then construct a labelled tree
consisting of a single path and whose sequence of labels (from root to leaf)
is $(c_0,s_0) (c_1, s_1) \ldots (c_{m-2},s_{m-2}) \omega'$. It can be verified that
this labelled tree defines a full $(a,s_0,F_0)$%h(o),s_f)$
-run of $\Abf$ in $\canmod$.
%We have that $c_0=a$ and
%for every $0 \leq i < n-1$, there is a transition $(s_i, \sigma_i, s_{i+1})$
%with $\sigma \in \sigmaroles$ such that $(c_i, c_{i+1}) \in \sigma_i^{\canmod}$.
%Since $c_n = \anon$,

To prove the second direction of statement 1,
suppose that $(T,\ell)$ is a full $(a,s_0,b,s_f)$-run of $\Abf$ in $\canmod$,
where $s_f \in F_0$.
Note that because $\level(j)=0$, the tree $T$ consists of a single path.
Let $(o_0,s_0) \ldots (o_m, s_m)$ be the sequence of labels from root to leaf.
We have that %the root of $T$ is labelled $(a,s_0)$ and the unique
%leaf node is labelled $(b,s_f)$ for some $s_f \in S_j$, so we have
$o_0=a$, $o_m=b$, and $s_m = s_f$.
Moreover, we know that for each $i\in\{1,\ldots,m\}$, there is a
transition $(s_{i-1},\sigma_i,s_i)\in\delta_j$ such that %either
$\sigma_i\in\sigmaroles$ and $(o_{i-1},o_i)\in \sigma_i^{\canmod}$ (here we use the
fact that there are no transitions in $\alpha_j$ of type $\ptest{k}$).
We can suppose without loss of generality that there do not exist $0\leq  i, i'\leq m$
such that $i \neq i'$, $s_i=s_{i'}$, and $o_i=o_{i'}$ (indeed, this condition
can always be ensured by deleting a portion of the path).
Now let $i_1< ...< i_p$ be all of the indices $i$
such that $o_i \in \ainds(\Amc)$ (note that $i_1=0$ and $i_p=m$).
\smallskip \\
\noindent\textbf{Claim 3}.
For every $1 < l \leq p$, one of the following holds:
\begin{enumerate}%[--]
\item %$i_{l} = i_{l -1} + 1$,
%$\sigma{_{i_l}}\in\sigmaroles$ and
$(o_{i_{l -1}},o_{i_l})\in \sigma{_{i_l}}^{\canmod}$, or
%\item %$i_{l} = i_{l -1} + 1$,
%$\sigma{_{i_l}}=\ptest{k}$, $o_{i_l}= o_{i_{l-1}}$,  and there is $o''\in\Delta^{\canmod}$ such that
%  $(o_{i_l},o'')\in\Int{\Abf_{s_k,F_k}}$
\item $o_{i_l}= o_{i_{l-1}}$ and there exists a tuple $(C,s_{i_{l-1}}, s_{i_l}, \emptyset)\in \loops$ such that
$o_{i_{l-1}} \in C^{\canmod}$.
\end{enumerate}
\emph{Proof of claim}. If $i_{l} = i_{l -1} + 1$, then we immediately obtain $(o_{i_{l -1}},o_{i_l})\in \Int{\sigma{_{i_l}}}$.
If $i_{l} \neq i_{l -1} + 1$, then $o_{i_{l -1} + 1} \not \in \ainds(\Amc)$. We can infer from this that
all the objects $o_{i_{l -1} + 1}, \ldots, o_{i_{l } - 1}$
must be descendants of $o_{i_l}= o_{i_{l-1}}$ in $\canmod$.
Now let $C$ be the conjunction of all concept names $A$ such that $o_{i_{l-1}} \in A^{\canmod}$,
and let $\Jmc$ be the canonical model of $\Tmc$ and $\{C(d)\}$.
Then by Fact \ref{canmod2}, there must exist a
homomorphism $h$ from $\canmod$ to $\Jmc$ with $h(o_{i_{l-1}})=d$.
Consider the labelled tree $(T', \ell')$ obtained from $(T,\ell)$ by (1)
making the node labelled $(o_{i_{l-1}}, s_{i_{l-1}})$ the root and
the node labelled $(o_{i_l}, s_{i_l})$ the unique leaf, and
then (2) replacing every node label $(o',s')$ by $(h(o'),s')$.
Using the fact that $h$ is a homomorphism, one can show
that $(T',\ell')$ is a partial run for $\Abf$ in~$\Jmc$.
Moreover, since $o_{i_l}= o_{i_{l-1}}$ and $h(o_{i_{l-1}})=d$,
$(T', \ell')$ is a  $(d,s_{i_{l-1}}, d, s_{i_l})$-run of $\Abf$ on $\Jmc$.
Hence, by Definition \ref{jumpdef},
$(C,s_{i_{l-1}}, s_{i_l},\emptyset)\in \loops$.
(\emph{end proof of Claim 3})\smallskip\\
Now consider the sequence of pairs $(o_{i_1}, s_{i_1}) \ldots (o_{i_p},s_{i_p})$. Recall that
 $(o_{i_1}, s_{i_1})=(a,s_0)$, $(o_{i_p},s_{i_p})=(b,s_f)$, $s_f \in F_0$,
 and by construction $p \leq |\Amc| \times |S_j|$. %=\maxcount$.
 Therefore, using Claim 3, we can
show that the execution of $\evalatom$ which guesses this
sequence of elements in $\current$ during Step 3 will return \yes.

The proof of the second direction of statement 2 is broadly similar.
We let $(T,\ell)$ be a full $(a,s_0,F_0)$%o,s_f)$
-run of $\Abf$ in $\canmod$,
%where $s_f \in F_0=F_j$,
and we suppose w.l.o.g. that  $(T, \ell)$ is a shortest such run.
%depth-minimal run with these properties.
Let $(a,s_0)=(o_0,s_0) (o_1,s_1) \ldots (o_m, s_m)$
be the sequence of labels in $T$ from root to leaf. Note that we must have $s_m \in F_0$.
If $o_m \in \ainds(\Amc)$, then we can use
the same proof as above to show that there is a successful execution of \evalatom.
If $o_m \not \in \ainds(\Amc)$,
then let  $i_1 < \ldots < i_p$ be all those indices $i$
such that $o_i \in \ainds(\Amc)$. Note that $i_p < m$ since
$o_m \not \in \ainds(\Amc)$ and that $o_{i_{p} + 1}, \ldots, o_{m}$
must also belong to $\canmod|_{o_{i_p}}$.
We let $C$ be the conjunction of all concept names $A$ such that $o_{i_{p}} \in A^{\canmod}$,
and let $\Jmc$ be the canonical model of $\Tmc$ and $\{C(d)\}$.
Because of Fact \ref{canmod2}, we know that %the properties of canonical models,
there exists a homomorphism $h$ from $\canmod$ to $\Jmc$ with $h(o_{i_{p}})=d$.
Consider the labelled tree $(T', \ell')$ obtained from $(T,\ell)$ by (1)
making the node labelled $(o_{i_p}, s_{i_p})$ the root and
then (2) replacing every node label $(o',s')$ by $(h(o'),s')$.
Using the fact that $h$ is a homomorphism, one can show
that $(T',\ell')$ is a  $(d,s_{i_p}, F_0)$ % h(o_m), s_{m})$
-run of $\Abf$ on $\Jmc$.
%Since $s_m \in F_0$, we
We thus have $(C,s_{i_{p}}, F_0, \emptyset)\in \floops$ by Definition \ref{jumpdef}.
Using this fact together with Claim 3,
we can show that the execution of $\evalatom$ which guesses
the sequence $(o_{i_1}, s_{i_1}) \ldots (o_{i_p},s_{i_p})$
of elements in $\current$ will return \yes.
This completes the proof of the base case.
 \medskip \\
\noindent\textbf{Induction Step of Claim 1.}
Assume that Claim 1 holds
whenever the initial $s_0$ and final states $F_0$
come from automaton having level at most $g$, and let us now
consider the case in which $s_0$ and $F_0$
belong to $\alpha_j$ whose level is $g+1$.

To show the first direction of statement 1,
suppose that there is an execution of \evalatom\ which returns \yes\
when given $(\Abf,s_0,F_0)$, $\tuple{\Tmc,\Amc}$, and $(a,b)$ as input.
Let $(c_0,s_0) (c_1, s_1) \ldots (c_m,s_m)$ be the sequence of
elements placed in $\current$ during the top-level execution of \evalatom.
Note that we must have $c_0=a$ and since \evalatom\ returned \yes,
we must also have $c_m=b$ and $s_m \in F_0$.
%We show by induction
%Since $b \neq \anon$, we know that in Step 3(b), only cases (i) and (ii) are applicable.
It thus suffices to show the following: \smallskip \\
\noindent\textbf{Claim 4}. For every $0 \leq i \leq m$,
there exists a full $(c_i,s_i,c_m,s_m)$-run of $\Abf$ on $\canmod$.
\\[.15cm]
\emph{Proof of claim}. The proof is by induction on $i$.
The base case is when $i=m$, in which case it suffices to consider the
tree having a single
node labelled $(c_m,s_m)$.
Now suppose that the claim holds for all $k < i \leq m$,
and consider the case when $i=k$. We know that
the pair $(c_{k+1}, s_{k+1})$ satisfies either condition (i) or (ii) of Step 3(b).
If (i) holds, then we can construct
a full $(c_{k}, s_{k}, c_m,s_m)$-run
of $\Abf$ on $\canmod$, in the same manner as in the base case.
Next suppose that it is condition (ii) that holds for the pair $(c_{k+1}, s_{k+1})$.
Then $c_k=c_{k+1}$ and there exists a tuple $(C,s_k, s_{k+1},\pathreq)$
in $\loops$ such that $c_k \in C^{\canmod}$ and
for every state $u \in \pathreq\cap S_l$, %there exists $d' \in \ainds(\Amc) \cup \{\anon\}$ such that
\begin{equation}
\evalatom((\Abf,u,F_l), \Kmc, (c_k,\anon)) = \yes
\end{equation}
By Definition \ref{jumpdef}, the presence of $(C,s_k, s_{k+1},\pathreq)$ in $\loops$
means that there is a partial run $(T_C,\ell_C)$ of $\Abf$
in the canonical model %$\Imc$ for
of $\tuple{\Tmc, \{C(e)\}}$
that satisfies the following conditions:
\begin{itemize}
\item the root of $T_C$ is labelled $(e,s_k)$
\item there is a leaf node $v$ %of $T$
with $\ell_C(v)=(e,s_{k+1})$%labelled $(a,s_2)$
%\item for every leaf node $x$ labelled $(a,s)$: %we have
%% \in F_k$ for some $k$, or
%$s \in \pathreq$
\item for every leaf node $v$ %of $T$
 with $\ell_C(v)=(o,s)\neq (e,s_{k+1})$, %that is labelled $(o,s)\neq (a,s_2)$,
 either $s\in F_{j'}$ for some $j' > j$,
or $o=e$ and $s \in \pathreq$
\end{itemize}
Since $c_k \in C^{\canmod}$, it follows from Fact \ref{canmod2}
that there is a homomorphism $h$ from
the canonical model of $\tuple{\Tmc, \{C(e)\}}$ to $\canmod$ with $h(e)=c_k$.
Let $(T_C', \ell_C')$ be obtained by replacing every label $(o,s)$ in $T_C$
by $(h(o),s)$. Using the fact that $h$ is a homomorphism, one can show that
$(T_C', \ell_C')$ defines a partial run of $\Abf$
in $\canmod$ that satisfies the following conditions:
\begin{itemize}
\item the root of $T_C'$ is labelled $(c_k,s_k)$
\item there is a leaf node $v$ %of $T$
with $\ell_C'(v)=(c_k,s_{k+1})$%labelled $(a,s_2)$
%\item for every leaf node $x$ labelled $(a,s)$: %we have
%% \in F_k$ for some $k$, or
%$s \in \pathreq$
\item for every leaf node $v$ %of $T$
 with $\ell_C'(v)=(o,s)\neq (c_k,s_{k+1})$, %that is labelled $(o,s)\neq (a,s_2)$,
 either $s\in F_{j'}$ for some $j' > j$,
or $o=c_k$ and $s \in \pathreq$
\end{itemize}
Next, consider the states in $\pathreq$. By Equation 6 and the induction hypothesis for Claim 1,
we can find for every $u \in \pathreq\cap S_l$ a full $(c_k,u,F_l)$-run
$(T_u, \ell_u)$ of $\Abf$ on $\canmod$.
%for some $o \in \Delta^{\canmod}$ and $f_l \in F_l$.
Finally, by applying the induction hypothesis of the present claim, and utilizing the fact that $c_k=c_{k+1}$,
we can find a full $(c_{k},s_{k+1},c_m,s_m)$-run $(T_{k+1}, \ell_{k+1})$ of $\Abf$ on $\canmod$.
It then suffices to plug these different runs together. Specifically, we let
$(T,\ell)$ be the labelled tree obtained from $(T_C', \ell_C')$ by:
\begin{itemize}
\item replacing the (unique) leaf node labelled $(c_k,s_{k+1})$ by the tree $(T_{k+1}, \ell_{k+1})$
\item replacing each leaf node labelled $(c_k,u)$ with $u \in \pathreq \cap S_l$
by the tree $(T_u, \ell_u)$
\end{itemize}
It follows from the properties of the component runs that $(T, \ell)$ is
a full $(c_{k},s_{k},c_m,s_m)$-run  of $\Abf$ on $\canmod$.
(\emph{end proof of Claim 4})

The proof of the first direction of statement 1 is similar.
Indeed, if the final element $(c_m,s_m)$ in $\current$ is such  that $c_m \in \ainds(\Amc)$,
then by Claim 4, there is a full $(a,s_{0}, c_m,s_m)$-run  of $\Abf$ on $\canmod$,
which is also a full $(a,s_{0},F_0)$-run.
If $c_m = \anon$, then proof is generally similar, except that instead of Claim 4, %instead of Claim 3,
we use the following claim:\smallskip \\
\noindent\textbf{Claim 5}. For every $0 \leq i < m$,
there exists a full $(c_i,s_i,F_0)$-run of $\Abf$ on $\canmod$.
\\[.15cm]
\emph{Proof of claim}. The base case is $i=m-1$.
%This claim can be proven using essentially the same inductive argument
%as was employed for Claim 3, except for the base case. Indeed, when $i=m$, we must
As $c_m=\anon$, the condition in Step 3(b)(iii) must have been satisfied.
Thus, there exists a tuple $(C,s_{m-1},F_0,\pathreq)\in \floops$
such that $c _{m-1} \in C^{\canmod}$ and for every $u \in \pathreq \cap S_l$,
  % there exists $d' \in \ainds(\Amc) \cup \{\anon\}$ such that
$$\evalatom((\Abf,u,F_l), \Kmc, (c_{m-1},\anon)) = \yes$$
As $c _{m-1} \in C^{\canmod}$ and $(C,s_{m-1},F_0,\pathreq)\in \floops$,
we can use the same reasoning as in the proof of Claim 4
to obtain a partial run $(T_C, \ell_C)$ of $\Abf$
in $\canmod$ that satisfies the following conditions:
\begin{itemize}
\item the root of $T_C$ is labelled $(c_{m-1},s_{m-1})$
\item there is a leaf node $v$ %of $T$
with $\ell_C(v)=(o,s_f)$ with $s_f \in F_0$%labelled $(a,s_2)$
%\item for every leaf node $x$ labelled $(a,s)$: %we have
%% \in F_k$ for some $k$, or
%$s \in \pathreq$
\item for every leaf node $v$ %of $T$
 with $\ell_C(v)=(o,s)$ with $s \not \in F_0$, %\neq (c_k,s_{k+1})$, %that is labelled $(o,s)\neq (a,s_2)$,
either $s\in F_{j'}$ for some $j' > j$,
or $o=c_{m-1}$ and $s \in \pathreq$
\end{itemize}
We can also find, for each $u \in \pathreq\cap S_l$, a full $(c_k,u, F_l)$-run
$(T_u, \ell_u)$ of $\Abf$ on $\canmod$.
By replacing each leaf node in $(T_C, \ell_C)$ that is labelled $(c_{m-1},u)$ with
$u \in \pathreq$ by the tree
$(T_u, \ell_u)$, we obtain
a full $(c_{m-1},s_{m-1}, F_0)$-run of $\Abf$ on $\canmod$.
This establishes the base case.
The induction step then proceeds exactly as
in the proof of Claim 4. (\emph{end proof of Claim 5})\\

To show the second direction of statement 1, let us suppose that
$(T, \ell)$ is a full $(a,s_0,b, s_f)$-run of $\Abf$ in $\canmod$, where $s_f \in F_0$.
Let $v_0,Ê\ldots, v_m$ be the sequence of nodes in $T$ that begins with the root node
labelled $(a,s_0)$ and ends with the unique leaf node labelled $(b,s_f)$.
We will use $(o_l, s_l)$ for the label of node $n_l$.
%$(a,s_0)=(o_0,s_0) \ldots (o_m, s_m)=(b, s_f)$ be the sequence of labels
%from root to leaf that occur
%along the path from the root to the unique leaf node labelled $(b, s_f)$.
%We can thus find a sequence
%$s_0o_0s_1\cdots o_{m-1}s_mo_m$ such that
%$o_0=a$, $o_m=b$, $s_m\in F_0$, and for each $i\in\{1,\ldots,m\}$, there is a
%transition $(s_{i-1},\sigma_i,s_i)\in\delta_j$ such that either
%\begin{itemize}[--]
%\item $\sigma_i\in\sigmaroles$ and $(o_{i-1},o_i)\in \sigma_i^{\canmod}$, or
%\item  $\sigma_i=\ptest{k}$, %with $\alpha_k=(\sigmaroles,S',s',\delta',F')$,
%  $o_{i-1}=o_i$, and there is $d\in\Delta^{\canmod}$ such that
%  $(o_i,d)\in (\Abf, t_k,F_k)^{\canmod}$ with $t_k$ the initial state of  $\alpha_k$.
%  \end{itemize}
We can assume without loss of generality that $(T, \ell)$ is minimal in
the sense that there do not exist distinct positions $0\leq  l, l' \leq m$
such that $n_l$ and $n_{l'}$ has the same label. %$s_l=s_{l'}$, and $o_{l}=o_{l'}$.
Let $i_1< ...< i_p$ be all of the indices $i$
such that $o_i \in \ainds(\Amc)$. Note that $i_1=0$ and $i_p=m$ since
$i_0=a\in \ainds(\Amc)$ and $i_m=b \in \ainds(\Amc)$.
\smallskip \\
\noindent\textbf{Claim 6}.
For every $1 < l \leq p$, one of the following holds:
\begin{enumerate}%[--]
\item %$i_{l} = i_{l -1} + 1$ and
there is some $(s_{i_{l -1}},\sigma,s_{i_l}) \in \delta_j$ with $\sigma \in \sigmaroles$
such that $(o_{i_{l -1}},o_{i_l})\in \sigma{_{i_l}}^{\canmod}$
%\item %$i_{l} = i_{l -1} + 1$,
%$o_{i_l}= o_{i_{l-1}}$, $(s_{i_{l -1}},\ptest{k},s_{i_l})) \in \delta_j$, and
%%with $u$ the initial state of $\alpha_k$
%there is a final $(o_{i_l}, u)$-run of $\Abf$ in $\canmod$, with $u$ the initial state of $\alpha_k$
%$\sigma{_{i_l}}=\ptest{k}$, $o_{i_l}= o_{i_{l-1}}$,  and there is $o''\in\Delta^{\canmod}$ such that
%  $(o_{i_l},o'')\in\Int{\Abf_{s_k,F_k}}$
\item $o_{i_l}= o_{i_{l-1}}$ and there exists a tuple $(C,s_{i_{l-1}}, s_{i_l}, \pathreq)\in \loops$ such that
$o_{i_l-1} \in C^{\canmod}$ and for every $u \in \pathreq \cap S_{k}$,
$$\evalatom((\Abf,u,F_k), \Kmc, (o_{i_{l-1}},\anon)) = \yes$$
\end{enumerate}
\emph{Proof of claim}. It follows from Definition \ref{rundef} that there are three possibilities:
  \setlist{leftmargin=*,labelindent=.15cm}
\begin{itemize}
\item[(a)] the node $n_{i_{l-1}}$ has $n_{i_l}$ as its unique child
\item[(b)] the node $n_{i_{l-1}}$ has $n_{i_l}$ as one of its two children
\item[(c)] the node $n_{i_l}$ is a descendant, but not a child, of $n_{i_{l-1}}$
%\item % the node labelled $(o_{i_l}, s_{i_l})$ has a unique child labelled $(o_{i_l+1}, s_{i_l+1})$
%$i_{l} = i_{l -1} + 1$ and the node $n_{i_l}$ %labelled $(o_{i_l}, s_{i_l})$
%has $n_{i$ unique child
%\item $i_{l} = i_{l -1} + 1$ and the node labelled $(o_{i_l}, s_{i_l})$ has two children
%\item $i_{l} \neq i_{l -1} + 1$ and the node labelled $(o_{i_l}, s_{i_l})$ has a unique child
\end{itemize}
First consider case (a). Then by Definition \ref{rundef}, there must
exist a transition $(s_{i_{l-1}},\sigma,s_{i_l}) \in \delta_j$ such that $\sigma \in \sigmaroles$ and
$(o_{i_{l-1}},o_{i_l}) \in \sigma^{\canmod}$, so the first statement of the claim is satisfied.
If (b) holds, then it follows from Definition \ref{rundef} that $(s_{i_{l-1}},\ptest{k},s_{i_l}) \in \delta_j$,
$o_{i_l}= o_{i_{l-1}}$, and the other child of $n_{i_{l-1}}$ has label $(o_{i_{l-1}}, u)$,
with $u$ the initial state of $\alpha_k$. Since $(s_{i_{l-1}},\ptest{k},s_{i_l}) \in \delta_j$,
we have $(\top,s_{i_{l-1}}, s_{i_l}, \{u\})\in \loops$.
The labelled subtree of $(T,\ell)$ rooted at
$(o_{i_{l-1}}, u)$ is a full $(o_{i_{l-1}}, u, F_k)$-run of $\Abf$ in $\canmod$, and since
$k > j$, we have $\level(k) \leq g$.
The induction hypothesis for Claim 1 is thus applicable
and yields
$$\evalatom((\Abf,u,F_k), \Kmc, (o_{i_{l-1}},\anon)) = \yes$$
We have thus shown that the second statement of the claim holds.
Finally, let us consider case (c). Since $n_{i_l}$ is not a child of $n_{i_{l-1}}$,
we must have $o_{i_{l -1} + 1} \not \in \ainds(\Amc)$. We can infer from this that
all the objects $o_{i_{l -1} + 1}, \ldots, o_{i_{l } - 1}$
must be descendants of $o_{i_{l-1}}$ in $\canmod$, and that $o_{i_{l-1}}=o_{i_l}$.
Consider the partial run $(T',\ell')$ obtained from $(T, \ell)$ by
\begin{enumerate}
\item making $n_{i_{l-1}}$ the new root node
\item for every descendant $v$ of $n_{i_{l-1}}$ such that $\ell(v)=(o_{i_{l-1}},s)$ and such that
there is no $v'$ (different from $v$ and $n_{i_{l-1}}$) with $\ell(v')=(o_{i_{l-1}},s')$ and occurring along the path from $n_{i_{l-1}}$ to $v$,
make $v$ a leaf node by dropping all children of $v$
\end{enumerate}
Note that since $(T, \ell)$ is a full run, there are three types of leaf nodes in $(T',\ell')$:
\begin{itemize}
\item the node $n_{i_l}$ (on the main path) with label $(o_{i_{l-1}}, s_{i_l})$
\item nodes with labels of the form $(o_{i_{l-1}},s)$
\item nodes with labels of the form $(o,s_f)$ with $s_f \in F_k$ for some $k > j$
\end{itemize}
Moreover, by construction, every label $(o,s)$ appearing in $(T',\ell')$ is such that $o \in \canmod|_{o_{i_{l-1}}}$.
Now let $C$ be the conjunction of all concept names $A$ such that $o_{i_{l-1}} \in A^{\canmod}$,
and let $\Jmc$ be the canonical model of $\Tmc$ and $\{C(d)\}$.
Then it follows from Fact \ref{canmod2} there exists a
homomorphism $h$ from $\canmod$ to $\Jmc$ with $h(o_{i_{l-1}})=d$.
Consider the labelled tree $(T'', \ell'')$ obtained from $(T',\ell')$ by replacing
every node label $(o,s)$ by $(h(o),s)$.
Using the fact that $h$ is a homomorphism with $h(o_{i_{l-1}})=d$ and the
above description of the leaf nodes in $(T',\ell')$,
one can show that $(T'', \ell'')$
is a partial run of $\Abf$ on $\Jmc$ that satisfies:
\begin{itemize}
\item the root of $T''$ is labelled $(d, s_{i_{l-1}})$
\item there is a leaf node labelled $(d, s_{i_{l}})$
\item for every leaf node $v$ with $\ell(v) = (o, s) \neq (d, s_{i_{l}})$, either
$s \in F_k$ for some $k>j$, or $o=d$
\end{itemize}
%Let $\Gamma$ contain all those states $s$ such that there is a leaf node $(d,s)$ with $s \neq s_{i_{l}})$.
Thus, by Definition \ref{jumpdef}, the tuple $(C, s_{i_{l-1}}, s_{i_{l}},\pathreq)$ belongs to $\loops$,
if we let $\pathreq$ contain all those states $u$ such that there is a leaf node $(d,u)$ with $u \neq s_{i_{l}}$.
To complete the proof, consider some $u \in \pathreq \cap S_{k}$. Then there is a leaf node of $(T'', \ell'')$
with label $(d,u)$, hence a leaf node of $(T', \ell')$ with label $(o_{i_{l-1}},u)$,
which also appears (perhaps not as a leaf) in the original run
$(T,\ell)$. Moreover, since $(T,\ell)$ is a full run, by taking the subtree of $(T,\ell)$ rooted at this node,
we obtain a full $(o_{i_{l-1}},u)$-run of $\Abf$ in $\canmod$. As $u$ belongs to $\alpha_k$, with $k > j$,
the induction hypothesis for Claim 1 is applicable, and yields
$$\evalatom((\Abf,u,F_k), \Kmc, (o_{i_{l-1}},\anon)) = \yes$$
We have thus shown that the second statement of the claim holds.
(\emph{end proof of Claim 6})\\[.15cm]
It follows from Claim 6 and the fact that $p \leq |\Amc| \times |S_j|$
that there is an execution of $\evalatom$ which guesses
the sequence $(o_{i_1}, s_{i_1}) \ldots (o_{i_p},s_{i_p})$
of elements in $\current$ that returns \yes.

The second direction of statement 2 can be proven similarly.
We suppose that
$(T, \ell)$ is a full $(a,s_0,F_0)$-run of $\Abf$ in $\canmod$, and let
$v_0,Ê\ldots, v_m$ be the sequence of nodes in $T$ that begins with the root node
labelled $(a,s_0)$ and ends with the unique leaf node labelled $(o,s)$ with $s \in S_j$.
We use $(o_l, s_l)$ for the label of node $n_l$. Note that since $(T, \ell)$ is a full
$(a,s_0,F_0)$-run, we have $s_m \in F_0$.
We define the sequence $i_1 < \ldots < i_p$
and make the same minimality assumption on $(T, \ell)$ to
ensure that $p \leq |\Amc| \times |S_j|$.
If $o_m \in \ainds(\Amc)$, we can follow the above proof for statement 1 exactly.
If $o_m \not \in \ainds(\Amc)$, then we must additionally show that:
\smallskip \\
\noindent\textbf{Claim 7}. There is a tuple $(C,s_{i_p}, F_0,\pathreq)\in\floops$ such that
$o_{i_p} \in C^{\canmod}$ and for every $u \in \pathreq \cap S_{k}$,
$$\evalatom((\Abf,u,F_k), \Kmc, (o_{i_{l-1}},\anon)) = \yes$$
\emph{Proof of claim}.
Let $(T',\ell')$ be the partial run obtained from the run $(T, \ell)$ by
\begin{enumerate}
\item making $n_{i_{p}}$ the new root node
\item for every descendant $v$ of $n_{i_{p}}$ such that $\ell(v)=(o_{i_{p}},s)$ and such that
there is no $v'$ (different from $v$ and $n_{i_{p}}$) with $\ell(v')=(o_{i_{p}},s')$ and occurring along the path from $n_{i_{p}}$ to $v$, make $v$ a leaf node by dropping all of its children
\end{enumerate}
Note that since $(T, \ell)$ is a full run, there are three types of leaf nodes in $(T',\ell')$:
\begin{itemize}
\item the node $n_{m}$ (on the main path) with label $(o_{m}, s_{m})$
\item nodes with labels of the form $(o_{i_{p}},s)$
\item nodes with labels of the form $(o,s_f)$ with $s_f \in F_k$ for some $k > j$
\end{itemize}
Moreover, by construction, every label $(o,s)$ appearing in $(T',\ell')$ is such that $o \in \canmod|_{o_{i_{p}}}$.
Now let $C$ be the conjunction of all concept names $A$ such that $o_{i_{p}} \in A^{\canmod}$,
and let $\Jmc$ be the canonical model of $\Tmc$ and $\{C(d)\}$.
By Fact \ref{canmod2}, there exists a
homomorphism $h$ from $\canmod$ to $\Jmc$ with $h(o_{i_{p}})=d$.
Consider the labelled tree $(T'', \ell'')$ obtained from $(T',\ell')$ by replacing
every node label $(o,s)$ by $(h(o),s)$.
Using the fact that $h$ is a homomorphism with $h(o_{i_{p}})=d$ and the
above description of the leaf nodes in $(T',\ell')$,
one can show that $(T'', \ell'')$
is a partial run of $\Abf$ on $\Jmc$ that satisfies:
\begin{itemize}
\item the root of $T''$ is labelled $(d, s_{i_{p}})$
\item there is a leaf node labelled $(d, s_{m})$
\item for every leaf node $v$ with $\ell(v) = (o, s) \neq (d, s_m)$, either
$s \in F_k$ for some $k>j$, or $o=d$
\end{itemize}
%Let $\Gamma$ contain all those states $s$ such that there is a leaf node $(d,s)$ with $s \neq s_{i_{l}})$.
Then since $s_m \in F_0$, by Definition \ref{jumpdef}, the tuple $(C, s_{i_{p}},F_0,\pathreq)$ belongs to $\floops$,
for the $\pathreq$ containing exactly those states $u$ such that there is a leaf node $(d,u)$ with $u \neq s_m$.
To complete the proof, consider some $u \in \pathreq \cap S_{k}$. Then there is a leaf node of $(T'', \ell'')$
with label $(d,u)$, and hence a leaf node of $(T', \ell')$ with label $(o_{i_{p}},u)$.
By the construction of $(T', \ell')$, there is a node labelled $(o_{i_{p}},u)$ in the original run
$(T,\ell)$, and since $(T,\ell)$ is a full run,
there is a full $(o_{i_{p}},u, F_k)$-run of $\Abf$ in $\canmod$.
Applying the induction hypothesis for Claim 1, we obtain
$$\evalatom((\Abf,u,F_k), \Kmc, (o_{i_p},\anon)) = \yes$$
(\emph{end proof of Claim 7})\\[.15cm]
Using Claims 6 and 7 and the fact that $p\leq |\Amc| \times |S_j|$,
one can show that there is an execution of $\evalatom$ which guesses
the sequence $(o_{i_1}, s_{i_1}) \ldots (o_{i_p},s_{i_p}) (\anon, s_m)$
of elements in $\current$ that returns \yes. (\emph{end proof of Claim 1})
\bigskip

We now show the second part of the proposition,
namely that \evalatom\ runs in non-deterministic logarithmic space in the size
of $\Amc$. First observe that when the input \nnfa\ is $(\Abf, s_0, F_0)$,
the depth of the recursion cannot exceed $\level(s_0)$. The latter number
is bounded above by the number of automata in $\Abf$, and hence is independent
of the size of $\Amc$. Next we note that for each call of \evalatom, there are only three
pieces of information
that must be stored and whose size depends on $|\Amc|$:
the individual $c$ in $\current$, the ``next" individual $d$,
and the value of the counter $\counter$. Each of these pieces of information
can be stored using logarithmic space in $|\Amc|$.
We then remark that testing whether a tuple belongs to $\loops$ or $\floops$
can be done in constant time in $|\Amc|$, since it is ABox-independent.
The only checks that involve $\Amc$ are testing whether $(c,d) \in \sigma^{\canmod}$ in Step 3(b)(i)
and testing whether $c \in C^{\canmod}$ in Steps 3(b)(ii) and 3(b)(iii).
If $\Tmc$ is formulated in \elhibot, then both checks can be done in polynomial time in $|\Amc|$,
and for \dlh, this can be improved to non-deterministic logarithmic space in $|\Amc|$.
This follows from the known \ptime\ (resp.\ \nlspace) data complexity of instance checking in these logics~\cite{HuMS05,CDLLR07}.
%LATER: ADD REFS
Putting this all together, we have that \evalatom\ runs in polynomial time in $|\Amc|$ for \elhibot\ KBs,
and in non-deterministic logarithmic space in the size
of $\Amc$ for \dlh\ KBs.
\end{proof}

%%% Local Variables:
%%% mode: latex
%%% TeX-master: "kr-nested"
%%% save-place: t
%%% End:

\end{document}